\newtheorem{thm}{Theorem}
\newtheorem{lem}[thm]{Lemma}
\newtheorem{prop}[thm]{Proposition}
\newtheorem{cor}[thm]{Corollary}
\newcommand{\abs}[1]{\left| #1\right|}
\newcommand{\norms}[1]{\| #1 \|} 
\newcommand{\ind}{\mathbbm{1}}
\newcommand{\pr}{\mathbb{P}}
\newcommand{\R}{\mathbb{R}}
\newcommand{\E}{\mathbb{E}}
\newcommand{\floor}[1]{\left\lfloor #1 \right\rfloor}
\newcommand{\Var}{\mathrm{Var}}
\newcommand{\argmin}[1]{\underset{#1}{\operatorname{arg}\operatorname{min}}\;}
\newcommand{\sgn}{\mathrm{sgn}}
\newcommand{\eqdist}{\stackrel{d}{=}}
\newcommand{\inprob}{\stackrel{p}{\to}}
\newcommand{\vertiii}[1]{{\left\vert\kern-0.25ex\left\vert\kern-0.25ex\left\vert #1 
    \right\vert\kern-0.25ex\right\vert\kern-0.25ex\right\vert}}
\newcommand{\mb}{\mathbf}
\newcommand{\mbb}{\boldsymbol}
\def\ci{\perp\!\!\!\perp}
\title{Goodness of fit tests for high-dimensional linear models}
\author{Rajen D. Shah\thanks{Supported in part by the
    Forschungsintitut f\"ur Mathematik (FIM) at ETH Z\"urich.} \\ 
University of Cambridge
\and Peter B\"uhlmann \\ 
ETH Z\"urich}
\begin{document}
\maketitle
\begin{cbunit}

\begin{abstract}
In this work we propose a framework for constructing goodness of fit tests
in both low and high-dimensional linear models. We advocate applying
regression methods to the scaled residuals following either an ordinary
least squares or Lasso fit to the data, and using some proxy for prediction
error as the final test statistic. We call this family  Residual Prediction (RP) tests. We
show that simulation can be used to obtain the critical values for such
tests in the low-dimensional setting, and demonstrate using both
theoretical results and  extensive numerical studies that 
some form of the parametric bootstrap
can do the same when the high-dimensional linear 
model is under consideration. 
We show that RP tests can be used to test for significance of groups or
individual variables as special cases, and here they compare favourably
with state of the art methods, but we also argue that they can be
designed to test for as diverse model 
misspecifications as heteroscedasticity and nonlinearity. 
\end{abstract}

\section{Introduction}
High-dimensional data, where the number of variables may greatly exceed the number of observations, has become increasingly more prevalent across a variety of disciplines. While such data pose many challenges to statisticians, we now have a variety of methods for fitting models to high-dimensional data, many of which are based on the Lasso \citep{tibshirani96regression}; see
\citet{buhlmann2011statistics} for a review of some of the developments.

More recently, huge strides have been made in quantifying uncertainty about parameter estimates.  For the important special case of the high-dimensional linear model, frequentist $p$-values for individual parameters or groups of parameters can now be obtained through an array of different techniques \citep{wasserman2009high, meinshausen09pvalues, pb13, zhangzhang14, covtest14, optimalconf14, jamo13b, Meinshausen2014, Ning2014, voorman2014inference, Zhou2015}---see \citet{Dezeure2014} for an overview of some of these methods.
Subsampling techniques such as Stability Selection \citep{meinshausen2008ss} and its variant Complementary Pairs Stability Selection (CPSS) \citep{shah2013} can also be used to select important variables whilst preserving error control in a wider variety of settings.

Despite these advances, something still lacking from the practitioner's
toolbox is a corresponding set of diagnostic checks to help assess the
validity of, for example, the high-dimensional linear model. For instance,
there are no well-established methods for detecting heteroscedasticity in
high-dimensional linear models, or whether a nonlinear model may be more
appropriate.

In this paper, we introduce an approach for creating diagnostic measures or
goodness of fit tests that are sensitive to different sorts of departures
from the `standard' high-dimensional linear model.  
As the measures are
derived from examining the residuals following e.g.\ a Lasso fit to the
data, we use the name Residual Prediction (RP) tests. To the best of
  our knowledge, it is the first methodology for deriving confirmatory
  statistical conclusions, in terms of $p$-values, to test for a broad range
  of deviations from a high-dimensional linear model. In
Section~\ref{sec:contrib} we give a brief overview of the idea, but first
we discuss
what we mean by goodness of fit in a high-dimensional setting.

\subsection{Model misspecification in high-dimensional linear models}
Consider the Gaussian linear model
\begin{equation} \label{eq:lin_mod}
 \mb y = \mb X \mbb\beta + \sigma\mbb\varepsilon,
\end{equation}
where $\mb y \in \R^n$ is a response vector, $\mb X \in \R^{n \times p}$ is the
fixed design matrix, $\mbb\beta \in \R^p$ is the unknown vector of
coefficients, $\mbb\varepsilon \sim \mathcal{N}_n(\mb 0, \mb I)$ is a vector of
uncorrelated Gaussian errors, and $\sigma^2 >0$ is the variance of the
noise. In the low-dimensional situation where $p<n$, we may speak of
\eqref{eq:lin_mod} being misspecified such that $\E(\mb y)\neq \mb
X\mbb\beta$. When $\mb X$ has full row rank however, any vector in $\R^n$
can be expressed as $\mb X\mbb\beta$ for some $\mbb\beta \in \R^p$, leaving
in general no room for nonlinear alternatives. When
  restricting to sparse linear models specified by
  \eqref{eq:lin_mod}, the situation is different though and
  misspecification can happen \citep{pbvdg15}; we will take a sparse
  Gaussian linear model as our null hypothesis (see also Theorems~\ref{thm:maximise_pval} 
and \ref{thm:single_pval_null}).
We discuss an approach to handle a relaxation of the Gaussian error assumption in Section~\ref{sec:non-Gaussian} of the supplementary material.

When there is no good sparse approximation to $\mb X\mbb\beta$,  
a high-dimensional linear model may not be an appropriate model for the
data-generating process: a sparse nonlinear model might be more
interpretable and may generalise better, for example.
Moreover, the Lasso
and other sparse estimation procedures may have poor performance, undermining the
various different high-dimensional inference methods mentioned above that
make use of them. 
Our proposed RP tests investigate whether the Lasso is
a good estimator of the signal.


%

\subsection{Overview of Residual Prediction (RP) tests and main contributions}
\label{sec:contrib}
Let $\hat{\mb R}$ be the residuals following a Lasso fit to $\mb X$. If $\mb X \mbb\beta$ is such that it can be well-estimated by the Lasso, then the residuals should contain very little signal and instead should behave roughly like the noise term $\sigma\mbb\varepsilon$.
On the other hand, if the signal is such that the Lasso performs poorly and instead a nonlinear model were more appropriate, for example, some of the (nonlinear) signal should be present in the residuals, as the Lasso would be incapable of fitting to it.

Now if we use a regression procedure that is well-suited to predicting the
nonlinear signal (an example may be Random Forest
\citep{breiman01random}), applying this to the residuals and computing the
resulting mean residual sum of squares (RSS) or any other proxy for
prediction error will give us a test statistic that under the null hypothesis
of a sparse linear model
we expect to be relatively large, and under the alternative we expect to be
relatively small. 
Different regression procedures applied to the residuals can be used to
test for different sorts of departures from the Gaussian linear model. 
Thus RP tests consist of three components.
\begin{enumerate}
\item An initial procedure that regresses $\mb y$ on $\mb X$ to give a set
  of residuals; this is typically the Lasso if $p>n$ or could be ordinary least squares if $\mb X$ is low-dimensional.
\item A \emph{residual prediction method} (RP method) that is suited to predicting the particular signal expected in the residuals under the alternative(s) under consideration.
\item Some measure of the predictive capability of the RP method. Typically this would be the residual sum of squares (RSS), but in certain situations a cross-validated estimate of prediction error may be more appropriate, for example.
\end{enumerate}
We will refer to the composition of a residual prediction method and an estimator of prediction error as a \emph{residual prediction function}.
This must be a (measurable) function $f$ of the residuals and all available
predictors, $p_{\text{all}}$ of them in total, to the reals $f : \R^n \times \R^{n \times p_{\text{all}}} \to \R$.
For example, if rather than testing for nonlinearity, we wanted to
ascertain whether any additional variables 
were significant
after accounting for those in $\mb X$, we could consider the mean RSS after
regressing the residuals on a matrix of predictors containing both $\mb X$
and the additional
variables, using the Lasso. 
If the residuals can be predicted better than one would expect under the
null hypothesis with a model as in \eqref{eq:lin_mod}, this provides
evidence against the null.

Clearly in order to use RP tests to perform formal hypothesis tests, one needs knowledge of the distribution of the test statistic under the null, in order to calculate $p$-values. Closed form expressions are difficult if not impossible to come by, particularly when the residual prediction method is something as intractable as Random Forest.

In this work, we show that under certain conditions, the parametric bootstrap \citep{Efron1994} can be used, with some modifications, to calibrate \emph{any} RP test. Thus the RP method can be as exotic as needed in order to detect the particular departure from the null hypothesis that is of interest, and there are no restrictions requiring it to be a smooth function of the data, for example. In order to obtain such a general result, the conditions are necessarily strong; nevertheless, we demonstrate empirically that for a variety of interesting RP tests, bootstrap calibration tends to be rather accurate even when the conditions cannot be expected to be met.
As well as providing a way of calibrating RP tests, we also introduce a framework for combining several RP tests in order to have power against a diverse set of alternatives. 

Although formally the null hypothesis tested by our approach is that of the sparse Gaussian linear model \eqref{eq:lin_mod}, an RP test geared towards nonlinearity is unlikely to reject purely due to non-Gaussianity of the errors, and so the effective null hypothesis typically allows for more general error distributions. By using the nonparametric bootstrap rather than the parametric bootstrap, we can allow for non-Gaussian error distributions more explicitly. We discuss this approach in Section~\ref{sec:non-Gaussian} of the supplementary material, where we see that type I error is very well controlled even in settings with $t_3$ and exponential errors.

Some work related to ours here is that of \citet{Chatterjee2010},
\citet{Chatterjee2011}, \citet{Camponovo2014} and \citet{Zhou2014} who study the use of the
bootstrap with the (adaptive) Lasso for constructing confidence sets
for the regression coefficients. Work that is more closely aligned to
our aim of creating diagnostic measures for high-dimensional models is that
of \citet{Nan2014}, though their approach is specifically geared towards variable selection and they do not provide theoretical guarantees within a hypothesis testing framework as we do.

\subsection{Organisation of the paper}
Simulating the residuals under the null is particularly simple when rather than using the Lasso residuals, ordinary least squares residuals are used.
We study this simple situation in Section~\ref{sec:OLS} not only to help motivate our approach in the high-dimensional setting, but also to present what we believe is a useful method in its own right. In Section~\ref{sec:Combine} we explain how several RP tests can be aggregated into a single test that combines the powers of each of the tests.
In Section~\ref{sec:Lasso} we describe the use of RP tests in the high-dimensional setting, and prove the
validity of a calibration procedure based on the parametric bootstrap. We give several applications of RP tests in Section~\ref{sec:apps} along with the results of extensive numerical experiments, and conclude with a discussion in Section~\ref{sec:discuss}. The supplementary material contains further discussion of the power of RP tests; 
a proposal for how to test null hypotheses of the form \eqref{eq:lin_mod} allowing for more general error distributions; additional numerical results; a short comment concerning the interpretation of $p$-values; and all of the proofs. The \texttt{R} \citep{R} package \texttt{RPtests} provides an implementation of the methodology.

\section{Ordinary least squares RP tests} \label{sec:OLS}
A simple but nevertheless important version of RP tests uses residuals from ordinary least squares (OLS) in the first stage. For this, we require $p < n$ in the set-up of \eqref{eq:lin_mod}. Let $\mb P$ denote the orthogonal projection on to the column space of $\mb X$. Then under the null hypothesis that the model \eqref{eq:lin_mod} is correct, the scaled residuals $\hat{\mb R}$ are
\[
 \hat{\mb R} :=\frac{(\mb I - \mb P) \mb y}{\| (\mb I - \mb P) \mb y\|_2} = \frac{(\mb I - \mb P) \mbb \varepsilon}{\| (\mb I - \mb P) \mbb\varepsilon\|_2},
\]
and so their distribution does not depend on any unknown parameters: they form an ancillary statistic.
Note that the scaling of the residuals eliminates the dependence on $\sigma^2$.
It is thus simple to simulate from the distribution of any function of the scaled residuals, and this allows critical values to be calculated for tests using any RP method.

We note that by using OLS applied to a larger set of variables as the RP method, and the RSS from the resulting fit as the estimate of prediction error, the overall test is equivalent to a partial $F$-test for the significance of the additional group of variables. To see this let us write $\mb Z \in \R^{n \times q}$ for an additional group of variables. 
Let $\mb P_{\text{all}}$ be the orthogonal projection on to all available predictors, that is projection on to $\mb X_{\text{all}} = (\mb X, \, \mb Z) \in \R^{n \times p_{\text{all}}}$, where $p_{\text{all}}=p+q$.
When the RP method is OLS regression of the scaled residuals on to $\mb
X_{\text{all}}$, the resulting RSS is 
\begin{align*}
 \|(\mb I - \mb P_{\text{all}})\hat{\mb R}\|_2^2 = \frac{\|(\mb I - \mb P_{\text{all}})(\mb I - \mb P) \mb y\|_2 ^2}{\| (\mb I - \mb P) \mb y\|_2 ^2} &= \frac{\|(\mb I - \mb P_{\text{all}}) \mb y\|_2 ^2}{\| (\mb I - \mb P) \mb y\|_2 ^2},
\end{align*}
since $(\mb I - \mb P_{\text{all}})\mb P=\mb 0$. We reject for small values of the quantity above, or equivalently large values of
\[
 \frac{\| (\mb P_{\text{all}} - \mb P) \mb y\|_2 ^2}{\|(\mb I - \mb P_{\text{all}}) \mb y\|_2 ^2} \times \frac{n-p_{\text{all}}}{p_{\text{all}}-p},
\]
which is precisely the $F$-statistic for testing the hypothesis in question.

An alternative way to arrive at the $F$-test is to first residualise $\mb Z$ with respect to $\mb X$ and define new variables $\tilde{\mb Z} = (\mb I - \mb P)\mb Z$. 
Let us write $\tilde{\mb P}$ for the orthogonal projection on to $\tilde{\mb Z}$.
Now if our RP method is OLS regression of the scaled residuals on to
$\tilde{\mb Z}$, we may write our RSS as 
\begin{align*}
 \|(\mb I - \tilde{\mb P})\hat{\mb R}\|_2^2 = \frac{\|(\mb I - \tilde{\mb P})(\mb I - \mb P) \mb y\|_2 ^2}{\| (\mb I - \mb P) \mb y\|_2 ^2} &= \frac{\|\{\mb I - (\mb P + \tilde{\mb P})\} \mb y\|_2 ^2}{\| (\mb I - \mb P) \mb y\|_2 ^2},
\end{align*}
the final equality following from the fact that the column spaces of $\mb X$ and $\tilde{\mb Z}$ and hence $\mb P$ and $\tilde{\mb P}$ are orthogonal. It is easy to see that $\mb P + \tilde{\mb P} = \mb P_{\text{all}}$, and so we arrive at the $F$-test once more.

We can use each of the two versions of the $F$-test above as starting
points for generalisation, where rather than using OLS as a prediction
method, we use other RP methods more tailored to specific alternatives of
interest. The distribution of the output of an RP method under the null hypothesis
  of a linear model can be computed via simulation as follows. For a given $B>1$ we generate independent $n$-vectors with i.i.d.\ standard normal components $\mbb\zeta^{(1)}, \ldots, \mbb\zeta^{(B)}$. From these we form scaled residuals
\begin{equation} \label{eq:OLS_RP}
\hat{\mb R}^{(b)} = \frac{(\mb I-\mb P)\mbb\zeta^{(b)}}{\|(\mb I-\mb P)\mbb\zeta^{(b)}\|_2}.
\end{equation}
Let $\mb X_{\text{all}}$ be the full matrix of predictors.
Writing the original scaled residuals as $\hat{\mb R}$ we apply our chosen RP function $f$ to all of the scaled residuals to obtain a $p$-value
\begin{equation} \label{eq:MC_pval}
\frac{1}{B+1}\bigg(1 + \sum_{b=1}^B \ind_{\{f(\hat{\mb R}^{(b)}, \mb X_{\text{all}}) \leq f(\hat{\mb R}, \mb X_{\text{all}})\}}\bigg).
\end{equation}
See also Section \ref{sec:Lasso} for the extension to the case using
Lasso residuals.

Even in situations where the usual $F$-test may seem the natural choice, an
RP test with a carefully chosen RP method can often be more powerful
against alternatives of interest. This is particularly true when we
aggregate the results of various different RP methods to gain power over a
diverse set of alternatives, as we describe in the next section. 

\section{Aggregating RP tests} \label{sec:Combine}
In many situations, we would like to try a variety of different RP methods, in order to have power against various different alternatives. A key example is when an RP method involves a tuning parameter such as the Lasso. Each different value of the tuning parameter effectively gives a different RP method. One could also aim to create a generic omnibus test to test for, say, nonlinearity, heteroscedasticity and correlation between the errors, simultaneously.

To motivate our approach for combining the results of multiple RP tests, we consider the famous diabetes dataset of \citet{efron04least}. This has $p=10$ predictors measured for $n=442$ diabetes patients and includes a response that is a quantitative measure of disease progression one year after baseline. Given the null hypothesis of a Gaussian linear model, we wish to test for the presence of interactions and quadratic effects. In order to have power against alternatives composed of sparse coefficients for these effects, we consider as RP methods the Lasso applied to quadratic effects residualised with respect to the linear terms via OLS. We regress the OLS scaled residuals onto the transformed quadratic effects using the Lasso with tuning parameters on a grid of $\lambda$ values, giving a family of RP tests.

We plot the residual sums of squares from the Lasso fits to the scaled
residuals in Figure~\ref{fig:multi_lambda}, as a function of $\lambda$. 
Also shown are the residual sums of squares from Lasso fits to
scaled residuals simulated under the null hypothesis of a Gaussian
linear model, as simulation under the null hypothesis is the general
  principle which we use for deriving $p$-values.

\begin{figure}[!h]
\centering
\makebox{\includegraphics[scale=0.4]{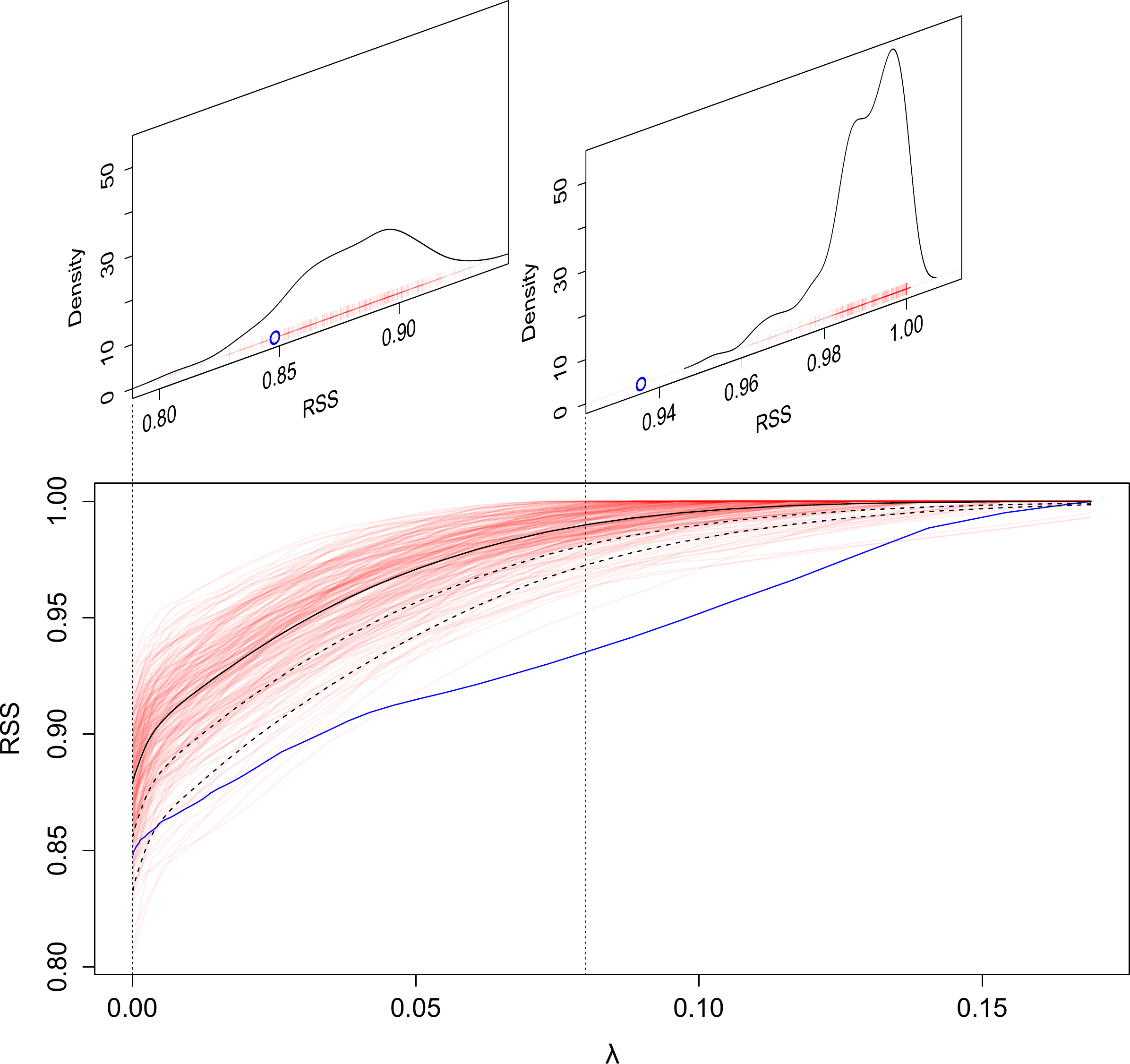}}
\caption{\label{fig:multi_lambda}Bottom plot: the residual sums of squares from Lasso fits to the original scaled residuals (blue) and simulated residuals (red), as well as the mean of the latter (black) and the mean displaced by one and two standard deviations (black, dotted). Top plots: kernel density plots for the simulated residual sums of squares at $\lambda=0$ (left) and $\lambda=0.8$ (right) with the original residual sums of squares in blue.}
\end{figure}

At the point $\lambda=0$, the observed RSS is not drastically smaller than
those of the simulated residuals, as the top left density plot
shows. Indeed, were we to calculate a $p$-value just based on the
$\lambda=0$ results corresponding to the $F$-test, we would obtain roughly 10\%.
The output at $\lambda=0.8$,
however, does provide compelling evidence against the null hypothesis, as
the top right density plot shows. Here the observed RSS is far to the left
of the support of the simulated residual sums of squares.
In order to create a $p$-value for the presence of interactions based on all of the output, we need a measure of how `extreme' the entire blue curve is, with respect to the red curves, in terms of carrying evidence against the null. Forming a $p$-value based on such a test statistic is straightforward, as we now explain.

Suppose we have residual 
prediction functions $f_l$, $l=1, \ldots, L$ (in our example these would be the RSS when using the Lasso with tuning parameter $\lambda_l$) and their evaluations on the true scaled residuals $\hat{\mb R}^{(0)} := \hat{\mb R}$ and simulated scaled residuals $\{\hat{\mb R}^{(b)}\}_{b=1}^B$. Writing $f_l^{(b)}=f_l(\hat{\mb R}^{(b)}, \mb X_{\text{all}})$, let $\mb f^{(b)}= \{f_l^{(b)}\}_{l=1}^L$ be the curve or vector of RP function evaluations at the $b$th scaled residuals, and denote by $\mb f^{(-b)} = \{\mb f^{(b')}\}_{b'\neq b}$ the entire collection of curves, potentially including the curve for the true scaled residuals $\mb R^{(0)}$, but excluding the $b$th curve. Let
\begin{align*}
\tilde{Q}:\; & \R^L \times \R^{L \times B}  \to \R \\
& (\mb f^{(b)}, \mb f^{(-b)}) \mapsto \tilde{Q}(\mb f^{(b)}, \mb f^{(-b)})
\end{align*}
be any measure of how extreme the curve $\mb f^{(b)}$ is compared to the rest of the curves $\mb f^{(-b)}$ (larger values indicating more extreme). Here $\tilde{Q}$ can be any function such that $\tilde{Q}_b := \tilde{Q}(\mb f^{(b)}, \mb f^{(-b)})$ does not depend on the particular ordering of the curves in $\mb f^{(-b)}$; we will give a concrete example below. We can use the $\{\tilde{Q}_b\}_{b\neq 0}$ to calibrate our test statistic $\tilde{Q}_0$  as detailed in the following proposition.
\begin{prop} \label{prop:Combine}
Suppose the simulated scaled residuals are constructed as in \eqref{eq:OLS_RP}.
Setting
\begin{equation*}
 Q=\frac{1}{B+1} \bigg(1+\sum_{b=1}^B \ind_{\{\tilde{Q}_b \geq \tilde{Q}_0\}}\bigg),
\end{equation*}
we have that under the null hypothesis \eqref{eq:lin_mod}, $\pr(Q\leq x) \leq x$ for all $x \in [0,1]$, so $Q$ constitutes a valid $p$-value.
\end{prop}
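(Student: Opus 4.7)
The core idea is to reduce the statement to the standard fact that a rank-based Monte Carlo $p$-value is valid whenever the underlying test statistics are exchangeable.

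\textbf{Step 1: Distributional equivalence of the scaled residuals.} Under \eqref{eq:lin_mod}, $(\mathbf I-\mathbf P)\mb y=\sigma(\mathbf I-\mathbf P)\mbb\varepsilon$, so
\[
\hat{\mathbf R}=\hat{\mathbf R}^{(0)}=\frac{(\mathbf I-\mathbf P)\mbb\varepsilon}{\|(\mathbf I-\mathbf P)\mbb\varepsilon\|_2},
\]
which has exactly the same construction as each $\hat{\mathbf R}^{(b)}$ in \eqref{eq:OLS_RP} with $\mbb\varepsilon$ in place of $\mbb\zeta^{(b)}$. Since $\mbb\varepsilon,\mbb\zeta^{(1)},\ldots,\mbb\zeta^{(B)}$ are i.i.d.\ standard normal, the collection $\hat{\mathbf R}^{(0)},\ldots,\hat{\mathbf R}^{(B)}$ is i.i.d., and hence so is $\mathbf f^{(0)},\ldots,\mathbf f^{(B)}$ after applying the deterministic map $b\mapsto(f_1(\hat{\mathbf R}^{(b)},\mathbf X_{\text{all}}),\ldots,f_L(\hat{\mathbf R}^{(b)},\mathbf X_{\text{all}}))$.

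\textbf{Step 2: Exchangeability of the $\tilde Q_b$.} I would show that $(\tilde Q_0,\ldots,\tilde Q_B)$ is an exchangeable random vector. Fix an arbitrary permutation $\pi$ of $\{0,1,\ldots,B\}$. Because $\tilde Q$ is assumed to be invariant under permutations of the entries of its second argument, one has
\[
\tilde Q\bigl(\mathbf f^{(\pi(b))},\{\mathbf f^{(\pi(b'))}\}_{b'\ne b}\bigr)=\tilde Q\bigl(\mathbf f^{(\pi(b))},\mathbf f^{(-\pi(b))}\bigr)=\tilde Q_{\pi(b)}.
\]
Since $(\mathbf f^{(0)},\ldots,\mathbf f^{(B)})$ is i.i.d., it is in particular exchangeable, so $(\mathbf f^{(\pi(0))},\ldots,\mathbf f^{(\pi(B))})\stackrel{d}{=}(\mathbf f^{(0)},\ldots,\mathbf f^{(B)})$, and applying the displayed identity yields $(\tilde Q_{\pi(0)},\ldots,\tilde Q_{\pi(B)})\stackrel{d}{=}(\tilde Q_0,\ldots,\tilde Q_B)$.

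\textbf{Step 3: The standard rank argument.} Define $R_0=1+\sum_{b=1}^B\ind_{\{\tilde Q_b\ge \tilde Q_0\}}$, so $Q=R_0/(B+1)$. By exchangeability of $(\tilde Q_0,\ldots,\tilde Q_B)$, for any $k\in\{1,\ldots,B+1\}$ the event that $\tilde Q_0$ has rank $k$ when the $\tilde Q_b$ are sorted in decreasing order has the same probability as the analogous event with $\tilde Q_0$ replaced by any $\tilde Q_j$. Handling ties by breaking them in the direction that makes $Q$ larger (which is what the $\ge$ in the indicator does), one obtains $\pr(R_0\le m)\le m/(B+1)$ for every integer $m\in\{1,\ldots,B+1\}$. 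For arbitrary $x\in[0,1]$, this gives $\pr(Q\le x)=\pr(R_0\le \lfloor(B+1)x\rfloor)\le\lfloor(B+1)x\rfloor/(B+1)\le x$.

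The only genuinely non-mechanical step is Step 2, where the permutation-invariance assumption on $\tilde Q$ must be used carefully to push the exchangeability of the $\mathbf f^{(b)}$ through to the $\tilde Q_b$. Steps 1 and 3 are standard once exchangeability is in hand, the mild nuisance in Step 3 being the conservative handling of ties.
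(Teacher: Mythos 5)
Your proof is correct and follows essentially the same route as the paper, which simply notes that $\{\tilde{Q}_b\}_{b=0}^B$ is exchangeable under the null and appeals to standard Monte Carlo testing results; your Steps 1--2 make the exchangeability precise and your Step 3 is the same counting argument the paper spells out inside the proof of Theorem~\ref{thm:maximise_pval} (at most $\floor{x(B+1)}$ indices $b$ can satisfy $\sum_{b'\neq b}\ind_{\{\tilde Q_{b'}\geq \tilde Q_b\}}\leq x(B+1)-1$).
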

The result above is a straightforward consequence of the fact that under the null $\{\tilde{Q}_b\}_{b=0}^B$ form an exchangeable sequence, and standard results on Monte Carlo testing (see \citet{Davison1997} Ch.\ 4 for example). Under an alternative, we expect $\tilde{Q}_0$ to be smaller and $\tilde{Q}_b$ for $b \geq 1$ to be larger on average, than under the null. Thus this approach will have more power than directly comparing $\tilde{Q}_0$ to a sample from its null distribution.

We recommend constructing $\tilde{Q}$ as follows. Let $\hat{\mu}^{(-b)}_l$ and $\hat{\sigma}^{(-b)}_l$ respectively be the empirical mean and standard deviation of $\{f_l^{(b')}\}_{b' \neq b}$. We then set
\begin{equation} \label{eq:tildeQ}
\tilde{Q}_b = \max_{l}\,\{ (\hat{\mu}^{(-b)}_l - f_l^{(b)}) / \hat{\sigma}^{(-b)}_l\},
\end{equation}
the number of standard deviations by which the $b$th curve lies below the rest of the curves, maximised along the curve.
The intuition is that were $f_l^{(1)}$ to have a Gaussian distribution under the null for each $l$, $\Phi\{(f_l^{(0)} - \hat{\mu}^{(-0)}_l) / \hat{\sigma}^{(-0)}\}$ would be an approximate $p$-value based on the $l$th RP function, whence $\Phi(\tilde{Q}_0)$ would be the minimum of these $p$-values. Though it would be impossible to match the power of the most powerful test for the alternative in question (perhaps that corresponding to $\lambda=0.8$ in our diabetes example) among the $L$ tests considered, one would hope to come close. We stress however that this choice of $\tilde{Q}$ \eqref{eq:tildeQ} yields valid $p$-values regardless of the distribution of $f_l^{(1)}$ under the null.

Using this approach with a grid of $L=100$ $\lambda$ values,
we obtain a $p$-value of under 1\% for the diabetes example. As discussed in Section~\ref{sec:contrib}, this low $p$-value is unlikely to be due to a deviation from Gaussian errors, and indeed when we take our simulated errors $\mbb{\zeta}^{(b)}$ to be resamples from the vector of residuals (see Section~\ref{sec:non-Gaussian} of the supplementary material), we also obtain a $p$-value under 1\%; clear evidence that a model including only main effects is inappropriate for the data.
Further simulations demonstrating the power of this approach are presented in Section~\ref{sec:apps}.

\section{Lasso RP tests} \label{sec:Lasso}
When the null hypothesis is itself high-dimensional, we can use
Lasso residuals in the first stage of the RP testing
procedure. 
Although unlike scaled OLS residuals, scaled Lasso residuals are not ancillary, we will see that under certain conditions, the distribution of scaled Lasso residuals are not wholly sensitive to the parameters $\mbb\beta$ and $\sigma$ in \eqref{eq:lin_mod}.

Let us write $\hat{\mb R}_{\lambda}(\mbb\beta, \sigma\mbb\varepsilon)$ for
the scaled Lasso residuals when the tuning parameter is $\lambda$ (in square-root parametrisation, see below):
\begin{align}
\hat{\mbb{\beta}}_\lambda (\mbb\beta, \sigma\mbb\varepsilon) &\in \argmin{\mb b \in \R^p} \{\|\mb X (\mbb\beta - \mb b) + \sigma \mbb\varepsilon\|_2/\sqrt{n} + \lambda\|\mb b\|_1\} \label{eq:sqrt_lasso}\\
\hat{\mb R}_{\lambda}(\mbb\beta, \sigma\mbb\varepsilon) &= \frac{\mb X \{\mbb\beta - \hat{\mbb\beta}(\mbb\beta, \sigma\mbb\varepsilon) \}+ \sigma \mbb\varepsilon}{\|\mb X \{\mbb\beta - \hat{\mbb\beta}(\mbb\beta, \sigma\mbb\varepsilon) \}+ \sigma \mbb\varepsilon\|_2} \notag.
\end{align} 
Note that under \eqref{eq:lin_mod} $\hat{\mbb{\beta}}_\lambda (\mbb\beta,
  \sigma\mbb\varepsilon) \in \argmin{\mb b \in \R^p} \{\|\mb y - \mb X
  \mb b\|_2/\sqrt{n} + \lambda \|\mb b\|_1\}$ and $\hat{\mb R}_{\lambda}(\mbb\beta,
  \sigma\mbb\varepsilon) = \{\mb y - \mb X \hat{\mbb\beta}(\mbb\beta,
    \sigma\mbb\varepsilon)\}/\|\mb y- \mb X\hat{\mbb\beta}(\mbb\beta,
    \sigma\mbb\varepsilon)\|_2.$ 
Sometimes we will omit the first argument of $\hat{\mbb\beta}$ for convenience in which case it will always be the true parameter value under the null, $\mbb\beta$.
Here we are using the Lasso in the square-root parametrisation \citep{Belloni2011, Sun2012} rather than the conventional version where the term in the objective assessing the model fit would be $\|\mb X (\mbb\beta - \mb b) + \sigma \mbb\varepsilon\|_2^2$. We note that the two versions of the Lasso have identical solution paths but these will simply be parametrised differently. For this reason, we will simply refer to \eqref{eq:sqrt_lasso} as the Lasso solution.
Note that while the Lasso solution may potentially be non-unique, the residuals are always uniquely defined as the fitted values from a Lasso fit are unique (see \citet{Tibshirani2013}, for example).
Throughout we will assume that the columns of $\mb X$ have been
scaled to have $\ell_2$-norm $\sqrt{n}$.

We set out our proposal for calibrating RP tests based on Lasso residuals using the parametric bootstrap in Algorithm~\ref{alg:Lasso1} below.
\begin{algorithm}
\caption{Lasso RP tests}
\label{alg:Lasso1}
\begin{enumerate}
\item Let $\check{\mbb\beta}$ be an estimate of $\mbb\beta$, typically a Lasso estimate selected by cross-validation.
\item Set $\check{\sigma}=\|\mb y - \mb X \check{\mbb\beta}\|_2/\sqrt{n}$.
\item Form $B$ scaled simulated residuals $\{\hat{\mb
    R}_\lambda(\check{\mbb\beta}, \check{\sigma}
  \mbb\zeta^{(b)})\}_{b=1}^B$ where the $\mbb\zeta^{(b)}$ are i.i.d.\ draws
  from $\mathcal{N}_n(\mb 0, \mb I)$, and $\lambda$ chosen according to the proposal of \citet{Sun2013}.

\item Based on the scaled simulated residuals $\{\hat{\mb
    R}_\lambda(\check{\mbb\beta}, \check{\sigma}
  \mbb\zeta^{(b)})\}_{b=1}^B$, compute a $p$-value \eqref{eq:MC_pval} or use these to form an aggregated $p$-value as described in Section~\ref{sec:Combine}.
\end{enumerate}
\end{algorithm}
In the following section we aim to justify the use of the parametric bootstrap from a theoretical perspective and also discuss the particular choices $\check{\mbb\beta}, \check{\sigma}$ and $\lambda$ used above.

\subsection{Justification of Lasso RP tests}
Given $\mb b \in \R^p$ and a set $A \subseteq \{1, \ldots, p\}$, let $\mb b_A$ be the subvector of $\mb b$ with components consisting of those indexed by $A$. Also for a matrix $\mb M$, let $\mb M_A$ be the submatrix of $\mb M$ containing those columns indexed by $A$, and let $\mb M_k=\mb M_{\{k\}}$, the $k$th column.
The following result shows that
if $\sgn(\check{\mbb\beta}) = \sgn(\mbb\beta)$, with the sign function understood as being applied componentwise,
we have partial ancillarity of the scaled residuals. In the following we let $S = \{j:\beta_j \neq 0\}$ be the support set of $\mbb\beta$.
 \begin{thm} \label{thm:lambda_fixed}
  Suppose $\check{\mbb\beta}$ is such that $\sgn(\check{\mbb\beta}) =
  \sgn(\mbb\beta)$. For $t \in [0, 1)$ and $\lambda>0$, consider the
    deterministic set
  \[
  \Lambda_{\lambda, t} = \{\mbb\zeta \in \R^n : \sgn\big(\hat{\mbb\beta}_{\lambda, S}(\mbb\beta, \sigma\mbb\zeta)\big) = \sgn(\mbb\beta_S) \text{ and } \min_{j \in S} \hat{\beta}_j(\mbb\beta, \sigma\mbb\zeta) / \beta_j > t \}.
  \]
  Then we have that for all $\mbb \zeta \in \Lambda_{\lambda, t}$,
  $\hat{\mb R}_\lambda(\mbb \beta, \sigma\mbb \zeta) =\hat{\mb R}_\lambda(\check{\mbb\beta}, \check{\sigma}\mbb \zeta)$
 provided 
$0 < \check{\sigma}/\sigma < \min_{j \in S} \check{\beta}_j / \{(1-t) \beta_j\}$.
 \end{thm}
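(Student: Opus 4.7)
The plan is to exhibit an explicit Lasso solution $\tilde{\mbb\beta}$ for the problem with data $(\mb X, \mb X\check{\mbb\beta} + \check\sigma\mbb\zeta)$, built as an affine transformation of $\hat{\mbb\beta} := \hat{\mbb\beta}_\lambda(\mbb\beta, \sigma\mbb\zeta)$, chosen so that its unscaled residual is a positive scalar multiple of $\mb X(\mbb\beta - \hat{\mbb\beta}) + \sigma\mbb\zeta$. Because Lasso residuals are always unique (as noted just before the theorem), this will immediately yield $\hat{\mb R}_\lambda(\check{\mbb\beta}, \check\sigma\mbb\zeta) = \hat{\mb R}_\lambda(\mbb\beta, \sigma\mbb\zeta)$. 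Setting $c := \check\sigma/\sigma > 0$, the natural candidate is
\[
\tilde{\mbb\beta} := \check{\mbb\beta} + c(\hat{\mbb\beta} - \mbb\beta),
\]
for which a direct calculation, using $\check\sigma = c\sigma$, gives $(\mb X\check{\mbb\beta} + \check\sigma\mbb\zeta) - \mb X\tilde{\mbb\beta} = c\{\mb X(\mbb\beta - \hat{\mbb\beta}) + \sigma\mbb\zeta\}$, so after normalisation the two scaled residuals agree.

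To verify that $\tilde{\mbb\beta}$ is indeed optimal I would invoke the KKT conditions for the square-root Lasso: $\hat{\mbb\beta}$ solves \eqref{eq:sqrt_lasso} iff $\mb X^T\hat{\mb R}/\sqrt n = \lambda\hat{\mbb\tau}$ for some $\hat{\mbb\tau} \in \partial\|\hat{\mbb\beta}\|_1$. Because the scaled residual at $\tilde{\mbb\beta}$ equals $\hat{\mb R}$ by the previous step, the stationarity equation transfers for free using the same $\hat{\mbb\tau}$; only the subgradient condition needs checking, namely that whenever $\tilde\beta_j \neq 0$ one has $\hat\beta_j \neq 0$ with $\sgn(\tilde\beta_j) = \sgn(\hat\beta_j)$. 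When $\tilde\beta_j = 0$ we only need $|\hat\tau_j|\leq 1$, which is automatic from $\hat{\mbb\tau}\in\partial\|\hat{\mbb\beta}\|_1$.

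This sign verification is the only place the hypotheses enter and is the main obstacle. For $j \notin S$ we have $\beta_j = \check\beta_j = 0$, so $\tilde\beta_j = c\hat\beta_j$ and the claim is immediate from $c>0$. For $j \in S$, membership $\mbb\zeta \in \Lambda_{\lambda,t}$ yields $\sgn(\hat\beta_j) = \sgn(\beta_j)$ together with $\hat\beta_j/\beta_j > t$, which translates to the one-sided bound $\hat\beta_j - \beta_j \geq -(1-t)\beta_j$ when $\beta_j > 0$ and $\hat\beta_j - \beta_j \leq -(1-t)\beta_j$ when $\beta_j < 0$. A case split on $\sgn(\beta_j)$, combined with $\sgn(\check{\mbb\beta}) = \sgn(\mbb\beta)$, then shows that in $\tilde\beta_j = \check\beta_j + c(\hat\beta_j - \beta_j)$ the correction term can oppose $\check\beta_j$ only in one direction and with magnitude at most $c(1-t)|\beta_j|$; hence $\sgn(\tilde\beta_j) = \sgn(\check\beta_j) = \sgn(\hat\beta_j)$ precisely when $c(1-t)|\beta_j| < |\check\beta_j|$ for every $j\in S$. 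Rearranging (using $\check\beta_j\beta_j > 0$) gives exactly the stated bound $\check\sigma/\sigma < \min_{j\in S}\check\beta_j/\{(1-t)\beta_j\}$, completing the KKT verification and hence the theorem.
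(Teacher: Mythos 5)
Your proposal is correct and follows essentially the same route as the paper: the same candidate $\tilde{\mbb\beta} = \check{\mbb\beta} + (\check{\sigma}/\sigma)(\hat{\mbb\beta} - \mbb\beta)$, the same observation that its residual is a positive multiple of the original one, the same transfer of the KKT conditions via the sign agreement $\sgn(\tilde{\mbb\beta}) = \sgn(\hat{\mbb\beta})$, and uniqueness of Lasso fitted values to conclude. The only difference is that you spell out the componentwise sign verification that the paper states without detail, and your case analysis yields exactly the stated bound on $\check{\sigma}/\sigma$.
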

In words, provided the error $\mbb \zeta$ is in the set
$\Lambda_{\lambda,t}$ and conditions for $\check{\mbb\beta}$ and
$\check{\sigma}$ are met, the scaled residuals from a Lasso fit to
$\mb y= \mb X \mbb\beta + \sigma\mbb \zeta$ are precisely equal to the scaled residuals from a Lasso fit to $\mb X \check{\mbb\beta} + \check{\sigma}\mbb \zeta$.
Note that all of the quantities in the result are deterministic.  
Under reasonable conditions and for a sensible choice of $\lambda$ (see
Theorem~\ref{thm:maximise_pval}), 
when 
$\mbb\zeta \sim \mathcal{N}_n(\mb 0, \mb I)$, we can expect the event
$\mbb\zeta \in \Lambda_{\lambda, t}$ to have large probability. Thus Theorem~\ref{thm:lambda_fixed} shows that
the scaled residuals are not very sensitive to the parameter $\sigma$ or to
the magnitudes of the components of $\mbb\beta$, but instead depend largely
on the signs of the latter. It is the square-root parametrisation that
allows the result to hold for a large range of values of
$\check{\sigma}$, and in particular for all $\check{\sigma}$ sufficiently
small.  

 
Theorem~\ref{thm:lambda_fixed} does not directly justify a way to
  simulate from the distribution of the scaled
Lasso residuals as in Algorithm \ref{alg:Lasso1} since the sign
  pattern of $\check{\mbb\beta}$ must equal that of $\mbb \beta$.
Accurate estimation of the sign pattern of $\mbb\beta$ using the
Lasso requires a strong irrepresentable or neighbourhood stability
condition \citep{meinshausen04consistent, zhao05model}. 
Nevertheless, we now show that we can modify Algorithm~\ref{alg:Lasso1} to
yield provable error control under more reasonable conditions. In Section~\ref{sec:heuristic} we argue heuristically that the same error control
should hold for Algorithm~\ref{alg:Lasso1} in a wide range of settings. 

\subsubsection{Modified Lasso RP tests}
Under a so-called beta-min condition (see Theorem~\ref{thm:maximise_pval} below), with high probability we can arrive at an initial estimate of $\mbb\beta$, $\mbb\beta'$ via the Lasso for which $\sgn(\mbb\beta'_S) = \sgn(\mbb\beta_S)$, and where $\min_{j \in S} |\beta'_j| > \max_{j \in S^c} |\beta'_j|$. 
With such a $\mbb\beta'$, we can aim to seek a threshold $\tau$ for which the Lasso applied only on the subset of variables $\mb X_{S_\tau}$ where $S_\tau:= \{j : |\beta'_j| > \tau\}$ yields an estimate 
that has the necessary sign agreement with $\mbb\beta$. This then motivates Algorithm~\ref{alg:Lasso2} based on maximising over the candidate $p$-values obtained through different $\mbb\beta$ estimates derived from applying the Lasso to different subsets of the initial active set (see also \citet{Chatterjee2011} which introduces a related scheme). 

\begin{algorithm}
\caption{Modified Lasso RP tests (only used for Theorem
  \ref{thm:maximise_pval})}
\label{alg:Lasso2}
\begin{enumerate}
\item Let $\mbb\beta' = \hat{\mbb\beta}_\lambda(\sigma\mbb\varepsilon)$ be the Lasso estimate of $\mbb\beta$.
\item Let $s'=|\{j:\beta'_j \neq 0\}|$ and suppose $0<|\beta'_{j_s'}|\leq
  \cdots \leq |\beta'_{j_{1}}|$ are the non-zero components of $\mbb\beta'$
  arranged in order of non-decreasing magnitude. Define $\hat{S}^{(k)} =
  \{j_1,j_2, \ldots, j_k\}$.
\item For $k=1, \ldots, s'$ let $\check{\mbb\beta}^{(k)}$ be the Lasso estimate from regressing $\mb y$ on $\mb X_{\hat{S}^{(k)}}$. Further set $\check{\mbb\beta}^{(0)}=\mb 0$.
\item Using each of the $\check{\mbb\beta}^{(k)}$ in turn and
  $\check{\sigma}^{(k)}=\|\mb y - \mb X\check{\mbb\beta}^{(k)}\|_2/\sqrt{n}$, generate sets of
  residuals $\{\hat{\mb R}_\lambda(\check{\mbb\beta}^{(k)}, \check{\sigma}^{(k)}
  \mbb\zeta^{(b)})\}_{b=1}^B$ where the $\mbb\zeta^{(b)}$ are i.i.d.\ draws
  from $\mathcal{N}_n(\mb 0, \mb I)$. Use these to create corresponding $p$-values $Q_k$
  for RP tests based on \eqref{eq:MC_pval} or the method introduced in
  Section~\ref{sec:Combine}.
\item Output $Q=\max_{k=0,\ldots,s'} Q_k$ as the final approximate
  $p$-value. 
\end{enumerate}
\end{algorithm}
Note we do not recommend the use of Algorithm~\ref{alg:Lasso2} in practice; we only introduce it to facilitate theoretical analysis which sheds light on our proposed procedure Algorithm~\ref{alg:Lasso1}. Let $s=|S|$ and $s'=|\{j:\beta'_j \neq 0\}|$.
The theorem below gives conditions under which with high probability, $s' \geq s$ and residuals from responses generated around $\check{\mbb\beta}^{(s)}$ will equal the true residuals.  This then shows that the maximum $p$-value $Q$ will in general be a conservative $p$-value as it will always be at least as large as $Q_{s}$, on an event with high probability.

As well as a beta-min condition, the result requires some relatively mild assumptions on the design matrix. Let $\mathscr{C}(\xi, T)=\{\mb u:\|\mb u_{T^c}\|_1 \leq \xi\|\mb u_T\|_1, \, \mb u \neq \mb 0\}$. The restricted eigenvalue 
\citep{bickel07dantzig, Koltchinskii2009} is defined by
\begin{equation} \label{eq:RE}
 \phi(\xi) = \inf \bigg\{ \frac{\|\mb X \mb u\|_2 / \sqrt{n}}{\|\mb u\|_2}:  \mb u \in \mathscr{C}(\xi, S) \bigg\}.
\end{equation}
For a matrix $\mb M \in \R^{n \times p}$, $T \subset \{1,\ldots,p\}$ and $\xi>1$, the compatibility factor $\kappa(\xi, T, \mb M)$ \citep{van2009conditions} is given by
\begin{equation} \label{eq:compatibility}
\kappa(\xi, T, \mb M)=\inf \bigg\{\frac{\|\mb M \mb u\|_2/\sqrt{n}}{\|\mb u_T\|_1/|T|} : \mb u \in \mathscr{C}(\xi, T)\bigg\}.
\end{equation}
When either of the final two arguments are omitted, we shall take them to be $S$ and $\mb X$ respectively; the more general form is required in Section~\ref{sec:single}.
The sizes of $\kappa(\xi)$ and $\phi(\xi)$ quantify the ill-posedness of the the design matrix $\mb X$; we will require $\kappa^(\xi), \phi(\xi) > 0$ for some $\xi > 1$.
Note that in the random design setting where the rows of $\mb X$ are i.i.d.\ multivariate normal with the minimum eigenvalue of the covariance matrix bounded away from zero, the factors \eqref{eq:RE} and \eqref{eq:compatibility} can be thought of as positive constants in asymptotic regimes where $s\log(p)/n \to 0$. We refer the reader to \citet{van2009conditions} and \citet{Zhang2012} for further details.
\begin{thm} \label{thm:maximise_pval}
Suppose the data follows the Gaussian linear model \eqref{eq:lin_mod}. Let $\lambda = A\sqrt{2\log(p/\eta)/n}$ with $A>\sqrt{2}$ and $pe^{-s-2}>\eta > 0$. Suppose for $\xi>1$ that
\begin{align}
\frac{s\log(p/\eta)}{n\kappa^2(\xi)} \leq \frac{1}{A^2 (\xi+1)} \min \bigg(1-\frac{\sqrt{2}(\xi+1)}{A(\xi-1)}, \,\, \frac{1}{5} \bigg). \label{eq:A_cond}
\end{align}
Assume a beta-min condition
\begin{equation}
\min_{j \in S} |\beta_j| > 10\sqrt{2}A\xi\frac{\sigma \sqrt{s\log(p/\eta)}}{\phi^2(\xi)\sqrt{n}}. \label{eq:betamin}
\end{equation}
Then for all $x \in [0,1]$,
\begin{equation} \label{eq:valid}
\pr(Q \leq x) \leq x + \frac{2(1+r_{n-s})\eta}{\sqrt{\pi\log(p / \eta)}} +e^{-n/8}
\end{equation}
where $r_m \to 0$ as $m \to \infty$.
\end{thm}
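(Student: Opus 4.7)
The plan is to use Theorem~\ref{thm:lambda_fixed} to show that, on a high-probability event $E$, the observed scaled residual coincides with a bootstrap residual generated around $(\check{\mbb\beta}^{(s)}, \check{\sigma}^{(s)})$. This will make $Q_s$ an (approximately) valid $p$-value, and since the algorithm outputs $Q \geq Q_s$ deterministically, the bound \eqref{eq:valid} will follow.

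The first and largest block is to construct $E$. On the Gaussian noise event $\mathcal{E}_1 = \{\|\mb X^\top \mbb\varepsilon\|_\infty/n \leq \sqrt{2\log(p/\eta)/n}\}$, whose complement I would bound by $2\eta/\sqrt{\pi \log(p/\eta)}$ via a sharp Mills-ratio tail bound (with the factor $(1+r_{n-s})$ correction arising from replacing $\sigma$ by a plug-in noise estimate), standard oracle inequalities together with the compatibility factor $\kappa(\xi)$, the restricted eigenvalue $\phi(\xi)$ and condition \eqref{eq:A_cond} yield an $\ell_\infty$ bound $\|\mbb\beta' - \mbb\beta\|_\infty \lesssim \sigma\sqrt{s\log(p/\eta)/n}/\phi^2(\xi)$ for the initial Lasso estimate $\mbb\beta' = \hat{\mbb\beta}_\lambda(\sigma\mbb\varepsilon)$. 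The beta-min condition \eqref{eq:betamin} is calibrated precisely so this bound is smaller than both $\min_{j\in S}|\beta_j|$ and $\min_{j\in S}|\beta_j|-\max_{j\in S^c}|\beta'_j|$; consequently on $\mathcal{E}_1$, $\mbb\beta'$ has the correct sign pattern on $S$, its $s$ largest entries in magnitude lie in $S$, $s' \geq s$, and $\hat{S}^{(s)} = S$. Repeating the oracle analysis on the well-conditioned sub-design $\mb X_S$ gives $\sgn(\check{\mbb\beta}^{(s)}) = \sgn(\mbb\beta)$ and $\min_{j\in S}\check{\beta}^{(s)}_j/\beta_j \geq 1 - c$ for an explicit small $c$, while the chi-square concentration event $\{|\,\|\mbb\varepsilon\|_2^2/n - 1\,| \leq 1/2\}$ (whose complement contributes the $e^{-n/8}$ term) combined with an orthogonal decomposition of the residual yields $\check{\sigma}^{(s)}/\sigma$ within a constant factor of $1$. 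For a suitable choice of $t \in (0,1)$, this verifies the ratio hypothesis $\check{\sigma}^{(s)}/\sigma < \min_j \check{\beta}^{(s)}_j/\{(1-t)\beta_j\}$ of Theorem~\ref{thm:lambda_fixed}, and since $\hat{\mbb\beta}_\lambda(\mbb\beta, \sigma\mbb\varepsilon) = \mbb\beta'$ has already been pinned down on $S$, also $\mbb\varepsilon \in \Lambda_{\lambda,t}$. Applying Theorem~\ref{thm:lambda_fixed} yields
\[
\hat{\mb R}_\lambda(\mbb\beta, \sigma\mbb\varepsilon) \;=\; \hat{\mb R}_\lambda(\check{\mbb\beta}^{(s)}, \check{\sigma}^{(s)}\mbb\varepsilon) \qquad \text{on } E.
\]

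The final and most delicate step, which I expect to be the main obstacle, is converting this deterministic coincidence into validity of $Q_s$. The difficulty is that $\check{\mbb\beta}^{(s)}$ and $\check{\sigma}^{(s)}$ are themselves functions of $\mbb\varepsilon$, so the i.i.d.\ nature of $(\mbb\varepsilon, \mbb\zeta^{(1)}, \ldots, \mbb\zeta^{(B)})$ does not immediately deliver exchangeability between the observed statistic $f(\hat{\mb R}_\lambda(\mbb\beta, \sigma\mbb\varepsilon), \mb X_{\text{all}})$ and the bootstrap statistics $f(\hat{\mb R}_\lambda(\check{\mbb\beta}^{(s)}, \check{\sigma}^{(s)}\mbb\zeta^{(b)}), \mb X_{\text{all}})$. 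I would resolve this by noting that on $E$ the observed statistic equals $f(\hat{\mb R}_\lambda(\check{\mbb\beta}^{(s)}, \check{\sigma}^{(s)}\mbb\varepsilon), \mb X_{\text{all}})$, which has exactly the same functional form as the bootstrap statistics with $\mbb\varepsilon$ in the role of $\mbb\zeta^{(b)}$; a Monte Carlo exchangeability argument as in Proposition~\ref{prop:Combine} then controls $\pr(Q_s \leq x,\,E) \leq x$. Assembling $\pr(Q \leq x) \leq \pr(Q_s \leq x, E) + \pr(E^c)$ and bounding $\pr(E^c)$ by the two tail terms above recovers \eqref{eq:valid}.
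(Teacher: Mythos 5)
Your first block is essentially the paper's own route: the event you construct (sign recovery for $\mbb\beta'$ and for $\check{\mbb\beta}^{(s)}$, $\hat{S}^{(s)}=S$, the ratio bound on $\check{\sigma}^{(s)}/\sigma$, with the two tail terms coming from a Mills-ratio bound and chi-square concentration) is the content of Lemma~\ref{lem:event_prop}, and invoking Theorem~\ref{thm:lambda_fixed} to get $\hat{\mb R}_\lambda(\mbb\beta,\sigma\mbb\varepsilon)=\hat{\mb R}_\lambda(\check{\mbb\beta}^{(s)},\check{\sigma}^{(s)}\mbb\varepsilon)$ on that event is exactly how the paper proceeds. The gap is in the final step, which you correctly flag as the delicate one but do not actually close. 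Observing that the observed statistic ``has exactly the same functional form as the bootstrap statistics with $\mbb\varepsilon$ in the role of $\mbb\zeta^{(b)}$'' does not give exchangeability: the map $\mbb\zeta\mapsto f(\hat{\mb R}_\lambda(\check{\mbb\beta}^{(s)},\check{\sigma}^{(s)}\mbb\zeta),\mb X_{\text{all}})$ is a random function whose parameters $(\check{\mbb\beta}^{(s)},\check{\sigma}^{(s)})$ are built from $\mbb\varepsilon$, so evaluating it at its own $\mbb\varepsilon$ versus at independent $\mbb\zeta^{(b)}$'s produces variables that are in general not exchangeable, even on your event $E$. The paper needs a genuinely additional ingredient here, Lemma~\ref{lem:indep}: conditional on the good event $\Omega$, the scaled residual $\hat{\mb R}_\lambda(\mbb\beta,\sigma\mbb\varepsilon)$ is independent of $(\mb X\check{\mbb\beta}^{(s)},\check{\sigma}^{(s)})$. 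This is proved by showing, via further applications of Theorem~\ref{thm:lambda_fixed}, that on the good set the residual depends on $\mbb\varepsilon$ only through the direction of $(\mb I-\mb P)\mbb\varepsilon$ (with $\mb P$ the projection onto $\mb X_S$), while $(\check{\mbb\beta}^{(s)},\check{\sigma}^{(s)})$ depends only on $\mb P\mbb\varepsilon$ and $\|(\mb I-\mb P)\mbb\varepsilon\|_2$, and these three pieces are mutually independent for Gaussian noise. Without this, the aggregated statistics $\tilde{Q}_b$, which compare each curve against all the others---including bootstrap curves driven by the $\mbb\varepsilon$-dependent pair $(\check{\mbb\beta}^{(s)},\check{\sigma}^{(s)})$---cannot be argued exchangeable, and even the simple $p$-value \eqref{eq:MC_pval} is not covered at the stated level of generality of arbitrary RP functions.

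A second omission: the residual identity of Theorem~\ref{thm:lambda_fixed} must also be applied to the bootstrap draws, since $\hat{\mb R}_\lambda(\check{\mbb\beta}^{(s)},\check{\sigma}^{(s)}\mbb\zeta^{(b)})=\hat{\mb R}_\lambda(\mbb\beta,\sigma\mbb\zeta^{(b)})$ only holds for those $b$ with $\mbb\zeta^{(b)}$ in the deterministic good set $\Lambda$. Exchangeability therefore holds only for the random subfamily indexed by $R=\{b:\mbb\zeta^{(b)}\in\Lambda\}$ together with $b=0$, and the naive Monte Carlo bound conditional on $R$ is $\lfloor x(B+1)\rfloor/(|R|+1)$, which exceeds $x$ whenever $|R|<B$. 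The paper recovers the clean bound in \eqref{eq:valid} through the identity $(1-\delta)\E\{(B+1)/(|R|+1)\}=1-\delta^{B+1}\leq 1$ with $\delta=\pr(\Omega^c)$. Your proposal contains neither the conditional-independence step nor this conditioning-on-$R$ correction, so as written the claim $\pr(Q_s\leq x,\,E)\leq x$ is not justified; the rest of the argument (including $Q\geq Q_s$ on the event where $s'\geq s$) is fine.
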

Although the beta-min condition, which is of the form is of the form $\min_{j \in S} |\beta_j| \geq \text{const.} \times \sqrt{s\log(p)/n}$, may be regarded as somewhat strong, the conclusion is correspondingly strong: any RP method or collection of RP methods with arbitrary $\tilde{Q}$ for combining tests can be applied to the residuals and the result remains valid.
It is also worth noting however that the conditions are only required under the null.
For example, if the alternative of interest was that an additional variable $\mb z$ was related to the response after accounting for those in the original design matrix $\mb X$, no conditions on the relationship between $\mb X$ and $\mb z$ are required for the test to be valid.

More importantly though, the conditions are certainly not necessary for the conclusion to hold. The scaled residuals are a function of the fitted values and the response, and do not involve Lasso parameter estimates directly. Thus whilst duplicated columns in $\mb X$ could be problematic for inferential procedures relying directly on Lasso estimates such as the debiased Lasso \citep{zhangzhang14}, they pose no problem for RP tests. In addition, given a particular RP method, exact equality of the residuals would not be needed to guarantee a result of the form \eqref{eq:valid}.  

\subsubsection{Relevance of Theorem~\ref{thm:maximise_pval} to Algorithm~\ref{alg:Lasso1}} \label{sec:heuristic}
In the special case of testing for the significance of a single predictor described above, we have a much stronger result than Theorem~\ref{thm:maximise_pval} (see Theorems~\ref{thm:single_pval_null} and \ref{thm:single_pval_alt})
which shows that neither the beta-min condition nor the maximisation over candidate $p$-values of Algorithm~\ref{alg:Lasso2} is necessary for error control to hold. More generally, in our experiments we have found $Q_{s'}$ is usually equal to or close to the maximum $Q$ for large $B$ across a variety of settings. Thus selecting $Q_{s'}$ rather than performing the maximisation (which amounts to Algorithm~\ref{alg:Lasso1}) is able to deliver conservative error control as evidenced by the simulations in Section~\ref{sec:apps}.

A heuristic explanation for why the error is controlled is that typically the amount of signal remaining in $\hat{\mb R}_\lambda(\check{\mbb\beta}^{(k)}, \check{\sigma}\mbb\zeta)$ increases with $k$, simply because typically $\|\mb X \check{\mbb\beta}^{(k)}\|_2$ also increases  with $k$. This can result in the prediction error of a procedure applied to the various residuals decreasing with $k$ because the signal-to-noise ratios tend to be increasing; thus the $p$-values tend to increase with $k$.

In addition, when the Lasso performs well, we would expect residuals to contain very little signal, and any differences in the signals contained in $\hat{\mb R}_\lambda(\mbb\beta, \sigma\mbb\zeta)$ and $\hat{\mb R}_\lambda(\check{\mbb\beta}, \check{\sigma}\mbb\varepsilon)$ to be smaller still, particularly when  $\mb{X}\mbb\beta$ and $\mb{X}\check{\mbb\beta}$ are close. Typically the RP function will be insensitive to such small differences since they are unlikely to be too close to directions against which power is desired.
We now discuss the choices of $\check{\mbb\beta}$, $\lambda$ and $\check{\sigma}$ in Algorithm~\ref{alg:Lasso1}.

\subsubsection{Practical considerations} \label{sec:practice}

\paragraph{Choice of $\check{\mbb\beta}$.}
In view of the preceding discussion, it suffices for $\check{\mbb\beta}$ to satisfy a screening-type property:  we would like the support of $\check{\mbb\beta}$ to contain that of $\mbb\beta$. Though Theorem~\ref{thm:maximise_pval} suggests a fixed $\lambda$, since $\check{\mbb\beta}$ only needs to be computed once, we can use cross-validation. This is the perhaps the most standard way of producing an estimate that performs well for screening (see for example Section 2.5.1 of \citet{buhlmann2011statistics}).

If the folds for cross-validation are chosen at random, the estimate will have undesirable randomness beyond that of the data. We thus suggest taking many random partitions into folds and using an estimate based on a $\lambda$ that minimises the cross-validation error curve based on all of the folds used. In our simulations in Section~\ref{sec:apps} we partition the observations into 10 random folds a total of 8 times.



\paragraph{Choice of $\check{\sigma}$.}
The normalised RSS is perhaps the most natural choice for $\check{\sigma}^2$ (see also \citet{Reid2013}), though as Theorem~\ref{thm:lambda_fixed} suggests, the results are essentially unchanged when this is doubled or halved, for example.


\paragraph{Choice of $\lambda$ for the Lasso residuals.}
The choice of $\lambda$ should be such that with high probability, the resulting estimate contains the support of $\mbb\beta$ (see Theorem~\ref{thm:lambda_fixed}). 
Though Theorem~\ref{thm:maximise_pval} suggests taking $\lambda = A\sqrt{2\log(p)/n}$ for $A >\sqrt{2}$, the restriction on $A$ is an artefact of basing our result on oracle inequalities from \citet{Sun2012}, which place relatively simple conditions on the design. \citet{Sun2013} has a more involved theory which suggests a slightly smaller $\lambda$. We therefore use their method, the default in the \texttt{R} package \citet{scalreg}, as a convenient fixed choice of $\lambda$. 




\subsection{Testing the significance of individual predictors} \label{sec:single}
Here we consider the collection of null hypotheses $H_k :\beta_k=0$ and their corresponding alternatives that $\beta_k \neq 0$.  
Note that for this setting there are many other approaches that can perform the required tests.
Our aim here is to show that RP tests can be valid under weaker assumptions
than those laid out in Theorem~\ref{thm:maximise_pval}, and moreover that
the simpler approach of Algorithm~\ref{alg:Lasso1} can control type I
error.  

We begin with some notation. For $A_k := \{1,\ldots,p\}\setminus \{k\}$ and $\mb b \in \R^p$ let $\mb b_{-k}=\mb b_{A_k}$ and $\mb X_{-k} = \mb X_{A_k}$.
For each variable $k$, our RP method will be a least squares regression
onto a version of $\mb X_k$ that has been residualised with respect to $\mb
X_{-k}$.
Since in the high-dimensional setting $\mb X_{-k}$ will typically have full row rank, an OLS regression of $\mb X_k$ on $\mb X_{-k}$ will return the $\mb 0$-vector as residuals. Hence we will residualise $\mb X_k$ using the square-root Lasso:
\[
 \mbb\Psi_{k} = \argmin{\mb b \in \R^{p-1}} \{\|\mb X_k - \mb X_{-k} \mb b\|_2 /\sqrt{n} + \gamma\|\mb b\|_1\}.
\]
This RP method is closely related to the pioneering idea by 
  \citet{zhangzhang14} and similar to that of \citet{Ning2014}, who
consider using the regular Lasso (without the square-root parametrisation)
at each stage. If $\mb X_k$ were not residualised with respect to $\mb
X_{-k}$, and the regular Lasso were used, the resulting RP method would be similar to
that of \citet{voorman2014inference}. The work of \citet{Ren2015} studies an analogous procedure in the context of the Gaussian graphical model.

Let $\mb W_k$ be the residual $\mb X_k - \mb X_{-k}\mbb \Psi_{k}$. Note for each $k$ we may write
\[
\mb y = \mb X_{-k} \mbb\Theta_k + \beta_k \mb W_k + \sigma \mbb\varepsilon
\]
where $\mbb\Theta_k = \mbb\beta_{-k} + \beta_k \mbb\Psi_k \in \R^{p-1}$.
Let $\hat{\mbb\Theta}_k$ be the square-root Lasso regression of $\mb y$ on
to $\mb X_{-k}$ with tuning parameter $\lambda$. Our RP function will be
the RSS from OLS regression of the scaled Lasso residuals $(\mb y - \mb
X_{-k} \hat{\mbb\Theta}_k)/\|\mb y - \mb X_{-k} \hat{\mbb\Theta}_k\|_2$  on
to $\mb W_k$. Note this is an RP function even though it involves the
residualised version of $\mb X_k$, $\mb W_k$; the latter is simply a
function of $\mb X$. Equivalently, we can consider the test statistic
$T_k^2$ with $T_k$ 
defined by
\begin{align*}
T_k=\frac{\mb W_k^T(\mb y - \mb X_{-k} \hat{\mbb\Theta}_k)}{\|\mb W_k\|_2\|\mb y - \mb X_{-k} \hat{\mbb\Theta}_k\|_2/\sqrt{n}}.
\end{align*}
Note that $T_k$ is simply a regularised partial correlation between $\mb y$ and $\mb X_k$ given $\mb X_{-k}$.
The bootstrap version is
\begin{align*}
T^*_k=\frac{\mb W_k^T(\mb y^*_k - \mb X_{-k} \hat{\mbb\Theta}^*_k)}{\|\mb W_k\|_2\|\mb y^*_k - \mb X_{-k} \hat{\mbb\Theta}^*_k\|_2/\sqrt{n}},
\end{align*}
where $\mb y^*_k = \mb X_{-k}\hat{\mbb\Theta}_k +
\check{\sigma}\mbb\varepsilon^*$, $\mbb\varepsilon^* \sim \mathcal{N}_n(\mb
0, \mb I)$ and $ \hat{\mbb\Theta}^*_k$ is the Lasso regression of $\mb y^*$
on $\mb X_{-k}$. Here we will consider taking $\check{\sigma}=\|\mb y - \mb
X \hat{\mbb\beta}\|_2/\sqrt{n}$ where $\hat{\mbb\beta}$ is the square-root
Lasso regression of $\mb y$ on the full design matrix $\mb X$. 

As before, let $S$ be the support of $\mbb\beta$, which without loss of generality we will take to be $\{1,\ldots,s\}$, and also let $N=\{1,\ldots,p\} \setminus S$ be the set of true nulls. Assume $\mbb\varepsilon \sim \mathcal{N}_n(\mb 0, \mb I)$.
The following result shows that only a relatively mild compatibility condition is needed in order to ensure that the type I error is controlled. We consider an asymptotic regime with $n \to \infty$ where $\mbb\beta, \mb X$ and $p$ are all allowed to vary with $n$ though we suppress this in the notation. In the following we denote the cumulative distribution function of the standard normal by $\Phi$.
\begin{thm} \label{thm:single_pval_null}
Let $\lambda =A_1\sqrt{2\log(p)/n}$ for some constant $A_1>1$ and suppose that 
$s\sqrt{\log(p)^2/n}/\kappa^2(\xi, S)\to 0$
for some $\xi > (A_1+1)/(A_1-1)$. Let $\gamma=A_2\sqrt{2\log(p)/n}$ for some constant $A_2>0$. Define $\mathscr{B}=\{\mb b\in \R^p :\mb b_N=0\}$. Then
\begin{align}
\sup_{k \in N,\, \mbb\beta \in \mathscr{B}, \, x \in \R} |\pr(T_k \leq x) - \Phi(x)| &\to 0, \notag \\
\sup_{k \in N,\, \mbb\beta \in \mathscr{B}, \, x \in \R} |\pr(T^*_k \leq x | \mbb\varepsilon) - \Phi(x)| &\inprob  0. \notag
\end{align}
\end{thm}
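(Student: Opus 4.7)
The plan is to write $T_k$ as a ratio whose numerator is a standard normal linear functional of $\mbb\varepsilon$ plus an asymptotically negligible bias, and whose denominator converges in probability to $\sigma\|\mb W_k\|_2$. For $k \in N$ we have $\beta_k = 0$ and hence $\mbb\Theta_k = \mbb\beta_{-k}$ with $\|\mbb\Theta_k\|_0 \leq s$, so the identity $\mb y - \mb X_{-k}\hat{\mbb\Theta}_k = \mb X_{-k}(\mbb\Theta_k - \hat{\mbb\Theta}_k) + \sigma\mbb\varepsilon$ plugged into the definition of $T_k$ gives
\[
T_k \;=\; \frac{\mb W_k^T \mb X_{-k}(\mbb\Theta_k - \hat{\mbb\Theta}_k) \;+\; \sigma\, \mb W_k^T\mbb\varepsilon}{\|\mb W_k\|_2 \cdot \|\mb y - \mb X_{-k}\hat{\mbb\Theta}_k\|_2/\sqrt{n}}.
\]
Because $\mb W_k$ is a function of $\mb X$ alone, $\mb W_k^T\mbb\varepsilon/\|\mb W_k\|_2$ is exactly $\mathcal{N}(0,1)$, so the whole task reduces to showing that the bias in the numerator is $o_P(\sigma\|\mb W_k\|_2)$ and the normalised RSS in the denominator converges to $\sigma$, both uniformly in $k \in N$ and $\mbb\beta \in \mathscr{B}$.

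\textbf{Bias control.} I will bound the bias by combining two standard ingredients. First, the KKT conditions for the square-root Lasso defining $\mbb\Psi_k$ give the dual-feasibility inequality $\|\mb X_{-k}^T\mb W_k\|_\infty \leq \gamma\sqrt{n}\,\|\mb W_k\|_2$, and H\"older's inequality then yields $|\mb W_k^T\mb X_{-k}(\mbb\Theta_k - \hat{\mbb\Theta}_k)| \leq \gamma\sqrt{n}\,\|\mb W_k\|_2\,\|\mbb\Theta_k - \hat{\mbb\Theta}_k\|_1$. Second, the $\ell_1$-oracle inequality for the square-root Lasso of \citet{Sun2012} applied to the regression of $\mb y$ on $\mb X_{-k}$ gives, on an event of probability tending to one uniformly in $(k,\mbb\beta)$, the bound $\|\hat{\mbb\Theta}_k - \mbb\Theta_k\|_1 = O(\sigma s\sqrt{\log p/n}/\kappa^2(\xi,S))$; the compatibility factor for $\mb X_{-k}$ is controlled by $\kappa(\xi,S)$ since $k \notin S$, so deleting column $k$ preserves the relevant cone. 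Multiplying through produces a bias of order $\sigma s\log(p)/(\sqrt{n}\,\kappa^2)$, which is $o_P(1)$ relative to $\sigma\|\mb W_k\|_2$ by the hypothesis $s\sqrt{\log(p)^2/n}/\kappa^2 \to 0$.

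\textbf{Denominator and conclusion for $T_k$.} The prediction-error half of the same oracle inequality gives $\|\mb X_{-k}(\hat{\mbb\Theta}_k - \mbb\Theta_k)\|_2/\sqrt{n} \inprob 0$ uniformly, and combined with $\|\mbb\varepsilon\|_2/\sqrt{n} \inprob 1$ shows $\|\mb y - \mb X_{-k}\hat{\mbb\Theta}_k\|_2/\sqrt{n} \inprob \sigma$ uniformly in $(k,\mbb\beta)$. A Slutsky-type argument then yields $T_k = Z_k + o_P(1)$ with $Z_k \sim \mathcal{N}(0,1)$ uniformly in $k \in N, \mbb\beta \in \mathscr{B}$, and since $\Phi$ is Lipschitz this transfers to uniform vanishing of the Kolmogorov--Smirnov distance in the statement.

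\textbf{Bootstrap and main obstacle.} For $T^*_k$ I will condition on $\mbb\varepsilon$ and repeat the identical decomposition with $(\mbb\Theta_k,\sigma,\mbb\varepsilon)$ replaced by $(\hat{\mbb\Theta}_k,\check{\sigma},\mbb\varepsilon^*)$; conditionally $\mb W_k^T\mbb\varepsilon^*/\|\mb W_k\|_2\sim \mathcal{N}(0,1)$ exactly, and $\check{\sigma} \inprob \sigma$ by variance-estimation consistency for the full-model square-root Lasso. The delicate step is reapplying the oracle inequality to the bootstrap problem, whose target $\hat{\mbb\Theta}_k$ is random. I will handle this by choosing $\mbb\Theta_k$ itself as the sparse approximant: the oracle-with-approximation-error form of the Sun--Zhang inequality then bounds $\|\hat{\mbb\Theta}^*_k - \mbb\Theta_k\|_1$ in terms of $s$, the noise level $\check{\sigma}$, and $\|\mb X_{-k}(\hat{\mbb\Theta}_k - \mbb\Theta_k)\|_2/\sqrt{n}$, and a triangle-inequality step then controls $\|\hat{\mbb\Theta}^*_k - \hat{\mbb\Theta}_k\|_1$ at the same rate. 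Executing this sparsity-and-compatibility transfer uniformly in $(k,\mbb\beta)$, on the event where $\check{\sigma}/\sigma$ is bounded and the original oracle inequality holds, is the principal technical obstacle.
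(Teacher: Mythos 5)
Your proposal follows essentially the same route as the paper: the same decomposition $T_k = \beta_k\|\mb W_k\|_2/\hat{\sigma}_k + \sigma Z_k/\hat{\sigma}_k + \delta_k/\hat{\sigma}_k$ with $Z_k = \mb W_k^T\mbb\varepsilon/\|\mb W_k\|_2 \sim \mathcal{N}(0,1)$, the KKT dual-feasibility bound $\|\mb X_{-k}^T\mb W_k\|_\infty \le \sqrt{n}\gamma\|\mb W_k\|_2$ plus H\"older for $\delta_k$, the \citet{Sun2012} oracle inequality applied uniformly over $k \in N$ and $\mbb\beta \in \mathscr{B}$ (with the observation that deleting the column $k \notin S$ does not hurt the compatibility factor), chi-squared concentration for $\tilde{\sigma}$, and a uniform Slutsky-type transfer to the Kolmogorov distance, which the paper packages as Lemma~\ref{lem:convergence} (note that Lipschitzness of $\Phi$ alone handles the additive shift but the multiplicative factor $\sigma/\hat{\sigma}_k$ needs the uniform continuity of $(x,c_1,c_2)\mapsto\Phi\{(x+c_1)/(1+c_2)\}$, which is exactly what that lemma provides). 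The one place you genuinely diverge is the bootstrap oracle inequality, where the bootstrap truth $\hat{\mbb\Theta}_k$ is not sparse. The paper keeps $\hat{\mbb\Theta}_k$ as the truth and exploits its approximate sparsity in $\ell_1$: on the good event $\sqrt{n}\gamma\|\hat{\mbb\Theta}_{k,N_k}\|_1 \le \sqrt{n}\gamma\|\hat{\mbb\Theta}_k - \mbb\Theta_k\|_1 \to 0$, so Lemma~\ref{lem:mu_bd} (via the $\min_T$ and $\|\cdot_{T^c}\|_1$ terms in the $\mu$ functional of Theorem~\ref{thm:Sun}) directly yields $\sqrt{n}\gamma\,\mu(\cdot,\xi^*,\hat{\mbb\Theta}_k,\mb X_{-k})\to 0$ and hence $\sup|\delta_k^*|\to 0$ and $\hat{\sigma}^*_k \to \sigma$ conditionally. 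Your alternative—comparing $\hat{\mbb\Theta}^*_k$ to the sparse $\mbb\Theta_k$ and triangulating—can be made to work, and the rates match since $\|\mb X_{-k}(\hat{\mbb\Theta}_k-\mbb\Theta_k)\|_2^2/(n\lambda)$ is of the same order as $\lambda s/\kappa^2$, but it requires a square-root-Lasso oracle inequality with a prediction-norm approximation-error term (not literally the form restated as Theorem~\ref{thm:Sun}); in particular you cannot simply absorb $\mb X_{-k}(\hat{\mbb\Theta}_k-\mbb\Theta_k)$ into the noise for the dual-feasibility event \eqref{eq:Lambda_Sun}, since by the first-stage KKT conditions its correlations with the columns of $\mb X_{-k}$ are of order $n\sigma\lambda$, i.e.\ comparable to the threshold itself. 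The paper's formulation sidesteps that delicacy, which is why it is the cleaner way to execute the step you correctly identified as the main obstacle.
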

We see that a bootstrap approach can control the type I error uniformly across the noise variables and $\mbb\beta \in \mathscr{B}$. We note that this result is for a fixed design $\mb X$ and does not require any sparsity assumptions on the inverse covariance matrix of a distribution that could have generated the rows of $\mb X$ \citep{Ning2014}, for example.  
\begin{thm} \label{thm:single_pval_alt}
Let $\lambda$ and $\gamma$ be as in Theorem~\ref{thm:single_pval_null}. Assume that for some $\xi$ with $\xi > (A_1+1)/(A_1-1)$ there is a sequence of sets $\{1,\dots,s-1\} \subseteq T \subset \{1,\ldots,p-1\}$ such that $|T|\sqrt{\log(p)^2/n}/\kappa^2(\xi, T, \mb X_{-k})\to 0$ and $\sqrt{\log(p)} \|\mbb\Theta_{k,T^c}\|_1 \to 0$ where $T^c= \{1,\ldots,p-1\}\setminus T$.
Further assume that $\beta_k \|\mb W_k\|_2/\sqrt{n} \to 0$. Define  $\mathscr{B}_k =\{ \mb b \in \mathscr{B} : b_k=\beta_k\}$. Then
\begin{align}
\sup_{\mbb\beta \in \mathscr{B}_k, x \in \R} \abs{\pr(T_k \leq x) - \Phi\Big(x - \beta_k\|\mb W_k\|_2\big/\sqrt{\sigma^2 + \beta_k^2\|\mb W_k\|_2^2/n}\Big)} &\to 0, \notag \\
\sup_{\mbb\beta \in \mathscr{B}_k, x \in \R} |\pr(T^*_k \leq x | \mbb\varepsilon) - \Phi(x)| &\inprob  0. \notag
\end{align}
\end{thm}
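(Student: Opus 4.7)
The plan is to follow the template of Theorem~\ref{thm:single_pval_null}, with the essential new ingredient being a $\beta_k\mb W_k$ term that appears because $k$ is now a signal variable. The starting point is the identity $\mb X\mbb\beta = \mb X_{-k}\mbb\Theta_k + \beta_k\mb W_k$ (which follows from $\mb X_k = \mb W_k + \mb X_{-k}\mbb\Psi_k$ and $\mbb\Theta_k = \mbb\beta_{-k} + \beta_k\mbb\Psi_k$), yielding the working decomposition
\[
\mb y - \mb X_{-k}\hat{\mbb\Theta}_k \;=\; \beta_k\mb W_k + \sigma\mbb\varepsilon + \mb X_{-k}(\mbb\Theta_k - \hat{\mbb\Theta}_k).
\]

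First, I would apply the square-root Lasso oracle inequalities (e.g.\ those of \citet{Sun2012}, \citet{Belloni2011}) to the regression of $\mb y$ on $\mb X_{-k}$, using the set $T \supseteq \{1,\ldots,s-1\}$ and the compatibility factor $\kappa(\xi,T,\mb X_{-k})$. The combined assumptions $|T|\sqrt{\log(p)^2/n}/\kappa^2(\xi,T,\mb X_{-k})\to 0$ and $\sqrt{\log p}\,\|\mbb\Theta_{k,T^c}\|_1\to 0$ deliver $\|\hat{\mbb\Theta}_k - \mbb\Theta_k\|_1 = o_p(1/\sqrt{\log p})$ and $\|\mb X_{-k}(\hat{\mbb\Theta}_k - \mbb\Theta_k)\|_2/\sqrt{n} = o_p(1)$ on events of probability tending to one uniformly over $\mbb\beta \in \mathscr{B}_k$. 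By the KKT conditions for the residualisation defining $\mb W_k$, $\|\mb X_{-k}^T\mb W_k\|_\infty \le \gamma\|\mb W_k\|_2\sqrt{n}$, and a Hölder bound gives $|\mb W_k^T\mb X_{-k}(\mbb\Theta_k - \hat{\mbb\Theta}_k)| = o_p(\|\mb W_k\|_2\sqrt{n})$.

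Putting these pieces together, the numerator of $T_k$ equals $\beta_k\|\mb W_k\|_2^2 + \sigma\mb W_k^T\mbb\varepsilon$ plus a remainder that, once divided by $\|\mb W_k\|_2$ times the denominator, is $o_p(1)$. For the denominator, expanding $\|\mb y - \mb X_{-k}\hat{\mbb\Theta}_k\|_2^2/n$, using $\|\mbb\varepsilon\|_2^2/n \to 1$, noting that the cross term $2\beta_k\sigma\mb W_k^T\mbb\varepsilon/n$ is $o_p(1)$ under the assumption $\beta_k\|\mb W_k\|_2/\sqrt{n}\to 0$, and using the oracle bound on the bias, gives $\|\mb y - \mb X_{-k}\hat{\mbb\Theta}_k\|_2/\sqrt{n} = \sqrt{\sigma^2 + \beta_k^2\|\mb W_k\|_2^2/n} + o_p(1)$. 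With $Z_k = \mb W_k^T\mbb\varepsilon/\|\mb W_k\|_2 \sim N(0,1)$, these combine to
\[
T_k \;=\; \frac{\beta_k\|\mb W_k\|_2 + \sigma Z_k}{\sqrt{\sigma^2 + \beta_k^2\|\mb W_k\|_2^2/n}} + o_p(1),
\]
and the first claim follows by Polya's theorem once uniformity of the error terms in $\mbb\beta \in \mathscr{B}_k$ is verified.

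For the bootstrap statement, the crucial observation is that $\mb y_k^* = \mb X_{-k}\hat{\mbb\Theta}_k + \check{\sigma}\mbb\varepsilon^*$ contains no $\mb W_k$ component, so the analysis of $T_k^*$ conditional on $\mbb\varepsilon$ reduces to the null argument of Theorem~\ref{thm:single_pval_null}, with $\hat{\mbb\Theta}_k$ taking the role of $\mbb\Theta_k$ and $\check{\sigma}$ taking the role of $\sigma$. I would verify that these substitutions are safe: $\check{\sigma}\inprob \sigma$ by an oracle inequality for the full square-root Lasso combined with $\|\mbb\varepsilon\|_2/\sqrt{n}\inprob 1$; and the approximate sparsity of $\mbb\Theta_k$ transfers to $\hat{\mbb\Theta}_k$ via $\|\hat{\mbb\Theta}_{k,T^c}\|_1 \le \|\mbb\Theta_{k,T^c}\|_1 + \|\hat{\mbb\Theta}_k - \mbb\Theta_k\|_1 = o_p(1/\sqrt{\log p})$, while the compatibility factor, being a property of the fixed $\mb X_{-k}$, is inherited unchanged. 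Proceeding as in the proof of Theorem~\ref{thm:single_pval_null} then yields $\sup_x|\pr(T_k^* \le x\mid\mbb\varepsilon) - \Phi(x)| \inprob 0$.

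The main obstacle is establishing the uniformity over $\mbb\beta \in \mathscr{B}_k$: the probabilistic events on which the oracle inequalities hold, and the constants appearing in them, must be shown not to depend on $\mbb\beta$ in a way that breaks uniform convergence. A secondary delicate point is the bootstrap analysis: conditional on $\mbb\varepsilon$ the relevant oracle inequalities must hold with probability tending to one (over $\mbb\varepsilon^*$), which requires that the data-dependent quantities $\check{\sigma}$ and $\hat{\mbb\Theta}_k$ lie in well-behaved regions on events whose probability over $\mbb\varepsilon$ also tends to one.
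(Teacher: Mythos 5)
Your overall route is the same as the paper's: the decomposition of $T_k$ into drift, noise and bias terms (the paper's \eqref{eq:T_k_decomp}), the square-root Lasso oracle inequalities of \citet{Sun2012} applied with the set $T$, the KKT bound $\|\mb X_{-k}^T\mb W_k\|_\infty \leq \sqrt{n}\gamma\|\mb W_k\|_2$ plus H\"older to kill $\delta_k$, and the reduction of the bootstrap claim to the null argument by noting that $\mb y_k^*$ contains no $\mb W_k$ component, that $\check{\sigma}\inprob\sigma$, and that approximate sparsity transfers to $\hat{\mbb\Theta}_k$ via the triangle inequality (this is exactly the paper's step \eqref{eq:boot_est}). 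That part of your proposal is sound.

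However, there is a genuine gap in your treatment of the denominator in the first claim. You only establish $\hat{\sigma}_k = \|\mb y - \mb X_{-k}\hat{\mbb\Theta}_k\|_2/\sqrt{n} = \sigma'_k + o_p(1)$ with $\sigma'_k = \sqrt{\sigma^2+\beta_k^2\|\mb W_k\|_2^2/n}$, and then conclude $T_k = (\beta_k\|\mb W_k\|_2 + \sigma Z_k)/\sigma'_k + o_p(1)$. Replacing $\hat{\sigma}_k$ by $\sigma'_k$ in the drift term incurs an error $\beta_k\|\mb W_k\|_2\,|\hat{\sigma}_k^{-1} - {\sigma'_k}^{-1}|$, and the theorem only assumes $\beta_k\|\mb W_k\|_2/\sqrt{n}\to 0$, so $\beta_k\|\mb W_k\|_2$ may diverge (e.g.\ like $n^{1/4}$; this is precisely the regime discussed after the theorem where the power tends to one). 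An $o_p(1)$ bound on $\hat{\sigma}_k - \sigma'_k$ is then not enough: you need $\beta_k\|\mb W_k\|_2|\hat{\sigma}_k^{-1}-{\sigma'_k}^{-1}| \inprob 0$, i.e.\ roughly an $o_p(n^{-1/2})$-scale control of the denominator. The paper obtains this in two steps: (i) the assumptions $|T|\log(p)/\{\sqrt{n}\kappa^2(\xi,T,\mb X_{-k})\}\to 0$ and $\sqrt{\log(p)}\|\mbb\Theta_{k,T^c}\|_1\to 0$ are calibrated exactly so that $\tau_k^2 = o(n^{-1/2})$, whence Theorem~\ref{thm:Sun} gives $\sqrt{n}\,|\hat{\sigma}_k-\tilde{\sigma}_k|\to 0$ on the good event, with $\tilde{\sigma}_k = \|\sigma\mbb\varepsilon+\beta_k\mb W_k\|_2/\sqrt{n}$; and (ii) the exact expansion \eqref{eq:sigma_tilde} shows $\sqrt{n}(\tilde{\sigma}_k^2-{\sigma'_k}^2)=O_P(1)$, so that multiplying by $\beta_k\|\mb W_k\|_2 = o(\sqrt{n})$ still gives $o_P(1)$. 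Your appeal to the assumptions delivers only the $\ell_1$ and prediction-error rates, not this $\sqrt{n}$-scale variance control, so the key step fails as written. Relatedly, Polya's theorem addresses convergence to a fixed continuous limit and does not by itself give the uniform comparison with the moving target $\Phi(x - \beta_k\|\mb W_k\|_2/\sigma'_k)$; since $Z_k$ is exactly $\mathcal{N}(0,1)$ this is repairable along the lines of the paper's Lemma~\ref{lem:convergence}, but only once the drift error above is controlled at the correct rate, which is the substantive missing ingredient.
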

If $\mbb\Psi_k$ and hence $\mbb\Theta_k$ were sparse, we could take $T$ as the set of nonzeroes and the second condition involving $\|\mbb\Theta_{k,T^c}\|_1$ would be vacuous. This would be the case with high probability in the random design setting where $\mb X$ has i.i.d.\ Gaussian rows with sparse inverse covariance matrix \citep{optimalconf14}.
However, $\mbb\Theta_{k, T^c}$ can also have many small coefficients provided they have small $\ell_1$-norm.
The result above shows that the power of our method is comparable to the
proposals of \citet{zhangzhang14} and \citet{optimalconf14} based on the
debaised Lasso. If $\|\mb W_k\|_2=O(\sqrt{n})$ as would typically be the
case in the random design setting discussed above,
  we would have power tending to 1 if $\beta_k\to 0$ but
$\sqrt{n}|\beta_k|\to \infty$. 
Further results on power to detect nonlinearities are given in
  Section~\ref{sec:power} of the supplementary material.

The theoretical results do not suggest any real benefit from using the bootstrap as to test hypotheses we can simply compare $T_k$ to a standard normal distribution. However our experience has been that this can be slightly anti-conservative in certain settings. Instead, we propose to use the bootstrap to estimate the mean and standard deviation of the null distribution of the $T_k$ by computing the empirical mean $\hat{m}_k$ and standard deviation $\hat{v}_k$ of $B$ samples of $T_k^*$. Then we take as our $p$-values $2[1- \Phi\{|(T_k-\hat{m}_k)/\hat{v}_k|\}]$.

This construction of $p$-values appears to yield tests that very rarely have size exceeding their nominal level. Indeed in all our numerical experiments we found no evidence of this violation occurring. An additional advantage is that only a modest number of bootstrap samples is needed to yield the sort of low $p$-values that could fall below the threshold of a typical multiple testing procedure. We recommend choosing $B$ between 50 and 100.

\subsubsection{Computational considerations} \label{sec:single_comp}
Using our bootstrap approach for calibration presents a significant computational burden when it is applied to test for the significance of each of a large number of variables in turn. 
Some modifications to Algorithm~\ref{alg:Lasso1} can help to overcome this issue and allow this form of RP tests to be applied to typical high-dimensional data with large $p$.

Firstly rather than using cross-validation to choose $\lambda$ for computation of $\hat{\Theta}_k$, we recommend using the fixed $\lambda$ of \citet{Sun2013} (see also Section~\ref{sec:practice}). The tuning parameter $\gamma$ required to compute $\mb W_k$ can be chosen in the same way, and we also note that these nodewise regressions only need to be done once rather than for each bootstrap sample.
Great computational savings can be realised by first regressing $\mb y$ on
$\mb X$ to yield coefficients $\hat{\mbb\beta}$. Writing
$\hat{S}=\{k:\hat{\beta}_k\neq 0\}$, we know that for each $k \notin
\hat{S}$, $\hat{\mbb\Theta}_k = \hat{\mbb\beta}_{-k}$, so we only need to
compute $\hat{\mbb\Theta}_k$ for those $k$ in $\hat{S}$. The same logic can be
applied to computation of $\hat{\mbb\Theta}_k^*$ for the bootstrap
replicates.

We also remark that approaches for directly simulating Lasso estimates \citep{Zhou2014} may be used to produce simulated residuals. These have the potential to substantially reduce the computational burden; not just in the case of testing significance of individual predictors but for RP tests in general.

\section{Applications} \label{sec:apps}

\subsection{Low-dimensional nulls}
Here we return to the problem of testing for quadratic effects in the diabetes dataset used in the example of Figure~\ref{fig:multi_lambda}. 
In order to further investigate the power of the aggregate RP test constructed through Lasso regressions on a grid of 100 $\lambda$ values as described in Section~\ref{sec:Combine}, we created artificial signals from which we simulated responses. The signals (mean responses) were constructed by selecting at random $s$ of the quadratic terms and giving these coefficients generated using i.i.d.\ Unif$[-1, 1]$ random variables. The remaining coefficients for the variables were set to 0, so $s$ determined the sparsity level of the signal. Responses were generated by adding i.i.d.\ Gaussian noise to the signals, with variance chosen such that the $F$-test for the presence of quadratic effects has power 0.5 when the size is fixed at 0.05. We created 25 artificial signals at each sparsity level $s \in \{1, 4, 10, 20, 35, 54\}$. Note that the total number of possible quadratic effects was 54 (as one of the variables was binary), so the final sparsity level represents fully dense alternatives where we might expect the $F$-test to have good power. We note however that the average power of the $F$-test in the dense case rests critically on the form of the covariance between the generated quadratic coefficients, with optimality guarantees only in special circumstances (see Section 8 of \citet{goeman05testing}).
For the RP tests, we set the number of bootstrap samples $B$ to be 249.

We also compare the power of RP tests to the \emph{global test} procedure of \citet{goeman05testing}.
The results, shown in Figure~\ref{fig:F_test_diabetes}, suggest that RP
tests can outperform the $F$-test in a variety of settings, most notably
when the alternative is sparse, but also in dense settings. When there are
small effects spread out across many variables ($s \in \{35, 54\}$), the
global test tends to do best; indeed in such settings it is optimal. In the
sparser settings, RP tests perform better. 
\begin{figure}[!h]
\centering
\includegraphics[scale=0.4]{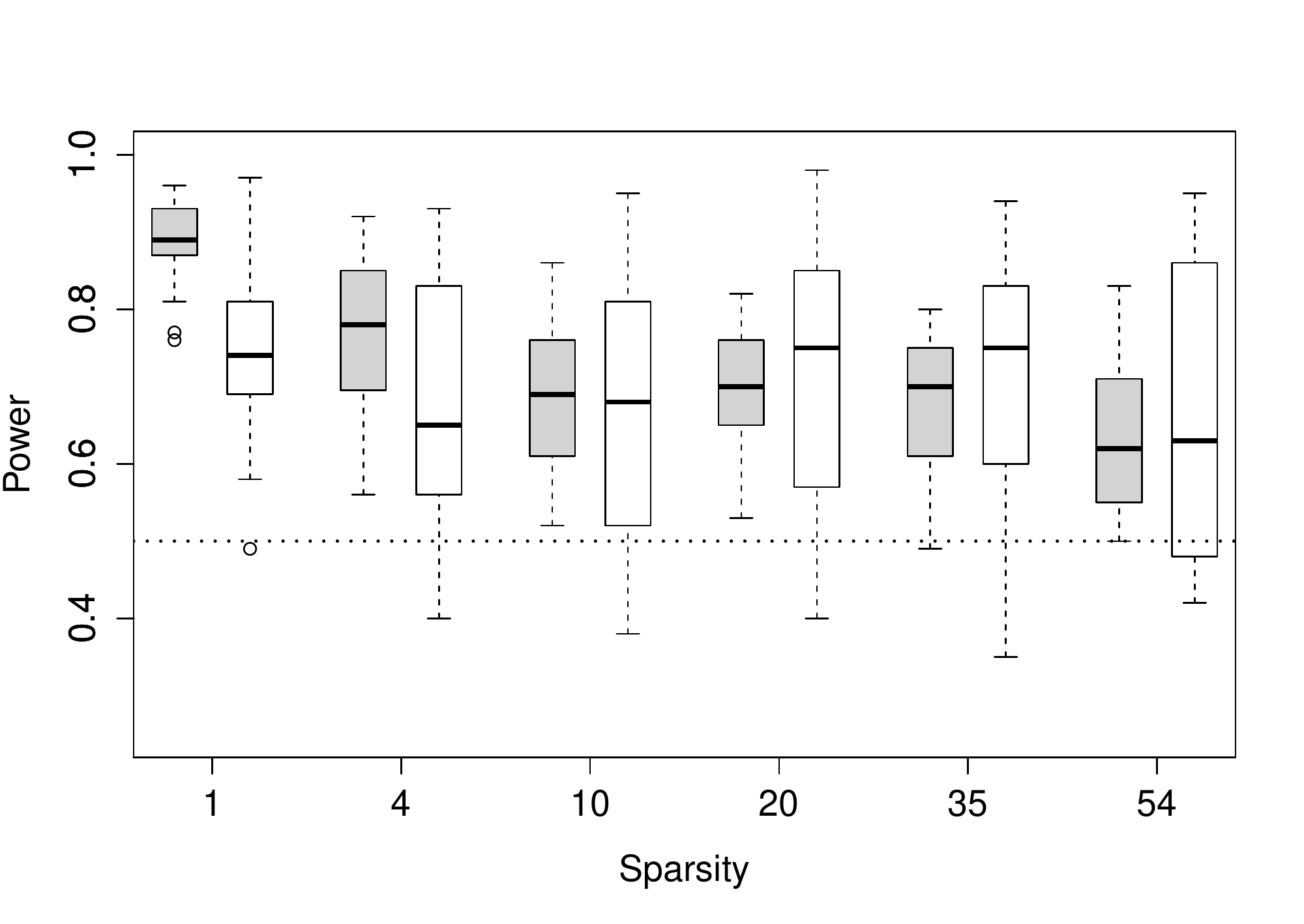}
\caption{Boxplots of the power of RP tests (grey) and the global test (white) across the 25 signals estimated through 100 repetitions, for each of the sparsity levels $s$; the power of the $F$-test is fixed at 0.5 and shown as a dotted line.\label{fig:F_test_diabetes}}
\end{figure}


\subsection{High-dimensional nulls}
In this section we report the results of using RP tests (Algorithm~\ref{alg:Lasso1}) tailored to detect particular alternatives on a variety of simulated examples where the null hypothesis is high-dimensional. We investigate both control of the type I error and the powers of the procedures.

Our examples are inspired by \citet{Dezeure2014}. We use $n \times p$ simulated design matrices with $p=500$ and $n=100$ except for the setting where we test for heteroscedasticity in which we increase $n$ to 300 in order to have reasonable power against these alternatives. The rows of the matrices are distributed as $\mathcal{N}_p(\mb 0, \mbb\Sigma)$ with  $\mbb\Sigma$ given by the three types described in Table~\ref{tab:sim}.

\begin{table}[!h]
\caption{\label{tab:sim}Generation of $\Sigma$.}
\centering
\begin{tabular}{rl}
 Toeplitz: & $\Sigma_{jk}=0.9^{|j-k|}$ \\ 
 Exponential decay: & $(\mbb\Sigma^{-1})_{jk}=0.4^{|j-k|/5}$ \\ 
 Equal correlation: & $\Sigma_{jk}=0.8$ if $j\neq k$ and 1 otherwise. \\ 
 \end{tabular}
 \end{table} 
 
In addition to the randomly generated design matrices, we also used a publicly available real design matrix from gene expression data of Bacillus Subtilis with $n=71$ observations and $p=4088$ predictors \citep{Buehlmann2014}. Similarly to \citet{Dezeure2014}, in order to keep the computational burden of the simulations manageable, we reduced the number of variables to $p=500$ by selecting only those with the highest empirical variance.
For each of the four design settings, we generated 25 design matrices (those from the real data were all the same). The columns of the design matrices were mean-centred and scaled to have $\ell_2$-norm $\sqrt{n}$.

In order to create responses under the null hypothesis, for each of these
100 design matrices, we randomly generated a vector of coefficients
$\mbb\beta$ as follows. We selected a set $S$ of 12 variables from
$\{1,\ldots,p\}$. We then assigned $\mbb\beta_{S^c}=\mb 0$ and each
$\beta_k$ with $k \in S$ was generated according to Unif$[-2,2]$
independently of other coefficients. This form of signal is similar to the
most challenging signal settings considered in
\citet{Dezeure2014} and also resembles the estimated signal from regression of the true response associated with the gene expression data on to the predictors using the Lasso or MCP \citep{zhang2010nearly}. Other constructions for generating the non-zero
  regression coefficients are considered in Section~\ref{sec:num} in the supplementary
  material.
Given $\mb X$ and $\mbb\beta$, we generated $r=100$ responses according to the linear model \eqref{eq:lin_mod} with $\sigma=1$.
Thus in total, here we evaluate the type I error control of our procedures
on over 100 data-generating processes.
The number of bootstrap samples $B$ used was
100 when testing for significance of individual predictors and fixed at 249
in all other settings. 

We now explain interpretation of the plots in Figures~\ref{fig:groups}--\ref{fig:hetero}; a description of Figure~\ref{fig:single} is given in Section~\ref{sec:apps_single}.
The top and bottom rows of each of Figures~\ref{fig:groups}--\ref{fig:hetero} concern settings under null and alternative hypotheses respectively. Thin red curves trace the empirical cumulative distribution functions (CDFs) of the $p$-values obtained using RP tests, whilst thin blue curves, where shown, represent the same for debiased Lasso-based approaches.
In all plots, thickened coloured curves are averages of their respective thin coloured curves; note these are averages over different simulation settings.

The black dashed line is the CDF of the uniform distribution; thus we would hope for the empirical CDFs to be close to this in the null settings (top rows), and rise above it in the bottom rows indicating good power.
Of course, even if all of the $p$-value distributions were stochastically larger than uniform so the type I error was always controlled, we would not expect their estimated distributions i.e.\ the empirical CDFs to always lie below the dashed line.
The black dotted curve allows us to assess type I error control across the simulation settings more easily.
It is constructed such that in each of the plots, were the type I error to be controlled exactly, we would expect on average 1 out of the 25 empirical CDFs for RP tests to escape above the region the line encloses. Thus several curves not completely enclosed under the dotted line in a given plot would indicate poor control of type I error. More precisely, the line is computed as follows. Let $q_\alpha(x)$ be the upper $\alpha$ quantile of a $\text{Bin}(B+1, x)/(B+1)$ distribution. Note this is the marginal distribution of $\hat{U}(x)$ where $\hat{U}$ is the empirical CDF of $B$ samples from the uniform distribution on $\{1/(B+1), 2/(B+1), \ldots, 1\}$. The curve then traces $q_\alpha(x)$ with $\alpha$ chosen such that
\[
\pr\big\{\,\max_{x \in [0,0.1]} (\hat{U}(x) - q_\alpha(x)) >0\big\}=1/25.
\]
We see that across all of the data-generating processes and for each of the three RP testing methods, it appears the size never exceeds the nominal level by a significant amount. Moreover the same holds for the additional 100 data-generating processes whose results presented in the supplementary material: the type I error is controlled well uniformly across all settings considered.

We now describe the particular RP tests used in Figures~\ref{fig:groups}--\ref{fig:hetero}, and the alternatives investigated, as well as the results shown in Figure~\ref{fig:single} concerning testing for the significance of individual predictors as detailed in Section~\ref{sec:single}.
\begin{figure}
    \centering
        \includegraphics[width=\textwidth]{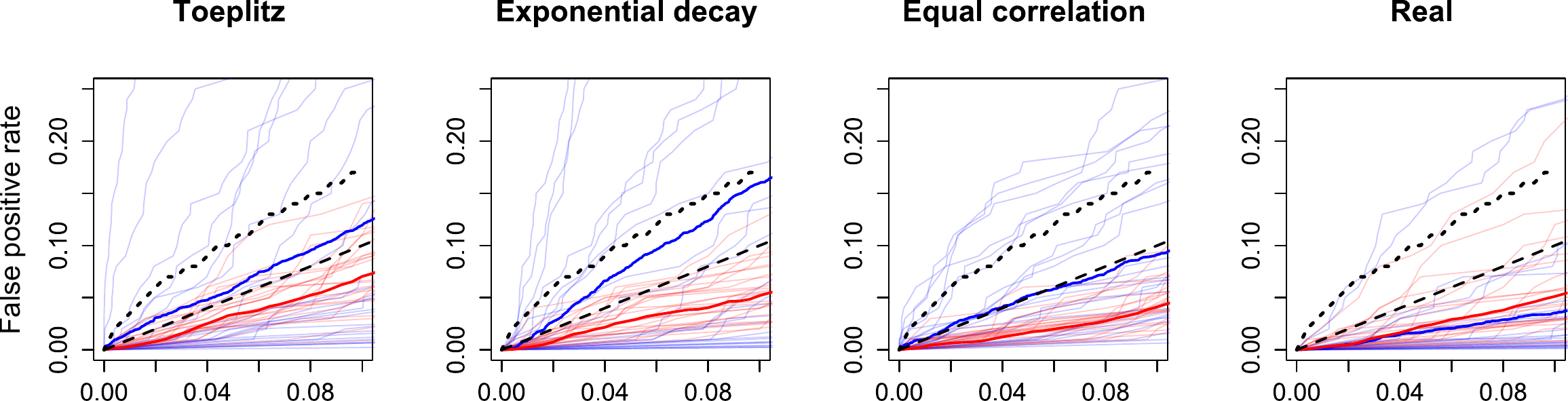}
    \vspace{0.2cm}
    
        \includegraphics[width=\textwidth]{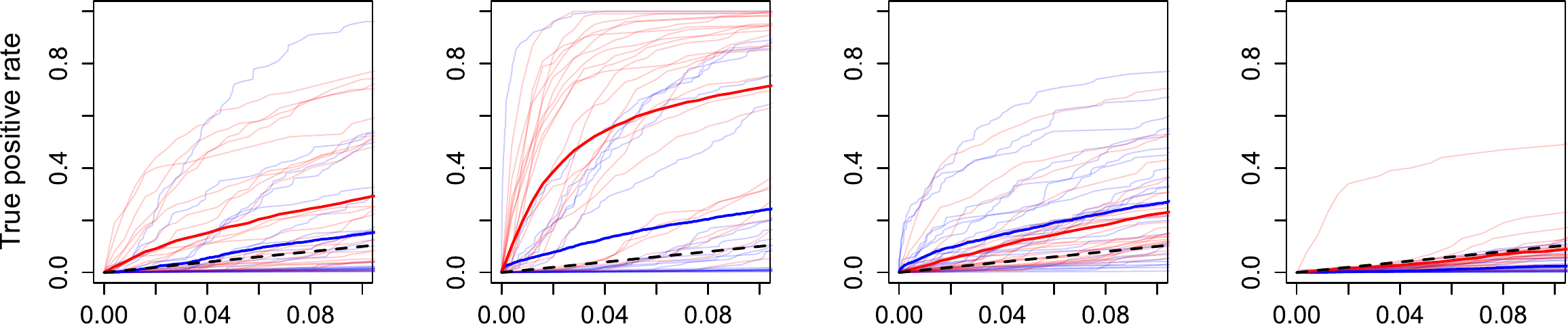}
    \caption{Testing significance of groups: the empirical distribution functions of the $p$-values from RP tests
      (red) and the debiased Lasso (blue) under the null (top row) and
      alternative (bottom row) respectively. The dashed line
        equals the 
        45 degree line 
        corresponding to the Unif$[0,1]$ distribution function, and
        the dotted curve is explained in the main text.
\label{fig:groups}}
\end{figure}
\begin{figure}
    \centering
        \includegraphics[width=\textwidth]{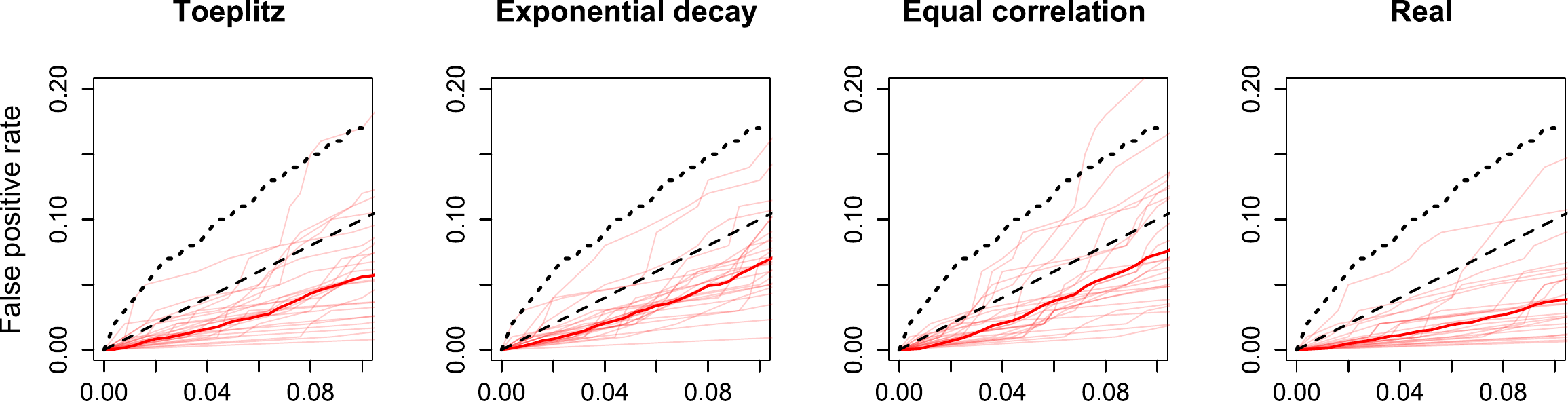}
    \vspace{0.2cm}
    
        \includegraphics[width=\textwidth]{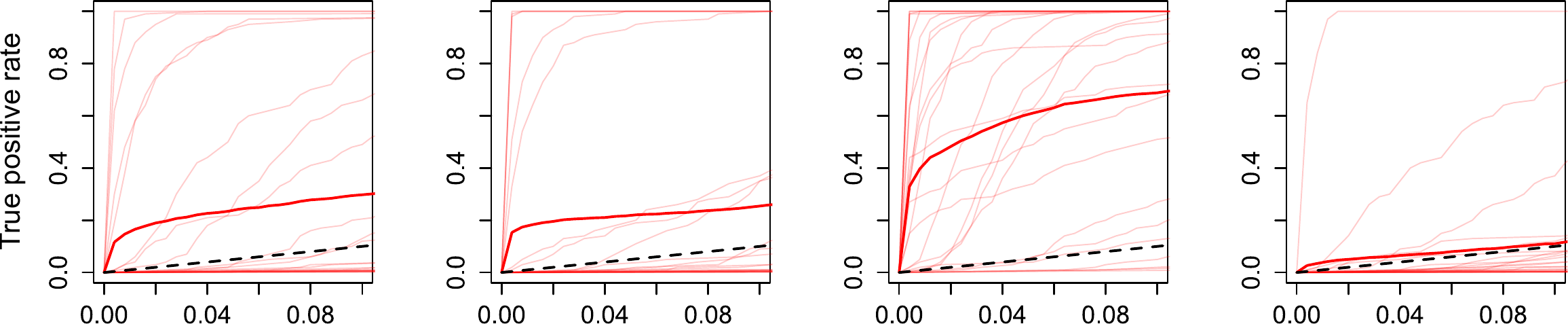}
    \caption{Testing for nonlinearity; the interpretation is similar to that of Figure~\ref{fig:groups}.\label{fig:nonlinear}}
\end{figure}
\begin{figure}
    \centering
        \includegraphics[width=\textwidth]{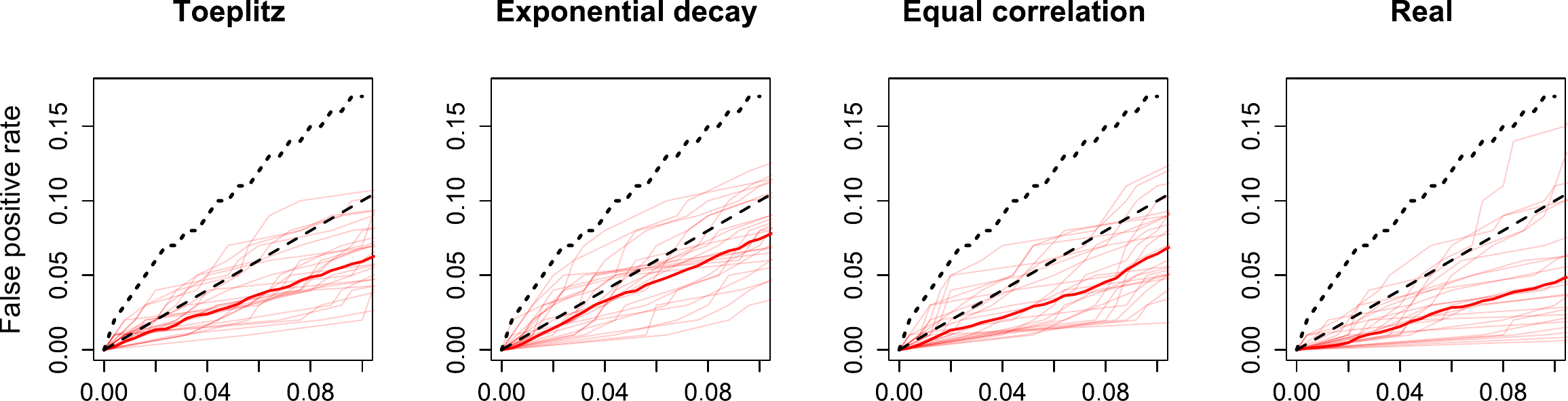}
    \vspace{0.2cm}
    
        \includegraphics[width=\textwidth]{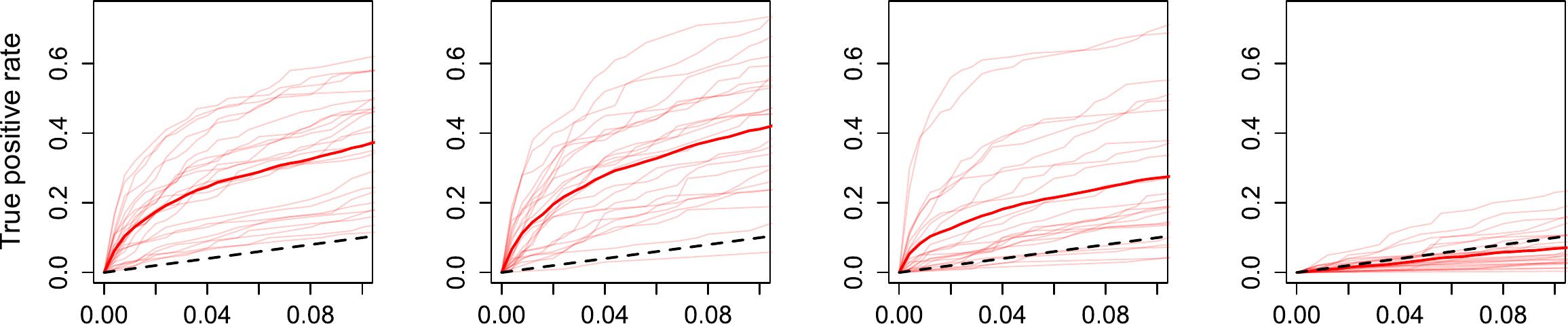}
    \caption{Testing for heteroscedasticity; the interpretation is similar to that of Figure~\ref{fig:groups}.\label{fig:hetero}}
\end{figure}
\begin{figure}
    \centering
        \includegraphics[width=\textwidth]{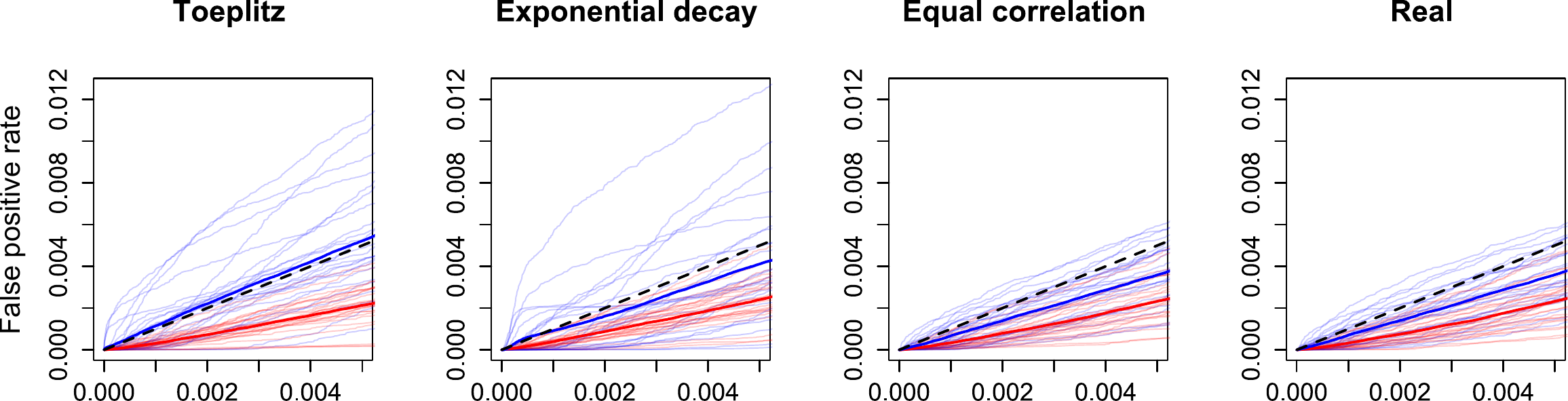}
    \vspace{0.2cm}
    
        \includegraphics[width=\textwidth]{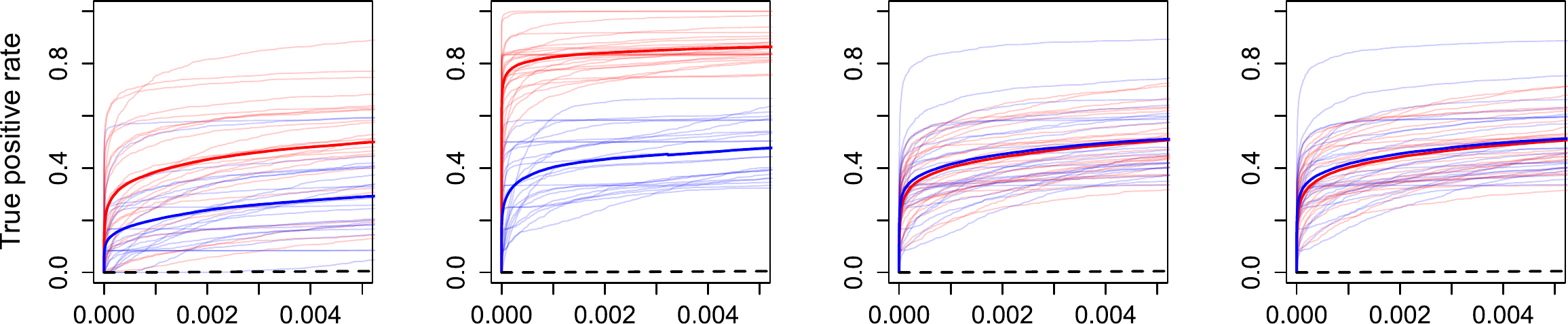}
    \caption{Testing individual variables: the plots give the proportion of $|S^c|=488$ null (top row) and $|S|=12$ true variables (bottom row) selected at various threshold levels with RP tests (red) and the debiased Lasso (blue).\label{fig:single}}
\end{figure}

\subsubsection{Groups}
We consider the problem of testing the null hypothesis $\mbb\beta_G=\mb 0$ within linear model \eqref{eq:lin_mod}. One approach is to regress each column of $\mb X_G$ on to $\mb X_{G^c}$ in turn using the square-root Lasso (c.f.\ Section~\ref{sec:single}), and consider a matrix of residuals $\tilde{\mb X} \in \R^{n \times |G|}$. We may then use Lasso regression on to $\tilde{\mb X}$ as our family of RP methods and combine the resulting test statistics as in Section~\ref{sec:Combine}.

We use this approach on our simulated data and the results are displayed in red in Figure~\ref{fig:groups}. For the null settings (top row) we took $G^c$ to be a randomly selected set of size $p/2$ containing $S$. Thus under the null, $\mbb\beta_{G^c}$ had 12 non-zero components whilst $\mbb\beta_G = \mb 0$. The alternatives, corresponding to the bottom row, also modify the signal such that $\mbb\beta_A$ is non-zero (in addition to $\mbb\beta_S$ being non-zero as was the case under the null) with coefficients generated in exactly the same way as for $\mbb\beta_S$ and $A$ being a randomly selected set of $12$ variables chosen from $G$.

The blue lines trace the empirical CDFs of $p$-values constructed using the debiased Lasso proposal of \citet{optimalconf14} and implemented in the \texttt{hdi} package \citet{Dezeure2014} for \texttt{R}. More specifically, we use the minimum of the $p$-values associated with each of the coefficients in $G$ (see Section 2.3 of \citet{optimalconf14}) as our test statistic, and calibrate this using the Westfall--Young procedure \citep{westfall93resampling} as explained in \citet{pb13}. This ensures that no power is lost due to correlations among the individual $p$-values, as would be the case with Bonferroni correction, for example. Remaining parameters were set to the defaults in the \texttt{hdi} package.

Although the sizes of the debiased Lasso-based tests averaged over the equal correlation design examples are very close to the nominal level, this is due to the several settings where the size exceeds the desired level being compensated for by other examples where the tests are more conservative. On the other hand, RP tests have slightly conservative type I error control across all the examples, and greater power among the Toeplitz and Exponential decay settings.

\subsubsection{Nonlinearity}
In order to test for nonlinearity, we consider an RP method based on Random Forest \citep{breiman01random}. We used the default settings for Random Forest as implemented by \citet{Liaw2002}, but rather than using a direct application to the residuals we apply it to the equicorrelation set: the set of variables with maximum absolute correlation with the residuals. This is invariably the set of variables selected in the initial Lasso fit, though in situations where the Lasso solution is not unique this will in general be a superset of the support of any Lasso solution. Using this smaller set of variables reduces the computational burden of a Random Forest fit, and also gives the test greater power in situations where the variables contributing to the nonlinear signal also feature in sparse linear approximations to the truth. Applying a Random Forest to the entire set of variables may have slightly greater power when this is not the case, but would have greatly diminished power in the more natural situations where this this holds.
Rather than using the RSS from the Random Forest fits as our proxy for prediction error, we use the out of bag error. This has the advantage of being more insensitive to the size of the equicorrelation set and tends to result in greater power.

To create the nonlinear signal for the alternative settings, we randomly divided $S$ into four groups of three. Each variable $x$ was transformed via a sigmoid composed with a random affine mapping as below:
\[
x \mapsto [1 + \exp\{-5(a + bx)\}]^{-1}.
\]
Here $a, b \in \mathcal{N}(0,1)$ independently. The transformed variables in each group were multiplied together, and a linear combination of these resulting products with Unif$[-1,1]$ generated coefficients formed the nonlinear component of the signal. This nonlinear signal was then scaled such that the residuals from an OLS fit to the variables in $S$ had an empirical variance of 2, and finally added to the linear signal.

The results displayed in Figure~\ref{fig:nonlinear} show that RP tests are able to deliver reasonable power in many of the settings considered, though the real design examples appear to be particularly challenging.

\subsubsection{Heteroscedasticity}
As testing for heteroscedasticity in a high-dimensional setting is rather challenging, here we increase the number of observations for the simulated design settings to $n=300$ in order to have reasonable power against the alternative. The data-generation procedure under the null was left unchanged. In order to generate vectors of variances for the alternative settings, we randomly selected 3 variables from $S$ and formed a linear combination of these variables with Unif$[-2,2]$ coefficients. A constant was then added, so the minimum component was 0.01, and finally the vector was scaled so the average of its components was 1. This vector then determined the variance of normal errors added to the signal.

To detect this heteroscedasticity, we used a family of RP methods given by Lasso regression of the absolute values of the residuals onto the equicorrelation set. The results are shown in Figure~\ref{fig:hetero}.
RP tests are able to deliver reasonable power in the the simulated design settings, but do struggle to detect the heteroscedasticity with the real design which has a lower number of observations ($n=71$). 

\subsubsection{Testing significance of individual predictors} \label{sec:apps_single}
Figure~\ref{fig:single} shows the results of using RP tests as described in Section~\ref{sec:single} to test hypotheses $H_k:\beta_k=0$. The red curves give the average proportions of false (top row) and true positives (bottom row) that would be selected given $p$-value thresholds varying along the $x$-axis. Thus for example in order to obtain the expected number of false positives selected at a given threshold, the $y$ values should be multiplied by $p-|S|=488$.
The blue curves display the same results for the debaised Lasso as implemented in the \texttt{hdi} package. The dashed 45 degree line gives the expected proportion of false positives that would be incurred by an exact test.

We see that even at the low $p$-value thresholds particularly relevant for multiple testing correction, RP tests give consistent error control whilst also delivering superior or equal power. Such error control effectively requires accurate knowledge of the extreme tails of the null distribution of the test statistics. We see here that the debiased Lasso approach is not always able to achieve this in the Toeplitz and Exponential decay settings, and indeed error control for multiple testing is rare among the currently available methods \citep{Dezeure2014}.


\section{Discussion} \label{sec:discuss}
The RP testing methodology introduced in this work allows us to treat model
checking as a prediction problem: that of fitting any (prediction) function to the scaled residuals from OLS or
  Lasso. This makes the problem of testing
goodness of fit amenable to the entire range of prediction methods that
have been developed across statistics and machine learning. We have
  investigated here RP tests for detecting significant single or groups of
  variables, heteroscedasticity, or deviations from linearity, and we
  expect that effective RP methods can also be found for testing for
  correlated errors, heterogeneity and other sorts of departures from the
  standard Gaussian linear model. Related ideas should be applicable
  to test for model misspecification in high-dimensional 
generalised linear models, for example.


\bibliographystyle{abbrvnat}
{

}
\end{cbunit}

\begin{cbunit}
\newpage
\setcounter{page}{1}
\section*{Supplementary material}
This supplementary material is organised as follows. Section~\ref{sec:power} contains results on the power of the RP tests approach for detecting nonlinearity.

In Section~\ref{sec:non-Gaussian} we discuss how the RP tests methodology can be extended to test for null hypotheses of linear models with non-Gaussian errors, and present numerical results in support of our proposed scheme. Additional numerical results to complement those of Section~\ref{sec:apps} in the main paper are presented in Section~\ref{sec:num}.
In Section~\ref{sec:Interp} we provide some brief comments on the interpretation of $p$-values derived from RP tests.
Finally the proofs of all of the results in the main paper, we well as those stated in Section~\ref{sec:power}, are collected in Section~\ref{sec:proofs}. Note that all equations numbered 1--10 are in the main paper.
\appendix
\section{Power of Lasso RP tests} \label{sec:power}
In this section we briefly discuss the power of RP tests for detecting nonlinearity. Suppose the response is generated according to
\begin{equation*}
\mb y = \mb X\mbb\beta + \mb f + \sigma\mbb\varepsilon,
\end{equation*}
where $\mbb\beta \in \R^p$ is a sparse vector with $S=\{j:\beta_j \neq 0\}$, $s=|S|$ and as before $\mbb\varepsilon \sim \mathcal{N}_n(\mb 0,  \mb I)$. The nonlinear term $\mb f$ is to be thought of as a vector of function evaluations of some nonlinear function: $f_i = f(\mb x_{i, S})$ where $f:\R^{|S|} \to \R$, though this is not assumed in the sequel.

As in Section~\ref{sec:Lasso} of the main paper, here we require that the columns of $\mb X$ have been scaled to have $\ell_2$-norm $\sqrt{n}$.
To facilitate theoretical analysis, we will assume $\check{\mbb\beta}$ is a Lasso estimate with fixed $\lambda=A_1\sqrt{2\log(p)/n}$ and $A_1>1$, rather than with the tuning parameter selected by cross-validation as in Algorithm~\ref{alg:Lasso1}.
Furthermore, we will also take this to be the tuning parameter used in the construction of the Lasso scaled residuals $\hat{\mb R} = \hat{\mb R}_\lambda(\mbb\beta, \mb f + \sigma \mbb\varepsilon)$. Let the bootstrap scaled residuals be $\hat{\mb R}^* = \hat{\mb R}_\lambda(\check{\mbb\beta}, \check{\sigma}\mbb\zeta)$.
We will also take this $\lambda$ to be the tuning parameter used in the construction of the Lasso scaled residuals $\hat{\mb R} = \hat{\mb R}_\lambda(\mbb\beta, \mb f + \sigma \mbb\varepsilon)$.
Let the bootstrap scaled residuals derived from $\check{\mbb\beta}$ and $\check{\sigma}:=\|\mb y - \mb X\check{\mbb\beta}\|_2/\sqrt{n}$ be $\hat{\mb R}^* := \hat{\mb R}_\lambda(\check{\mbb\beta}, \check{\sigma}\mbb\zeta)$ where $\mbb\zeta \in \mathcal{N}_n (\mb 0, \mb I)$.

To quantify the potential power of RP tests, we define
\[
\mbb\psi_\gamma = \argmin{\mb b \in \R^p} \{\|\mb f  -\mb X\mb b\|_2/\sqrt{n} + \gamma\| \mb b\|_1\}
\]
and let $\mb w_\gamma = \mb f - \mb X\mbb\psi_\gamma$ be the nonlinear signal $\mb f$ residualised with respect to $\mb X$.
As in Section~\ref{sec:single} we consider an asymptotic regime where $p$, $\mb X$, $\mbb\beta$, $S$ and $\mb f$ can all change as $n \to \infty$, though we suppress this in the notation. Also, as in Theorem~\ref{thm:single_pval_null}, let $\mathscr{B} = \{\mb b \in \R^p: \mb b_{S^c}=0\}$.

The result, which follows from Theorem~\ref{thm:single_pval_alt} and its proof, shows that whilst the true residuals are positively correlated with the residualised signal $\mb w_\gamma$, the bootstrap residuals are not.
\begin{cor} \label{cor:power}
Suppose $\|\mb f\|_2/\sqrt{n} \to 0$ and for some $\gamma$ we have $\sqrt{n}\gamma\|\mbb\psi_{\gamma, {S^c}}\|_1\to 0$ and $\|\mb f\|_2\gamma = o(\sqrt{\log(p)})$. Assume there exists $\xi > (A_1+1)/(A_1-1)$ with $s\gamma \sqrt{\log(p)}/\kappa^2(\xi, S)\to 0$. We have
\begin{align*}
 \sup_{\mbb\beta \in \mathscr{B}, x \in \R} \abs{\pr(\mb w_\gamma^T\hat{\mb R} / \|\mb w_\gamma\|_2 \leq x) - \Phi\Big(x - \|\mb w_\gamma\|_2 \big/\sqrt{\sigma^2 + \|\mb w_\gamma\|^2_2/n}\Big)} &\to 0, \\
 \sup_{\mbb\beta \in \mathscr{B}, x \in \R} |\pr(\mb w_\gamma^T\hat{\mb R}^* / \|\mb w_\gamma\|_2 \leq x | \mbb\varepsilon) - \Phi(x)| \inprob 0. 
\end{align*}
\end{cor}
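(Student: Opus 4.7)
The plan is to reduce Corollary~\ref{cor:power} to Theorem~\ref{thm:single_pval_alt} (and its proof) by absorbing the nonlinear term into the ``extra direction'' that is played there by $\beta_k \mb W_k$. The first step is algebraic: rewrite $\mb y = \mb X(\mbb\beta + \mbb\psi_\gamma) + \mb w_\gamma + \sigma\mbb\varepsilon$, so that $\mbb\beta + \mbb\psi_\gamma$ takes the role of $\mbb\Theta_k$ and $\mb w_\gamma$ the role of $\beta_k \mb W_k$. When $\mbb\beta \in \mathscr{B}$ we have $(\mbb\beta + \mbb\psi_\gamma)_{S^c} = \mbb\psi_{\gamma, S^c}$, so the sparsity-plus-small-tail structure required by Theorem~\ref{thm:single_pval_alt} is automatically in place with $T = S$. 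Moreover, the square-root Lasso KKT conditions for $\mbb\psi_\gamma$ give a bound on $\|\mb X^T \mb w_\gamma\|_\infty$ of the same form as the KKT bound on $\|\mb X_{-k}^T \mb W_k\|_\infty$ used in the proof of Theorem~\ref{thm:single_pval_alt}.

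Next I would match the hypotheses of Theorem~\ref{thm:single_pval_alt} with $T = S$ against those of the corollary. Since $\gamma \asymp \sqrt{\log(p)/n}$, the compatibility condition $|T|\sqrt{\log(p)^2/n}/\kappa^2(\xi, T)\to 0$ becomes the assumed $s\gamma\sqrt{\log(p)}/\kappa^2(\xi, S)\to 0$; the tail condition $\sqrt{\log(p)}\|\mbb\Theta_{k, T^c}\|_1\to 0$ becomes the assumed $\sqrt{n}\gamma\|\mbb\psi_{\gamma, S^c}\|_1\to 0$; and the signal-to-variance condition $\beta_k\|\mb W_k\|_2/\sqrt{n}\to 0$ becomes $\|\mb w_\gamma\|_2/\sqrt{n}\to 0$, which follows from $\|\mb f\|_2/\sqrt{n}\to 0$ because the square-root Lasso objective evaluated at $\mbb\psi_\gamma$ is no larger than at $\mb 0$, yielding $\|\mb w_\gamma\|_2 \le \|\mb f\|_2$.

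With the identifications and conditions in place, the proof of Theorem~\ref{thm:single_pval_alt} transfers. The numerator decomposes as
\begin{equation*}
\mb w_\gamma^T(\mb y - \mb X \check{\mbb\beta}) = \|\mb w_\gamma\|_2^2 + \sigma\, \mb w_\gamma^T \mbb\varepsilon - \mb w_\gamma^T \mb X\bigl(\check{\mbb\beta} - \mbb\beta - \mbb\psi_\gamma\bigr),
\end{equation*}
with signal, Gaussian fluctuation, and bias terms respectively; the bias is bounded by H\"older's inequality, the KKT bound on $\mb X^T \mb w_\gamma$, and an oracle $\ell_1$-bound for $\check{\mbb\beta} - \mbb\beta - \mbb\psi_\gamma$ obtained from the compatibility condition. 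The denominator $\|\mb y - \mb X\check{\mbb\beta}\|_2/\sqrt{n}$ concentrates around $\sqrt{\sigma^2 + \|\mb w_\gamma\|_2^2/n}$, giving the stated non-centrality. For the bootstrap statistic the simulated response $\mb X\check{\mbb\beta} + \check{\sigma}\mbb\zeta$ has no nonlinear component, so the signal term disappears and the identical analysis yields the centred normal limit. Uniformity over $\mbb\beta\in\mathscr{B}$ is preserved because all intermediate bounds depend on $\mbb\beta$ only through $S$.

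The main obstacle, and the reason for the extra hypothesis $\|\mb f\|_2\gamma = o(\sqrt{\log(p)})$, is that $\mb w_\gamma$ is not $\beta_k$ times a single column residual but an arbitrary vector built from the nonlinear signal $\mb f$. The oracle inequality for $\check{\mbb\beta}$ must therefore absorb an additional misspecification error driven by $\mb w_\gamma$ itself, and pushing this bound cleanly through the numerator/denominator analysis is what forces the stated growth restrictions on $\mb f$ and $\gamma$. Once these are in hand, the remaining steps are routine bookkeeping.
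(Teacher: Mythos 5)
Your proposal is correct and takes essentially the same route as the paper, whose entire proof of Corollary~\ref{cor:power} is precisely the substitution you describe: run the proof of Theorem~\ref{thm:single_pval_alt} with $\beta_k \mb W_k$ replaced by $\mb w_\gamma$ and $\mbb\Theta_k$ replaced by $\mbb\beta + \mbb\psi_\gamma$, and your spelled-out dictionary (KKT bound for $\mbb\psi_\gamma$, $\|\mb w_\gamma\|_2 \leq \|\mb f\|_2$, the role of $\|\mb f\|_2\gamma = o(\sqrt{\log(p)})$ in keeping the effective noise below the oracle-inequality threshold) is exactly what that substitution amounts to. One small remark: the corollary does not assume $\gamma \asymp \sqrt{\log(p)/n}$ (indeed Theorem~\ref{thm:interactions} later takes $\gamma \asymp \sqrt{\log(p)}/n^{1/3}$), but since its hypotheses are stated directly in terms of $\gamma$ your matching of conditions goes through unchanged.
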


An interesting application of the result above is quantification of the power to detect interactions, as we now discuss.
 Consider a random design setting where $\mb X$ is a scaled version
 of a
 matrix $\mb Z$ whose rows $\mb z_i$ are independent with $\mb z_i \sim
 \mathcal{N}_p(\mb 0, \mbb\Sigma)$ and $\Sigma_{jj}=1$ for all $j$. That is we have $\mb X_k = \sqrt{n} \mb Z_k / \|\mb Z_k\|_2$. Let
 $f_i = \mb z_{i,S}^T \mbb \Theta \mb z_{i,S}$ where without loss of
 generality, $\mbb\Theta \in \R^{s\times s}$ is a symmetric matrix. Thus
 the nonlinear component of the signal is a quadratic function of the
 variables in $S$. As before, we will consider asymptotics where $n, p \to \infty$ and now also $\mbb\Sigma \in \R^{p \times p}$ will change as $p \to \infty$, though we suppress this in the notation.
 We will assume a restricted eigenvalue-type condition on
the sequence of covariance matrices $\mbb\Sigma$:
let
\begin{equation*}
\phi_0 (\xi) = \inf\bigg\{\frac{\|\mbb\Sigma^{1/2}\mb u\|_2}{\|\mb u \|_2} : \mb u \in \mathscr{C}(\xi, S) \bigg\}
\end{equation*}
and assume that for $\xi > (A_1+1)/(A_1-1)$, we have $\phi_0(\xi)>\phi>0$ as $n \to \infty$. Note that this is weaker than assuming the minimum eigenvalue of $\mbb\Sigma$ is bounded away from zero, for example.
 
 \begin{thm} \label{thm:interactions}
 Suppose $n^{1/3} \E(f_1^2) \to 0$
  and $s\log(p)/n^{1/3} \to 0$. We have
\begin{align*}
 \sup_{\mbb\beta \in \mathscr{B}, x \in \R} \abs{\pr(\mb f^T\hat{\mb R} / \|\mb f\|_2 \leq x | \mb Z) - \Phi\big(x - \|\mb f\|_2 /\sqrt{\sigma^2 + \|\mb f\|^2_2/n}\big)} &\inprob 0, \\
 \sup_{\mbb\beta \in \mathscr{B}, x \in \R} |\pr(\mb f^T\hat{\mb R}^* / \|\mb f\|_2 \leq x | \mbb\varepsilon, \mb Z) - \Phi(x)| \inprob 0. 
\end{align*}
 \end{thm}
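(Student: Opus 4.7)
The plan is to apply Corollary~\ref{cor:power} conditionally on $\mb Z$, on a sequence of events of probability tending to one, with a choice of $\gamma$ that forces the penalised population regression of $\mb f$ on $\mb X$ to vanish. Concretely, I would take $\gamma \asymp \sqrt{\E(f_1^2)\log(p)/n}$ and verify that $\mbb\psi_\gamma = \mb 0$ with high probability, whence $\mb w_\gamma = \mb f$ and the conclusion of Corollary~\ref{cor:power} becomes exactly the statement of the theorem. The passage from Corollary~\ref{cor:power}, which is stated for a deterministic design satisfying its hypotheses along $n \to \infty$, to the conditional-in-probability version over the random design $\mb Z$ can be handled by a standard subsequence argument: for any subsequence, pass to a further subsequence along which all the design-dependent hypotheses hold almost surely, and apply the corollary.

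The key step is bounding $\|\mb X^T \mb f/n\|_\infty$, since $\mbb\psi_\gamma = \mb 0$ satisfies the KKT conditions whenever this quantity is at most $\gamma$. The crucial observation is that $\E[z_{ik} f_i] = 0$ for every $k$: writing $z_{ik} = \mbb\alpha_k^T \mb z_{i,S} + u_{ik}$ with $u_{ik}$ independent of $\mb z_{i,S}$, the expectation $\E[z_{ik}\,\mb z_{i,S}^T \mbb\Theta \mb z_{i,S}]$ reduces to a third-order moment of a centred Gaussian vector, which vanishes. The products $z_{ik} f_i$ are therefore centred and sub-exponential with norm of order $\sqrt{\E(f_1^2)}$, so Bernstein's inequality and a union bound over $k \in \{1,\ldots,p\}$ give $\|\mb Z^T \mb f/n\|_\infty \lesssim \sqrt{\E(f_1^2)\log(p)/n}$ with probability tending to one. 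Since $\|\mb Z_k\|_2/\sqrt{n} \to 1$ uniformly in $k$ under $\log(p)/n \to 0$ and $\mb X_k = \sqrt{n}\,\mb Z_k/\|\mb Z_k\|_2$, the same bound carries over to $\|\mb X^T \mb f/n\|_\infty$, yielding the high-probability event on which $\mbb\psi_\gamma = \mb 0$.

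It then remains to check the other hypotheses of Corollary~\ref{cor:power}. The bound $\|\mb f\|_2/\sqrt{n} \to 0$ follows from the law of large numbers and $\E(f_1^2) \to 0$ (implied by $n^{1/3}\E(f_1^2) \to 0$); the condition $\sqrt{n}\gamma\|\mbb\psi_{\gamma,S^c}\|_1 \to 0$ is trivial given $\mbb\psi_\gamma = \mb 0$; and $\|\mb f\|_2 \gamma \asymp \sqrt{n\E(f_1^2)}\cdot \sqrt{\E(f_1^2)\log(p)/n} = \E(f_1^2)\sqrt{\log(p)} = o(\sqrt{\log p})$. The compatibility factor $\kappa^2(\xi, S)$ is bounded below by $\phi^2/2$ with probability tending to one by combining the hypothesis $\phi_0(\xi) > \phi > 0$ with standard concentration of the sample Gram matrix on sparse cones (valid under $s\log(p)/n \to 0$). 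The final arithmetic condition reduces to $s\gamma\sqrt{\log(p)} \to 0$; using $\E(f_1^2) = o(n^{-1/3})$ and $s\log(p) = o(n^{1/3})$ this becomes $s\sqrt{\E(f_1^2)}\log(p)/\sqrt{n} = o(n^{-1/3})$, which goes to zero.

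The main obstacle is Step~2: the vanishing of $\E[z_{ik} f_i]$, which exploits the odd-in-$\mb z_{i,S}$ structure of $z_{ik} f_i$ after the decomposition, is precisely what allows $\gamma$ to be small enough to simultaneously meet the size requirement $\|\mb f\|_2\gamma = o(\sqrt{\log p})$ and the sparsity requirement $s\gamma\sqrt{\log p} \to 0$ of Corollary~\ref{cor:power}. Without this cancellation, $\mbb\psi_\gamma$ would carry a nonzero bias from the population regression and the argument would fail.
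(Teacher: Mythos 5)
Your overall strategy is the same as the paper's: pick $\gamma$ so that the square-root Lasso regression of $\mb f$ on $\mb X$ returns $\mbb\psi_\gamma=\mb 0$, hence $\mb w_\gamma=\mb f$, then run the argument behind Corollary~\ref{cor:power} conditionally on the design, with the restricted eigenvalue handled by Gaussian-design concentration and with the key cancellation $\E(z_{ik}f_i)=0$ coming from vanishing third moments of centred Gaussians (the paper's Lemma~\ref{lem:inter_bound} uses exactly this). However, two of your steps fail as stated. First, $z_{ik}f_i$ is \emph{not} sub-exponential: it is a Gaussian times a Gaussian quadratic form (in the simplest case $f_i=z_{i1}^2$, $k=1$ gives a cubed Gaussian), whose tails decay like $\exp(-ct^{2/3})$, so Bernstein's inequality for sub-exponential sums does not apply. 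The paper instead verifies $\E\exp(|f_1 Z_{1j}|^{2/3}/4)<\infty$ and applies Lemma~\ref{lem:tail_bd} with $\alpha=2/3$, which yields only $\|\mb X^T\mb f\|_\infty/(\sqrt{n}\|\mb f\|_2)\leq c_1\sqrt{\log(p)}/n^{1/3}$ (see \eqref{eq:lim2}), a factor $n^{1/6}$ weaker than your claimed rate; this weaker rate is precisely why the hypotheses of Theorem~\ref{thm:interactions} and the paper's choice $\gamma=c_1\sqrt{\log(p)}/n^{1/3}$ involve $n^{1/3}$. (Your sharper rate could be recovered from a sub-Weibull/Adamczak-type inequality, since $\log p=o(n^{1/3})$ makes the heavy-tail correction lower order, but not from the tool you invoke.)

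Second, your criterion for $\mbb\psi_\gamma=\mb 0$ is the KKT condition of the ordinary Lasso, whereas $\mbb\psi_\gamma$ in Corollary~\ref{cor:power} is a square-root Lasso: the zero solution requires $\|\mb X^T\mb f\|_\infty/(\sqrt{n}\,\|\mb f\|_2)\leq\gamma$, not $\|\mb X^T\mb f\|_\infty/n\leq\gamma$. Since $\|\mb f\|_2/\sqrt{n}\approx\sqrt{\E(f_1^2)}\to 0$, the division by $\|\mb f\|_2$ cancels the $\sqrt{\E(f_1^2)}$ factor in your bound, so even granting your concentration claim you need $\gamma\gtrsim\sqrt{\log(p)/n}$; your choice $\gamma\asymp\sqrt{\E(f_1^2)\log(p)/n}$ is smaller by the vanishing factor $\sqrt{\E(f_1^2)}$ and does not force $\mbb\psi_\gamma=\mb 0$. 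The slip is repairable: with $\gamma\asymp\sqrt{\log(p)/n}$ (or the paper's $\sqrt{\log(p)}/n^{1/3}$) the remaining hypotheses still hold, e.g.\ $\|\mb f\|_2\gamma\approx\sqrt{\E(f_1^2)\log p}=o(\sqrt{\log p})$ and $s\gamma\sqrt{\log p}\to 0$ under $s\log(p)/n^{1/3}\to 0$. A final remark: the paper does not literally cite Corollary~\ref{cor:power} but reruns the proof of Theorem~\ref{thm:single_pval_alt} uniformly over design events $\Lambda_n$, replacing the $\sqrt{n}$-scaling step \eqref{eq:sigma_tilde} by an $n^{1/3}$-scaling that uses $\|\mb f\|_2=o(n^{1/3})$; your subsequence device is a reasonable substitute for that uniformity bookkeeping, but the two points above are genuine gaps in the argument as written.
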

Note that the theorem allows for $\E(\|\mb f\|_2^2) = n\E(f_1^2) \to \infty$, though we do need $\E(f_1^2) \to 0$.
 We see that the nonlinear signal is positively correlated with the true Lasso residuals, but not with the bootstrap residuals. Thus the nonlinear signal is present in the true Lasso residuals, and in principle can be detected by a suitable RP method.
 



\section{Non-Gaussian errors} \label{sec:non-Gaussian}
Although the null hypothesis that the Gaussian linear model
\eqref{eq:lin_mod} is correct is often of interest, one may wish to
consider a larger null hypothesis that allows for non-Gaussian
errors. Theorem~\ref{thm:maximise_pval} cannot easily be extended to this
setting as it allows for arbitrary (collections of) RP functions to be
used, including those that might directly test for normality. We do not
pursue this further here but note that knowing errors are non-Gaussian can
be helpful for designing a different objective function to use with
$\ell_1$ penalisation that may be more efficient for estimation. We also
note that one could in principle extend the results of
Theorems~\ref{thm:single_pval_null} and \ref{thm:single_pval_alt} to allow
for non-Gaussian error distributions under the null. The result for a
single variable follows via the central limit theorem, but the uniformity
of variables in $S^c$ requires the deep results of
\citet{Chernozhukov2014}. 

Nevertheless, it is desirable that a test for e.g.\ nonlinearity should not
reject more often when a sparse linear model with non-normal errors
holds. When non-Gaussian errors must be included in the null hypothesis, we
recommend taking the simulated errors $\mbb\zeta^{(b)}$ to be a sample
with replacement from the original scaled residuals $\hat{\mb R}$. 

Figures~\ref{fig:groups_exp} and \ref{fig:nonlinear_exp} are identical to Figures~\ref{fig:groups} and \ref{fig:nonlinear} but with exponential errors rather than Gaussian errors used in all simulations. Similarly Figures~\ref{fig:groups_t} and \ref{fig:nonlinear_t} use $t$-distributed errors with 3 degrees of freedom scaled to have variance 1. We use the nonparametric bootstrap approach described above. We see that type I error is very well controlled for RP tests across all the settings with the power also competitive.

\begin{figure}
    \centering
        \includegraphics[width=\textwidth]{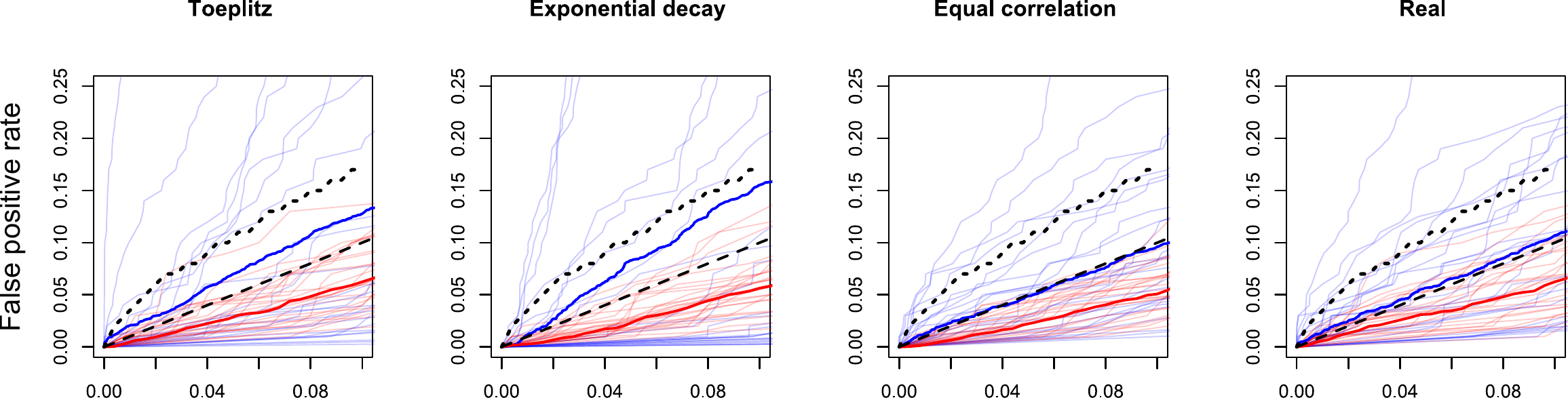}
    \vspace{0.2cm}
    
        \includegraphics[width=\textwidth]{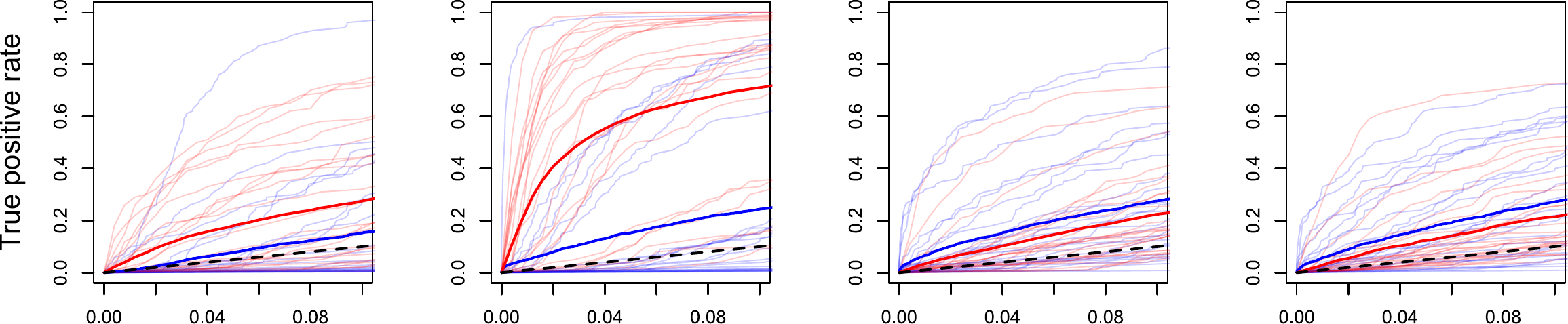}
    \caption{Testing significance of groups with exponential errors; the interpretation is similar to that of Figure~\ref{fig:groups}.
\label{fig:groups_exp}}
\end{figure}

\begin{figure}
    \centering
        \includegraphics[width=\textwidth]{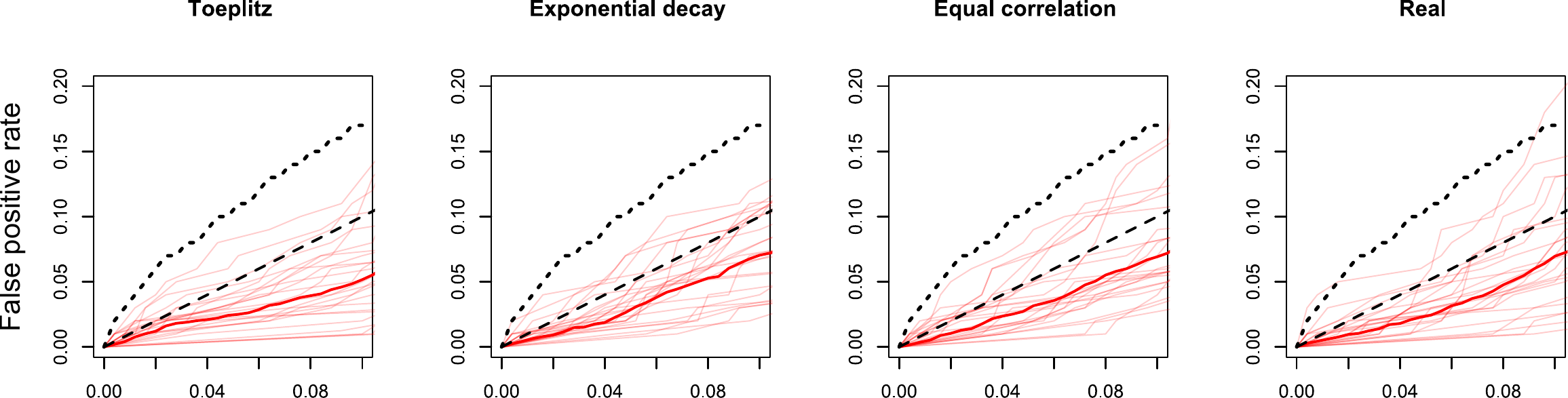}
    \vspace{0.2cm}
    
        \includegraphics[width=\textwidth]{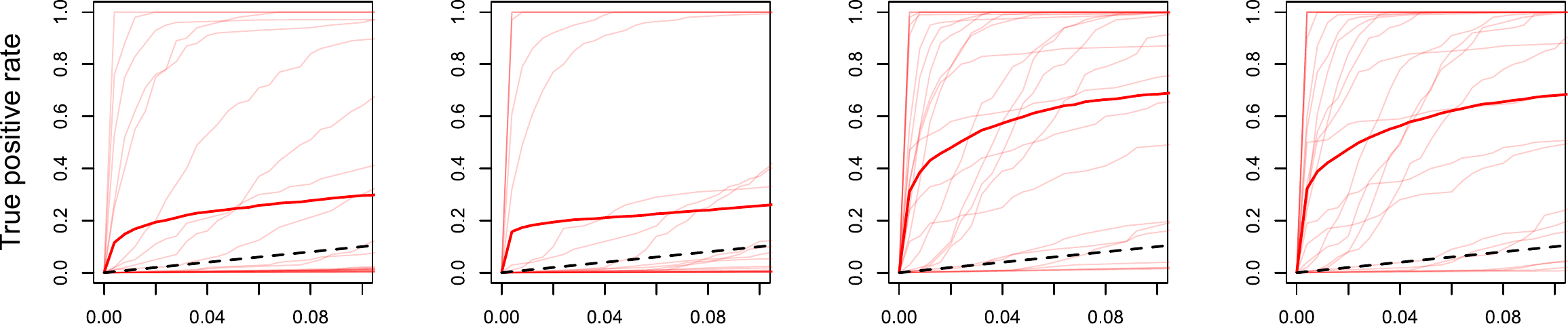}
    \caption{Testing for nonlinearity with exponential errors; the interpretation is similar to that of Figure~\ref{fig:nonlinear}.\label{fig:nonlinear_exp}}
\end{figure}

\begin{figure}
    \centering
        \includegraphics[width=\textwidth]{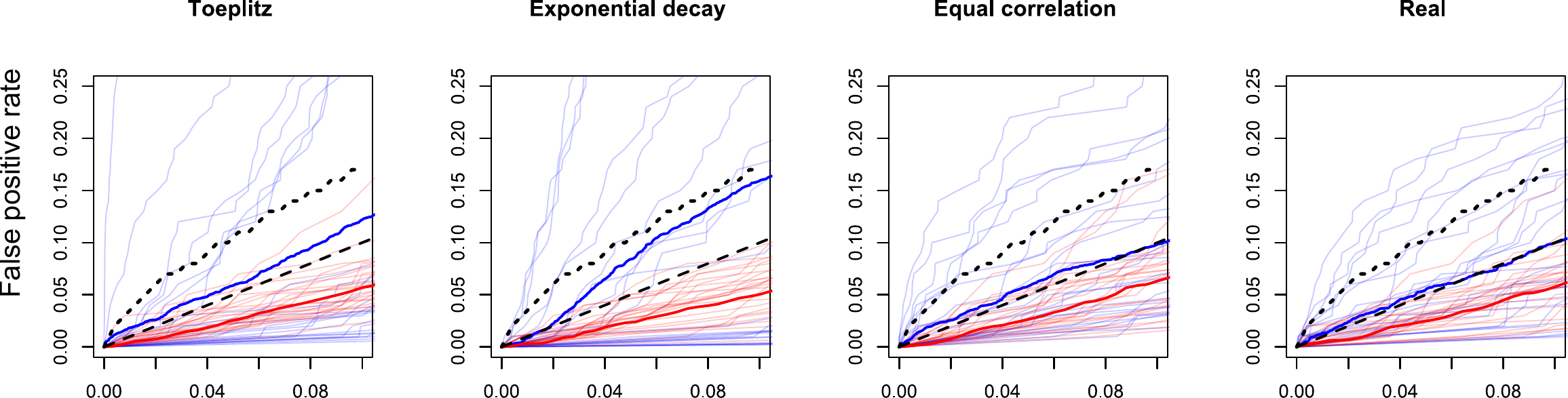}
    \vspace{0.2cm}
    
\includegraphics[width=\textwidth]{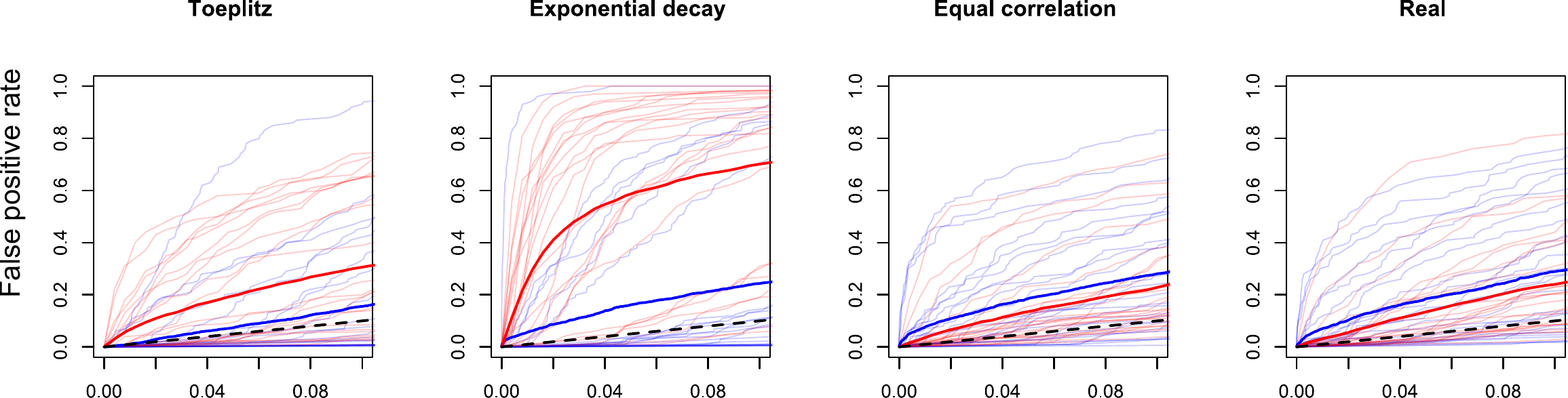}
    \caption{Testing significance of groups with $t$-distributed errors with 3 degrees of freedom; the interpretation is similar to that of Figure~\ref{fig:groups}.
\label{fig:groups_t}}
\end{figure}

\begin{figure}
    \centering
        \includegraphics[width=\textwidth]{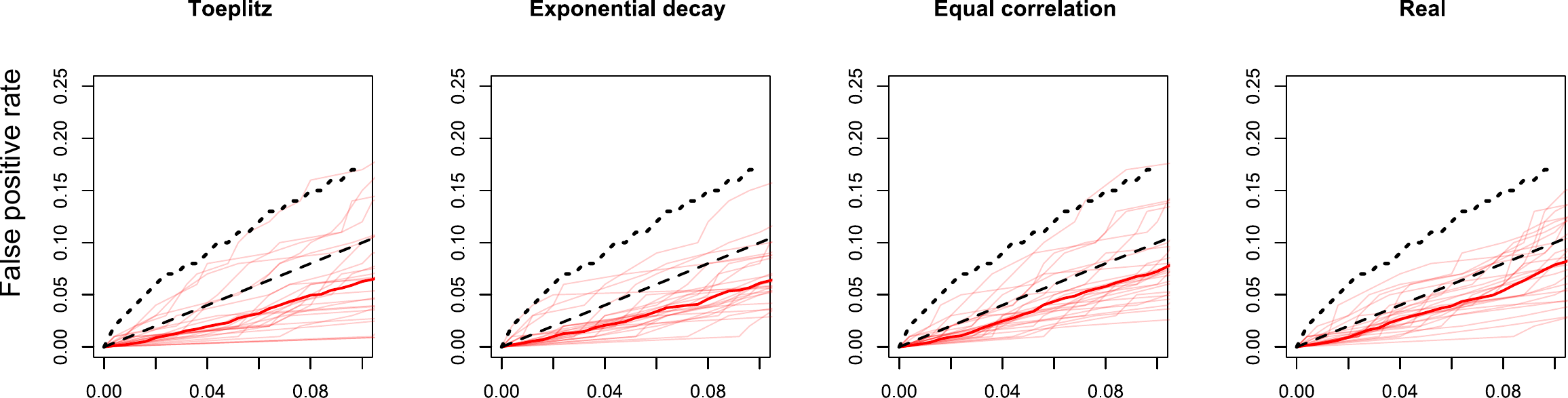}
    \vspace{0.2cm}
    
        \includegraphics[width=\textwidth]{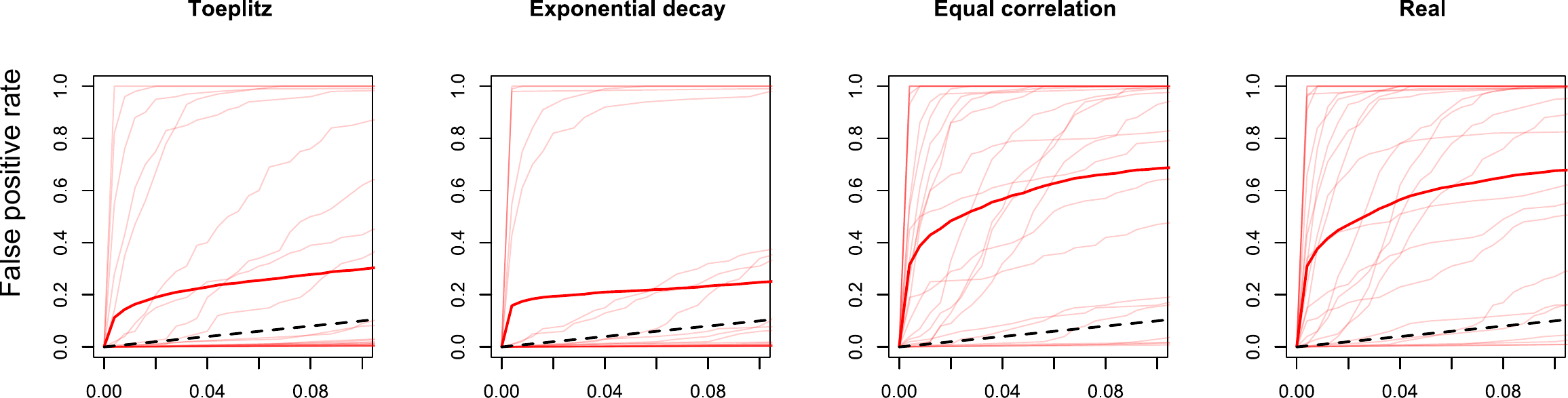}
    \caption{Testing for nonlinearity with $t$-distributed errors with 3 degrees of freedom; the interpretation is similar to that of Figure~\ref{fig:nonlinear}.\label{fig:nonlinear_t}}
\end{figure}

\section{Additional numerical results} \label{sec:num}
In this section we present additional numerical results of the same format as those in Section~\ref{sec:apps} in the main paper, but where the nonzero coefficients given active variables $S=\{j_1,\ldots,j_{12}\}$ are chosen as follows:
\begin{equation} \label{eq:coef_decay}
\beta_{j_k} \propto \frac{1}{\sqrt{k}}.
\end{equation}
The coefficients are then scaled to have an $\ell_1$-norm of 12.
Modulo these modifications, Figures~\ref{fig:groups_decay} and \ref{fig:nonlinear_decay} are exactly analogous to Figures~\ref{fig:groups} and \ref{fig:nonlinear} respectively.

We see the results are very much in line with those of Section~\ref{sec:apps}. Note that the reduced power of RP tests compared to the debiased Lasso in the equal correlation and real design settings of Figure~\ref{fig:groups_decay} is due to the poor calibration of the debiased approach, which here tends to greatly exceeds its nominal level. 
\begin{figure}
    \centering
        \includegraphics[width=\textwidth]{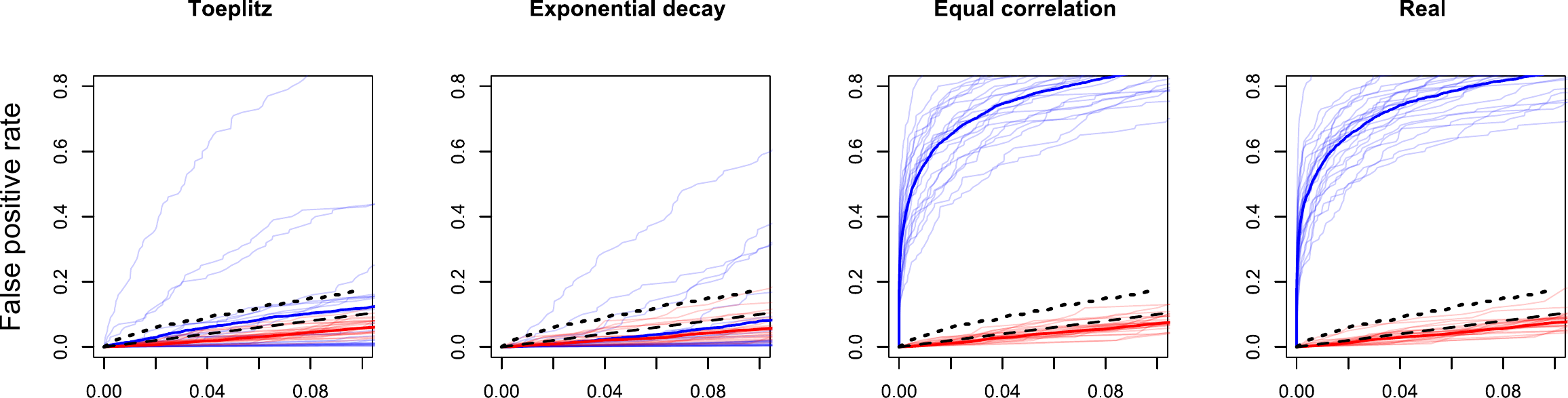}
    \vspace{0.2cm}
    
        \includegraphics[width=\textwidth]{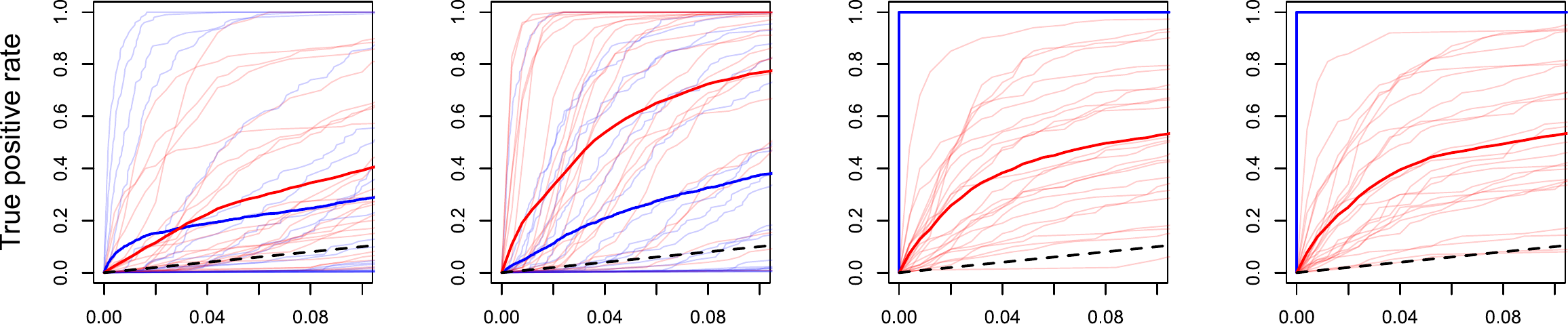}
    \caption{Testing significance of groups when coefficients are chosen according to \eqref{eq:coef_decay}; the interpretation is similar to that of Figure~\ref{fig:groups}.
\label{fig:groups_decay}}
\end{figure}
\begin{figure}
    \centering
        \includegraphics[width=\textwidth]{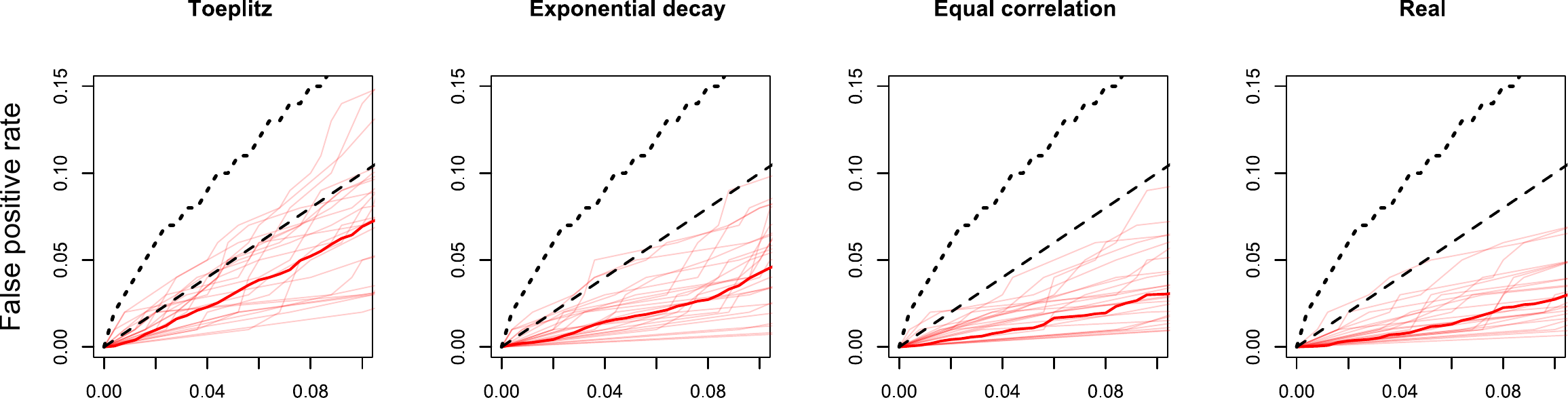}
    \vspace{0.2cm}
    
        \includegraphics[width=\textwidth]{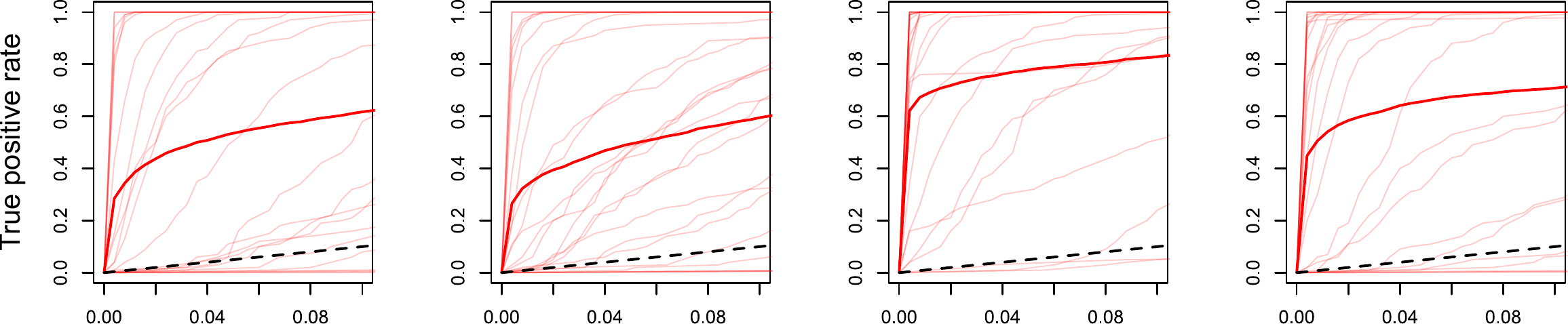}
    \caption{Testing for nonlinearity; the interpretation is similar to that of Figure~\ref{fig:nonlinear}.\label{fig:nonlinear_decay}}
\end{figure}

\section{Uses of RP tests and interpretation of $p$-values} \label{sec:Interp}
One use of RP tests is as a form of reassurance that Lasso-based
inference is safe to use on the data at hand: a large $p$-value indicates a
lack of evidence for the Lasso having poor performance.
A small value on the other hand suggests a sparse linear model is inappropriate.

As we explain in this work however, the RP testing framework is general enough to include tests for significance of groups or individual predictors. It is important to note though that as with all $p$-values, low values in these cases can only indicate inadequacy of
the null models considered;
extra assumptions are required to draw conclusions, such as that a variable is significant,
on the basis of having observed a low $p$-value.

%

\section{Proofs} \label{sec:proofs}
In the proofs that follow, we will let $c_1,\ldots,c_4 \geq 0$ denote constants, which may change from line to line.
\subsection{Proof of Theorem~\ref{thm:lambda_fixed}}
In the following we suppress the dependence of $\hat{\mbb\beta}_\lambda(\mbb\beta,\sigma\mbb\zeta)$ on $\lambda$ for notational simplicity.
  We know that every Lasso solution $\hat{\mbb\beta} (\mbb\beta, \sigma\mbb \zeta)$ is characterised by the KKT conditions
  \[
   \frac{1}{\sqrt{n}}\frac{\mb X^T [\mb X\{\mbb\beta - \hat{\mbb\beta} (\mbb\beta, \sigma\mbb \zeta)\} + \sigma\mbb \zeta]}{\|\mb X\{\mbb\beta - \hat{\mbb\beta} (\mbb\beta, \sigma\mbb \zeta)\} + \sigma\mbb \zeta\|_2} = \lambda \hat{\mbb\nu},
  \]
 where $ \norms{\hat{\mbb\nu}}_\infty \leq 1$ and $\hat{\mbb\nu}_{\hat{S}} = \sgn(\hat{\mbb\beta}_{\hat{S}}(\mbb\beta, \sigma\mbb \zeta))$ with $\hat{S}= \{j:\hat{\beta}_j(\mbb\beta, \sigma\mbb \zeta) \neq 0\}$.
 
 Now picking a particular Lasso solution $\hat{\mbb\beta}(\mbb\beta, \sigma\mbb \zeta)$ in the case where it is not unique, let 
 \[
 \tilde{\mbb\beta}(\mbb \zeta) = \check{\mbb\beta} + \frac{\check{\sigma}}{\sigma}\{\hat{\mbb\beta}(\mbb\beta, \sigma\mbb \zeta) - \mbb\beta\}. 
 \]
 Note that when $\mbb \zeta \in \Lambda_{\lambda,t}$, the upper bound on $\check{\sigma}$
 ensures that we have $\sgn(\tilde{\mbb\beta}(\mbb \zeta)) = \sgn(\hat{\mbb\beta}(\mbb\beta, \sigma\mbb \zeta))$. Next observe that
 \begin{equation} \label{eq:res_eq}
  \mb X\{\check{\mbb\beta} - \tilde{\mbb\beta}(\mbb \zeta)\} + \check{\sigma} \mbb \zeta = -\frac{\check{\sigma}}{\sigma}\mb X\{\hat{\mbb\beta}(\mbb\beta, \sigma\mbb \zeta) - \mbb\beta\} +\check{\sigma}\mbb \zeta = \frac{\check{\sigma}}{\sigma}[\mb X\{\mbb\beta - \hat{\mbb\beta}(\mbb\beta, \sigma\mbb \zeta)\} +\sigma\mbb \zeta],
 \end{equation}
 so
  \[
  \frac{1}{\sqrt{n}} \frac{\mb X^T [\mb X\{\check{\mbb\beta} - \tilde{\mbb\beta} (\mbb \zeta)\} + \check{\sigma}\mbb \zeta]}{\|\mb X\{\check{\mbb\beta} - \tilde{\mbb\beta} (\mbb \zeta)\} + \check{\sigma}\mbb \zeta\|_2} = \frac{1}{\sqrt{n}}\frac{\mb X^T [\mb X\{\mbb\beta - \hat{\mbb\beta} (\mbb\beta, \sigma\mbb \zeta)\} + \sigma\mbb \zeta]}{\|\mb X\{\mbb\beta - \hat{\mbb\beta} (\mbb\beta, \sigma\mbb \zeta)\} + \sigma\mbb \zeta\|_2}= \lambda \hat{\mbb\nu}(\mbb\beta, \mbb \zeta).
  \]
  But this shows that $\tilde{\mbb\beta}(\mbb \zeta)$ satisfies the KKT conditions for $\hat{\mbb\beta}(\check{\mbb\beta} , \check{\sigma}\mbb \zeta)$. Since Lasso fitted values are unique, we must have $\mb X \tilde{\mbb\beta}(\mbb \zeta) = \mb X \hat{\mbb\beta}(\check{\mbb\beta} , \check{\sigma}\mbb \zeta)$. Now substituting into \eqref{eq:res_eq} finally shows that $\hat{\mb R}_\lambda(\check{\mbb\beta}, \check{\sigma}\mbb \zeta)=\hat{\mb R}_\lambda(\mbb \beta, \sigma\mbb \zeta)$ as required. 

\subsection{Results from \citet{Sun2012}}
The proofs of Theorems~\ref{thm:maximise_pval}--\ref{thm:interactions}
make use of Theorem 2 and Corollary 1 in \citet{Sun2012}. We re-state a
subset of these results here for convenience. We have modified the notation
in \citet{Sun2012} in order to avoid clashes with our own
notation. Furthermore, we have replaced the sign-restricted cone invertibility factor $F_2(\xi, S)$ (equation 21 in \citet{Sun2012}) with its lower bound $F_2(\xi, S) \geq \phi^2(\xi)/(1+\xi)$ \citep{Zhang2012}.

Consider the linear model setup of \eqref{eq:lin_mod} though without any assumptions on the distribution of $\mbb\varepsilon$ initially. For $\xi>1$ and $\lambda>0$, define
\begin{equation} \label{eq:mu_def}
\mu(\lambda, \xi, \mbb\beta, \mb X)=(\xi+1)\min_T\inf_{0<\nu<1}\max \bigg[ \frac{\|\mbb\beta_{T^c}\|_1}{\nu}, \, \frac{\lambda|T|/\{2(1-\nu)\}}{\kappa^2\{(\xi+\nu)/(1-\nu), T, \mb X\}} \bigg],
\end{equation}
where the minimum is over $T \subset \{1,\ldots,p\}$ and $T^c=\{1,\ldots,p\}\setminus T$.
Further let $\tilde{\sigma}=\|\mb y-\mb X\mbb\beta\|_2/\sqrt{n}$ and $\tau^2=\tau^2(\tilde{\sigma}, \lambda, \xi, \mbb\beta, \mb X)=\lambda \mu(\tilde{\sigma}\lambda, \xi, \mbb\beta, \mb X)/\tilde{\sigma}$.
Writing $\hat{\mbb\beta}=\hat{\mbb\beta}_{\lambda}(\mbb\beta, \sigma\mbb\varepsilon)$, define $\hat{\sigma}=\|\mb y - \mb X\hat{\mbb\beta}\|_2/\sqrt{n}$.
\begin{thm}[Theorem 2 and Corollary 1 in \citet{Sun2012}] \label{thm:Sun}
Let
\begin{equation} \label{eq:Lambda_Sun}
\Lambda_1 = \bigg\{\mbb\zeta \in \R^n :\frac{\|\mb X^T \mbb\zeta\|_\infty}{\sqrt{n}\|\mbb\zeta\|_2} \leq (1-\tau^2)\lambda\frac{\xi-1}{\xi+1} \bigg\}.
\end{equation}
When $\mbb\varepsilon \in \Lambda_1$,
\begin{align*}
\max(1-\hat{\sigma}/\tilde{\sigma}, 1-\tilde{\sigma}/\hat{\sigma}) &\leq \tau^2, \\
\|\hat{\mbb\beta}-\mbb\beta\|_1 &\leq \mu(\tilde{\sigma}\lambda, \xi, \mbb\beta, \mb X)/(1-\tau^2) \\
\|\hat{\mbb\beta}-\mbb\beta\|_2 &\leq \frac{2\tilde{\sigma}\xi \sqrt{s}\lambda}{(1-\tau^2)\phi^2(\xi)}.
\end{align*}
\end{thm}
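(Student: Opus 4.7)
The plan is to follow the classical basic-inequality, cone, compatibility template, adapted to the square-root Lasso parametrisation. Let $\mb h = \hat{\mbb\beta} - \mbb\beta$ and $\tilde{\mbb\varepsilon} = \mb y - \mb X\mbb\beta$, so $\hat{\mbb\varepsilon} := \mb y - \mb X\hat{\mbb\beta} = \tilde{\mbb\varepsilon} - \mb X\mb h$ and $\tilde\sigma = \|\tilde{\mbb\varepsilon}\|_2/\sqrt n$. Two inequalities drive everything. First, optimality of $\hat{\mbb\beta}$ in \eqref{eq:sqrt_lasso} gives the basic inequality $\hat\sigma + \lambda\|\hat{\mbb\beta}\|_1 \leq \tilde\sigma + \lambda\|\mbb\beta\|_1$. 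Second, convexity of $\mb u \mapsto \|\mb u\|_2$ at the point $\tilde{\mbb\varepsilon}$ yields the subgradient inequality $\sqrt n\,\hat\sigma \geq \sqrt n\,\tilde\sigma - \langle \mb X\mb h, \tilde{\mbb\varepsilon}\rangle/(\sqrt n\,\tilde\sigma)$, and a single application of Hölder leaves $\tilde\sigma - \hat\sigma \leq \|\mb X^T\tilde{\mbb\varepsilon}\|_\infty\|\mb h\|_1/(n\tilde\sigma)$.

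The event $\mbb\varepsilon \in \Lambda_1$ defined in \eqref{eq:Lambda_Sun} is exactly the statement that $\|\mb X^T\tilde{\mbb\varepsilon}\|_\infty/(n\tilde\sigma) \leq (1-\tau^2)\lambda(\xi-1)/(\xi+1)$. Chaining this with $\lambda(\|\hat{\mbb\beta}\|_1 - \|\mbb\beta\|_1) \leq \tilde\sigma - \hat\sigma$ from the basic inequality, together with the triangle bound $\|\hat{\mbb\beta}\|_1 - \|\mbb\beta\|_1 \geq \|\mb h_{T^c}\|_1 - \|\mb h_T\|_1 - 2\|\mbb\beta_{T^c}\|_1$ valid for any $T \subset \{1,\ldots,p\}$, and writing $\tilde\gamma = (1-\tau^2)(\xi-1)/(\xi+1) < 1$, I obtain the cone-type inequality $(1-\tilde\gamma)\|\mb h_{T^c}\|_1 \leq (1+\tilde\gamma)\|\mb h_T\|_1 + 2\|\mbb\beta_{T^c}\|_1$. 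Introducing the free parameter $\nu \in (0,1)$ to partition the two competing terms on the right-hand side places one either in a branch where $\mb h \in \mathscr{C}((\xi+\nu)/(1-\nu), T)$ or in a branch where $\|\mb h_T\|_1 \leq \|\mbb\beta_{T^c}\|_1/\nu$ holds directly; these are precisely the two quantities inside the max in \eqref{eq:mu_def}.

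On the cone branch I would expand $\|\hat{\mbb\varepsilon}\|_2^2 = \|\tilde{\mbb\varepsilon} - \mb X\mb h\|_2^2$ to get the identity $\|\mb X\mb h\|_2^2/n = 2\langle \mb X\mb h, \tilde{\mbb\varepsilon}\rangle/n + \tilde\sigma^2 - \hat\sigma^2$, bound the right-hand side on $\Lambda_1$ using the subgradient inequality and the cone inclusion, and then apply the compatibility factor $\kappa((\xi+\nu)/(1-\nu), T, \mb X)$ from \eqref{eq:compatibility} to turn this into $\|\mb h_T\|_1 \leq \lambda|T|/\{2(1-\nu)\kappa^2\}$. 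Taking the worse of the two branches and then $\min_T \inf_\nu$ reproduces exactly the form of $\mu$ in \eqref{eq:mu_def} and delivers the $\ell_1$ bound $\|\mb h\|_1 \leq \mu(\tilde\sigma\lambda, \xi, \mbb\beta, \mb X)/(1-\tau^2)$. For the $\ell_2$ bound I specialise to $T=S$, where the slack $\|\mbb\beta_{T^c}\|_1$ vanishes, the cone parameter collapses to $\xi$, and replacing $\kappa$ by the restricted eigenvalue $\phi(\xi)$ from \eqref{eq:RE} gives $\|\mb h\|_2 \leq 2\tilde\sigma\xi\sqrt s\lambda/\{(1-\tau^2)\phi^2(\xi)\}$ after a short calculation. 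The $\sigma$-ratio bound $\max(1-\hat\sigma/\tilde\sigma, 1-\tilde\sigma/\hat\sigma) \leq \tau^2$ falls out by plugging the $\ell_1$ bound back into both the subgradient inequality (for one direction) and the basic inequality (for the other).

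The main technical obstacle is the self-referential scaling intrinsic to the square-root parametrisation: $\tilde\sigma$ appears on both sides of $\tilde\sigma - \hat\sigma \leq \|\mb X^T\tilde{\mbb\varepsilon}\|_\infty\|\mb h\|_1/(n\tilde\sigma)$, so the factor $(1-\tau^2)$ is forced to propagate through every step, and one has to verify that the hypotheses implicitly force $\tau^2 < 1$ (otherwise the cone inequality degenerates and the compatibility step collapses). The remaining subtlety is to confirm that the "non-cone" branch $\|\mb h_T\|_1 \leq \|\mbb\beta_{T^c}\|_1/\nu$ does not dominate the optimised bound, so that the $\min_T\inf_\nu$ really picks out the compatibility contribution in the regime of interest; after that the bookkeeping to assemble the stated inequalities is routine.
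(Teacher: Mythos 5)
The paper does not actually prove this statement: Theorem~\ref{thm:Sun} is imported verbatim (modulo notation and the substitution of the cone invertibility factor by its lower bound $\phi^2(\xi)/(1+\xi)$) from Theorem 2 and Corollary 1 of Sun and Zhang (2012), so there is no in-paper proof to compare against. Your sketch is a reasonable reconstruction of the standard square-root-Lasso oracle-inequality argument and is essentially the route taken in the cited source and in Belloni, Chernozhukov and Wang (2011): the basic inequality from optimality, the subgradient/convexity lower bound on $\hat\sigma$, H\"older on the event $\Lambda_1$ to obtain the cone condition $(1-\tilde\gamma)\|\mb h_{T^c}\|_1 \leq (1+\tilde\gamma)\|\mb h_T\|_1 + 2\|\mbb\beta_{T^c}\|_1$, the two-branch split via $\nu$ that reproduces the $\max$ inside \eqref{eq:mu_def}, and compatibility/restricted-eigenvalue steps for the $\ell_1$ and $\ell_2$ bounds. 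Two points deserve flagging. First, your expansion identity has a sign error: from $n\hat\sigma^2 = \|\tilde{\mbb\varepsilon}-\mb X\mb h\|_2^2$ one gets $\|\mb X\mb h\|_2^2/n = (\hat\sigma^2 - \tilde\sigma^2) + 2\langle \mb X\mb h,\tilde{\mbb\varepsilon}\rangle/n$, not $\tilde\sigma^2-\hat\sigma^2$; the argument still goes through because the basic inequality bounds $\hat\sigma-\tilde\sigma$ from above, but as written the compatibility step would not assemble. Second, the derivation of the $\sigma$-ratio bounds from the $\ell_1$ bound, and the bookkeeping that produces exactly the $(\xi+1)$ prefactor and the $1/(1-\tau^2)$ inflation in the stated inequalities, are asserted rather than carried out; these are precisely the places where Sun and Zhang's proof does nontrivial work with the self-referential scaling you correctly identify as the main obstacle. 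As a blind reconstruction of a cited black-box result, the proposal is on the right track but is a sketch rather than a proof.
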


\begin{lem} \label{lem:t_tail}
Let $n, p, m \in \mathbb{N}$ with $n\geq m \geq 3$. Let $\mbb\varepsilon \sim \mathcal{N}_m(\mb 0, \mb I)$.
Let $\mb a \in \R^m$ and suppose $0 < \eta < pe^{-(n-m)-2}$. Then
\begin{equation*}
\pr(\mb a^T \mbb\varepsilon / \|\mbb\varepsilon\|_2 > \sqrt{2\log(p/\eta)/n}\, \|\mb a\|_2) \leq \frac{1}{p}\frac{(1+r_m)\eta}{\sqrt{\pi\log(p/\eta)}}
\end{equation*}
where $r_m \to 0$ as $m \to \infty$.
\end{lem}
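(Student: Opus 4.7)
The plan is to reduce the probability to a canonical tail on the sphere and then get an explicit bound via Mills' ratio. First, by scaling I may take $\|\mb a\|_2 = 1$, and by rotational invariance of $\mathcal{N}_m(\mb 0, \mb I)$, $\mb a^T\mbb\varepsilon/\|\mbb\varepsilon\|_2$ has the same distribution as $V := \varepsilon_1/\|\mbb\varepsilon\|_2$. Writing $L := \log(p/\eta)$ and $s := \sqrt{2L/n}$, the hypothesis $\eta < pe^{-(n-m)-2}$ gives $L > n - m + 2$, and I may assume $s < 1$ (equivalently $L < n/2$) since otherwise $\{V > s\}$ is empty; this forces $L \in (n-m+2,\, n/2)$ and in particular $m > n/2 + 2$. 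Decomposing $\|\mbb\varepsilon\|_2^2 = \varepsilon_1^2 + Y$ with $Y := \sum_{i=2}^m \varepsilon_i^2 \sim \chi^2_{m-1}$ independent of $\varepsilon_1$, the event $\{V > s\}$ coincides with $\{\varepsilon_1 > s\sqrt{Y/(1-s^2)}\}$, so conditioning on $Y$ yields $\pr(V > s) = \E[\Phi(-s\sqrt{Y/(1-s^2)})]$.

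Next, I would apply the Mills' ratio inequality $\Phi(-x) \leq \phi(x)/x$ inside the expectation. The right hand side then reduces to a Gamma-type integral against the $\chi^2_{m-1}$ density, which evaluates in closed form to
\[
\pr(V > s) \;\leq\; \frac{(1-s^2)^{(m-1)/2}}{2s\sqrt{\pi}}\cdot\frac{\Gamma((m-2)/2)}{\Gamma((m-1)/2)}.
\]
Bounding the Gamma ratio by $\sqrt{2(m-1)}/(m-2)$ via Wendel's inequality, substituting $s^2 = 2L/n$, and grouping factors, I rewrite this as $\pr(V > s) \leq (e^{-L}/\sqrt{\pi L}) \cdot G_m(n, L)$, where
\[
G_m(n, L) \;:=\; (1 - 2L/n)^{(m-1)/2} e^L \cdot \frac{\sqrt{n(m-1)}}{2(m-2)}.
\]
Since $e^{-L}/\sqrt{\pi L} = \eta/(p\sqrt{\pi\log(p/\eta)})$, it suffices to show $\sup G_m(n, L) \leq 1 + r_m$ with $r_m \to 0$ as $m \to \infty$, the supremum being over admissible $(n, L)$.

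The main obstacle is this uniform bound on $G_m$. Differentiation shows $L \mapsto (1-2L/n)^{(m-1)/2} e^L$ has its unique maximum at $L^* = (n-m+1)/2$; since $L^* < n - m + 2$ for every $n \geq m$, the function is strictly decreasing on the admissible range, and hence $G_m(n,L) \leq G_m(n, n-m+2)$. Setting $k := n - m \in [0, m-4)$ and differentiating $\log G_m(m+k, k+2)$ in $k$ reduces the sign to an elementary polynomial inequality in $m$ and $k$, showing the supremum over $k$ is attained at $k = 0$. A Taylor expansion in $1/m$ then yields
\[
G_m(m, 2) \;=\; (1-4/m)^{(m-1)/2} e^2 \cdot \frac{\sqrt{m(m-1)}}{2(m-2)} \;=\; \tfrac{1}{2} - \tfrac{1}{4m} + O(m^{-2}),
\]
so $\sup G_m \leq 1$ for all sufficiently large $m$, while a direct check handles the finitely many smaller values (noting that for $m \leq 4$ the admissible $L$-range is empty and the probability vanishes). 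Defining $r_m := \max\{0,\, \sup G_m - 1\}$ therefore satisfies $r_m \to 0$, completing the proof.
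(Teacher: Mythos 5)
Your argument is correct, and it takes a genuinely different route from the paper's. The paper's own proof is essentially a two-line reduction: it notes that $z=\mb a^T\mbb\varepsilon/(\|\mb a\|_2\|\mbb\varepsilon\|_2)$ becomes a $t_{m-1}$ variable after the transformation $z/\sqrt{(1-z^2)/(m-1)}$, and then reuses the tail estimate (A8) from the proof of Theorem 2(ii) of Sun and Zhang (2012), observing only that $\eta<pe^{-(n-m)-2}$ gives $m-2-\log(p/\eta)\ge n-2\log(p/\eta)$ so that their computation carries over with $n$ in the exponent. You instead derive the whole bound from scratch, and the steps check out: conditioning on $Y\sim\chi^2_{m-1}$ and Mills' ratio give exactly $(1-s^2)^{(m-1)/2}\,\Gamma(\tfrac{m-2}{2})/\{2s\sqrt{\pi}\,\Gamma(\tfrac{m-1}{2})\}$ (the Gamma integral converges since $m\ge 3$ and simplifies because $1+s^2/(1-s^2)=(1-s^2)^{-1}$), Wendel's inequality yields the ratio bound $\sqrt{2(m-1)}/(m-2)$, and your two-stage optimisation of $G_m$ is sound: $L\mapsto(1-2L/n)^{(m-1)/2}e^L$ peaks at $(n-m+1)/2<n-m+2$, and the monotonicity in $k=n-m$ that you only sketch does reduce to an always-true inequality, namely $k^2+\tfrac92 k+\tfrac m2+4\ge 0$, so $\sup G_m\le G_m(m,2)\to \tfrac12$ and $r_m:=\max\{0,\sup G_m-1\}$ (eventually zero) serves, with $m\in\{3,4\}$, and more generally $2\log(p/\eta)\ge n$, being null cases as you note. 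What the paper's route buys is brevity, at the cost of sending the reader to an external proof; what yours buys is a self-contained argument that makes the role of the threshold $\eta<pe^{-(n-m)-2}$ transparent (it pushes $L$ past the maximiser of the Gaussian-type factor) and in fact shows the stated constant is asymptotically slack by a factor of two, since the worst case $n=m$, $L=n-m+2$ gives limiting factor $\tfrac12$ rather than $1$.
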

\begin{proof}
We follow the proof of part (ii) of Theorem 2 of \citet{Sun2012}. Let $z =\mb a^T \mbb\varepsilon / (\|\mb a\|_2 \|\mbb\varepsilon\|_2)$. Then $z/\sqrt{(1-z^2)/(m-1)}$ follows a $t$-distribution with $m-1$ degrees of freedom. The only change we need to make in the aforementioned proof is to note that
\[
\eta < pe^{-(n-m)-2} \Rightarrow m-2 - \log(p/\eta) \geq n-2\log(p/\eta).
\]
and modify (A8) in \citet{Sun2012} appropriately.
\end{proof}

\subsection{Proof of Theorem~\ref{thm:maximise_pval}}
Without loss of generality assume $S=\{1,\ldots,s\}$. Let $\lambda_0 =
2\sqrt{\log(p/\eta)/n}$ and let $\mb P$ denote the orthogonal projection on to $\mb X_S$. Define the following subsets of $\R^n$:
\begin{align*}
\Lambda_1 &= \bigg\{ \mbb\zeta : \max_j \frac{|\mb X_j^T (\mb I - \mb P)\mbb\zeta|}{\|(\mb I - \mb P)\mbb\zeta\|_2} \leq \lambda_0 \|(\mb I - \mb P)\mb X_j\|_2/\sqrt{2} \bigg\} \\
\Lambda_2 &= \bigg\{ \mbb\zeta : \max_j \frac{|\mb X_j^T \mb P \mbb\zeta|}{\|\mbb\zeta\|_2} \leq \lambda_0 \| \mb P \mb X_j\|_2/\sqrt{2} \bigg\} \\
\Lambda_3 &= \{\mbb\zeta : \|\mbb\zeta\|_2/\sqrt{n} \leq\sqrt{2}\}.
\end{align*}
Let $\Lambda = \Lambda_1 \cap \Lambda_2 \cap \Lambda_3$ and let $\Omega = \{\mbb\varepsilon \in \Lambda\}$.
Lemma~\ref{lem:t_tail} shows that on the event $\Omega$
we have the following properties.
\begin{enumerate}[(i)]
\item $\sgn(\mbb\beta'_S)= \sgn(\mbb\beta_S)$ and $\min_{j \in S} \beta'_j / \beta_j > 1 - 1/(2\sqrt{2})$.
\item $\hat{S}^{(s)}=S$ and $\sgn(\check{\mbb\beta}^{(s)}) = \sgn(\mbb\beta_S)$.
\item $2\sqrt{2} \sigma \min_{j \in S} \check{\beta}^{(s)}_j/\beta_j > \check{\sigma}^{(s)}$.
\end{enumerate}
In the following, we suppress dependence on $\lambda$. Let
$\check{\mbb\beta} \in \R^p$ be $\check{\mbb\beta}^{(s)}$ with $p-s$ zeroes
added: $\check{\mbb\beta} = (\check{\mbb\beta}^{(s)}, 0, \ldots,0)$ and let $\check{\sigma}=\check{\sigma}^{(s)}$.
Note that by Theorem~\ref{thm:lambda_fixed}, on $\Omega$ we know that $\hat{\mb R}(\check{{\mbb\beta}}, \check{\sigma} \mbb\zeta) = \hat{\mb R}(\mbb\beta, \sigma \mbb\zeta)$ for all $\mbb\zeta \in \Lambda$. Moreover, Lemma~\ref{lem:indep} shows that conditional on $\Omega$, $\hat{\mb R}(\mbb\beta, \sigma \mbb\varepsilon)$ and $\mb X\check{\mbb\beta} / \check{\sigma} $ are independent.
Write $\hat{\mb R}^{(b)} = \hat{\mb R}_\lambda(\check{{\mbb\beta}}, \check{\sigma} \mbb\zeta^{(b)})$ for $b=1,\ldots, B$ and let $\hat{\mb R}^{(0)}=\hat{\mb R}_\lambda(\mbb\beta, \sigma \mbb\varepsilon)$. We see that conditional on $\Omega$, $\{\hat{\mb R}^{(b)}\}_{b=0}^B$ are independent. Also $\hat{\mb R}^{(0)}\ind_{\{\mbb\varepsilon \in \Lambda\}}, \{\hat{\mb R}^{(b)} \ind_{\{\mbb\zeta^{(b)} \in \Lambda\}}\}_{b=1}^B$ are independent and identically distributed.

Let $\tilde{Q}_b$, $b=0, \ldots, B$ be derived  as in
Section~\ref{sec:Combine} by applying the function $\tilde{Q}$ to appropriate functions of scaled residuals $\{\hat{\mb R}^{(b)}\}_{b=0}^B$. Recall that $Q= \sum_{b=0}^B \ind_{\{\tilde{Q}_b \geq \tilde{Q}_0\}}/(B+1)$. Let $R = \{b : \mbb\zeta^{(b)} \in \Lambda\}$ and note that letting $\delta = \pr(\Omega^c)$ we have $|R| \sim \text{Bin}(1-\delta, B)$. We have
\begin{align*}
\pr(Q \leq x) &\leq \pr(Q \leq x, \Omega)  + \delta \\
&= (1-\delta) \E \{\pr(Q \leq x | R,\Omega)\} + \delta.
\end{align*}
Lemma~\ref{lem:event_prop} gives the required bound on $\delta$; it only remains to show the first term on the RHS is at most $x$. From the above, conditional on $R$ and the event $\Omega$, $\{\tilde{Q}_0, \{\tilde{Q}_b\}_{b \in R}\}$ are exchangeable. Now there can be at most $\floor{x(B+1)}$ values $b \in \{0,\ldots,B\}$ with $\sum_{b' \neq b} \ind_{\{\tilde{Q}_{b'} \geq \tilde{Q}_b\}}/(B+1) \leq x$.  This entails that
\begin{align*}
(|R|+1)\pr(Q \leq x | R, \Omega) &\leq \floor{x(B+1)} \\
\pr(Q \leq x | R, \Omega) &\leq \frac{x(B+1)}{|R|+1}.
\end{align*}
Therefore
\begin{align*}
(1-\delta) \E \{\pr(Q \leq x | R, \Omega)\} &\leq x (1-\delta) \E \bigg(\frac{B+1}{|R|+1}\bigg) \\
&= x \sum_{r=0}^B (1-\delta)^{r+1}\delta^{B-r} \binom{B}{r} \frac{B+1}{r+1} \\
&= x\sum_{r=1}^{B+1} (1-\delta)^{r}\delta^{B+1-r} \binom{B+1}{r} \\
&= x(1 - \delta^{B+1})\leq x.
\end{align*}

\subsection{Proofs of Theorems~\ref{thm:single_pval_null} and \ref{thm:single_pval_alt}} \label{sec:single_proofs}
The proofs of Theorems~\ref{thm:single_pval_null} and \ref{thm:single_pval_alt} rest on the following decomposition of $T_k$ and an analogous one for $T_k^*$:
\begin{align} \label{eq:T_k_decomp}
T_k= \frac{1}{\hat{\sigma}_k} \norms{\mb W_k}_2 \beta_k + \frac{\sigma}{\hat{\sigma}_k} Z_k + \frac{1}{\hat{\sigma}_k}\delta_k
\end{align}
where
\[
\hat{\sigma}_k = \|\mb y - \mb X_{-k} \hat{\mbb\Theta}_k\|_2/\sqrt{n}.
\]
The first term in \eqref{eq:T_k_decomp} is zero for $k \in N$ and
\begin{align}
 \delta_k &= \frac{\mb W_k^T}{\|\mb W_k\|_2} \mb X_{-k}(\mbb\Theta_k - \hat{\mbb\Theta}_k), \\
 Z_k &= \frac{\mb W_k^T \mbb\varepsilon}{\|\mb W_k\|_2} \sim \mathcal{N}(0, 1). \label{eq:def_Z}
\end{align}
The main term we have to control is $\delta_k$. The result of \citet{Sun2012} shows that $\|\mbb\Theta_k - \hat{\mbb\Theta}_k\|_1$ is small with high probability. Next appealing to the KKT conditions for the square-root Lasso, we obtain 
\begin{equation}
\|\mb X_{-k}^T \mb W_k\|_\infty /\|\mb W_k\|_2 \leq \sqrt{n} \gamma. \label{eq:KKT_W}
\end{equation}
H\"older's inequality then gives
\begin{equation} \label{eq:holder}
|\delta_k| \leq  \sqrt{n} \gamma \|\mbb\Theta_k - \hat{\mbb\Theta}_k\|_1,
\end{equation}
which forms the basis of the proofs.
\subsubsection{Proof of Theorem~\ref{thm:single_pval_null}}
In view of Lemma~\ref{lem:convergence} applied with $\mathcal{F}_n$ simply a constant, for the first part we need only show that $\sup_{\mbb\beta \in \mathscr{B}, k \in N} \delta_k \inprob 0$ and $\sup_{\mbb\beta \in \mathscr{B}, k \in N} |\hat{\sigma}_k -\sigma| \inprob 0$. 
First note that
\begin{equation*}
\sup_{\mbb\beta \in \mathscr{B}, k \in N} \mu(\lambda, \xi, \mbb\beta_{-k}, \mb X_{-k}) \leq \sup_{\mbb\beta \in \mathscr{B}} \mu(\lambda, \xi, \mbb\beta, \mb X) \to 0
\end{equation*}
in view of Lemma~\ref{lem:mu_bd} and consequently (as clearly $\lambda \to 0$)
\begin{align*}
\sup_{\mbb\beta \in \mathscr{B}, k \in N} \tau^2(\tilde{\sigma}, \lambda, \xi, \mbb\beta_{-k}, \mb X_{-k}) \leq \sup_{\mbb\beta \in \mathscr{B}} \tau^2(\tilde{\sigma}, \lambda, \xi, \mbb\beta, \mb X) \to 0.
\end{align*}
Now writing $\tau^2 =\sup_{\mbb\beta \in \mathscr{B}}\tau^2(\tilde{\sigma}, \lambda, \xi, \mbb\beta, \mb X)$ for convenience, we see that for $n$ sufficiently large it must be the case that $(1-\tau^2)\lambda(\xi-1)/(\xi+1) > \sqrt{2\log(p)/n}$. By Theorem~\ref{thm:Sun} there is a sequence of events with probability tending to 1 on which we have
\begin{gather}
\sup_{\mbb\beta \in \mathscr{B}, k \in N} \max \bigg(1 - \frac{\hat{\sigma}_k}{\tilde{\sigma}}, 1 - \frac{\tilde{\sigma}}{\hat{\sigma}_k}\bigg) \leq \tau^2, \label{eq:sigma_bd1}\\
\sup_{\mbb\beta \in \mathscr{B}, k \in N} \| \hat{\mbb\Theta}_k - \mbb\Theta_{k}\|_1 \leq \frac{(\xi+1)\tilde{\sigma}\lambda s}{(1-\tau^2) \kappa^2(\xi, S)} \label{eq:Theta_bd1}, \\
\sup_{\mbb\beta \in \mathscr{B}}\max \bigg(1 - \frac{\check{\sigma}}{\tilde{\sigma}}, 1 - \frac{\tilde{\sigma}}{\check{\sigma}}\bigg) \leq \tau^2 \label{eq:sigma_bd2},
\end{gather}
where $\tilde{\sigma}=\sigma\|\mbb\varepsilon\|_2/\sqrt{n}$. From Lemma~\ref{lem:chisq} we know that $\tilde{\sigma} \inprob \sigma$, so in particular we have $\sup_{\mbb\beta \in \mathscr{B}, , k \in N} |\hat{\sigma}_k -\sigma| \inprob 0$. Thus also, on a sequence of events with probability tending to 1, applying \eqref{eq:holder} to \eqref{eq:Theta_bd1} we have
\begin{align*}
\sup_{\mbb\beta \in \mathscr{B}, k \in N} |\delta_k| \leq c_1\sigma  \frac{\sqrt{\log(p)} \lambda s}{\kappa^2(\xi, S)} \to 0,
\end{align*}
which completes the proof of the first part.

Turning to the bootstrap results, we know that that on a sequence of events of the form $\Omega_n=\{\mbb\varepsilon \in \Delta_n\}$ with probability tending to 1, we have
\begin{gather*}
\sup_{\mbb\beta \in \mathscr{B}, k \in N} \sqrt{n}\gamma\|\hat{\mbb\Theta}_k -\mbb\Theta_k\|_1 \geq \sup_{\mbb\beta \in \mathscr{B}, k \in N} \sqrt{n}\gamma\|\hat{\mbb\Theta}_{k, N_k}\|_1 \to 0 \; \text{ and } \;\sup_{\mbb\beta \in \mathscr{B}}|\check{\sigma} -\sigma|\to 0.
\end{gather*}
Here $N_k$ corresponds to the set of noise components of $\mbb\Theta_k$.
Thus on $\Omega_n$, by Lemma~\ref{lem:mu_bd}, we have
\begin{equation} \label{eq:mu_boot_bd}
\sup_{\mbb\beta \in \mathscr{B}, k \in N} \sqrt{n}\gamma\mu(c_1\lambda, \xi^*, \hat{\mbb\Theta}_k, \mb X_{-k}) \to 0
\end{equation}
for any fixed $c_1 >0$ and some $\xi^* > (A_1+1)/(A_1-1)$.
Let $\delta^*_k$ and $\hat{\sigma}^*_k$ be the bootstrap equivalents of $\delta_k$ and $\hat{\sigma}_k$ respectively.
By applying Lemma~\ref{lem:convergence} now with $\mathcal{F}_n = \mbb\varepsilon$, we see that it is enough to show that on $\Omega_n$, for all $\eta > 0$ we have
\begin{align*}
\pr(\sup_{\mbb\beta \in \mathscr{B}, k \in N} |\delta^*_k| >\eta | \mbb\varepsilon) \to 0 \; \text{ and } \; \pr(\sup_{\mbb\beta \in \mathscr{B}, k \in N} |\hat{\sigma}^*_k - \sigma| >\eta | \mbb\varepsilon) \to 0.
\end{align*}
For this it is sufficient to exhibit a sequence $\Omega^*_n=\{\mbb\varepsilon^* \in \Delta^*_n\}$ whose probability tends to 1 such that on $\Omega_n \cap \Omega^*_n$ we have
\begin{align}
\sup_{\mbb\beta \in \mathscr{B}, k \in N} |\delta^*_k| &\to 0 \label{eq:delta_tend2} \\
\sup_{\mbb\beta \in \mathscr{B}, k \in N} |\hat{\sigma}^*_k - \sigma| &\to 0. \label{eq:sigma_tend2}
\end{align}
Let $\tilde{\sigma}^* = \check{\sigma}\|\mbb\varepsilon^*\|_n/\sqrt{n}$.
Define
\[
\tau_*^2 = \sup_{\mbb\beta \in \mathscr{B}, k \in N} \tau^2(\tilde{\sigma}^*, \lambda, \xi^*, \hat{\mbb\Theta}_k,\mb X_{-k}).
\]
Provided $\tau_*^2 \to 0$
we have $(1-\tau_*^2)\lambda(\xi^*-1)/(\xi^*+1) > \sqrt{2\log(p)/n}$ for $n$ sufficiently large. This gives us the equivalent of \eqref{eq:sigma_bd1} and \eqref{eq:Theta_bd1} for $n$ sufficiently large:
\begin{gather}
\sup_{\mbb\beta \in \mathscr{B}, k \in N} \max \bigg(1 - \frac{\hat{\sigma}^*_k}{\tilde{\sigma}^*}, 1 - \frac{\tilde{\sigma}^*}{\hat{\sigma}^*_k}\bigg) \leq \tau_*^2, \label{eq:sigma_bd3} \\
\sup_{\mbb\beta \in \mathscr{B}, k \in N} \| \hat{\mbb\Theta}^*_k - \mbb\Theta^*_k\|_1 \leq c_2 \sup_{\mbb\beta \in \mathscr{B}, k \in N} \mu(\tilde{\sigma}^* \lambda, \xi^*, \hat{\mbb\Theta}_k, \mb X_{-k}). \label{eq:Theta_bd2}
\end{gather}
By Lemma~\ref{lem:chisq}, conditional on $\mbb\varepsilon$, $\sup_{\mbb\beta \in \mathscr{B}}|\tilde{\sigma}^* - \check{\sigma}|\inprob 0$. Thus $\tau_*^2$ tending to 0 and therefore also \eqref{eq:sigma_bd3} and \eqref{eq:Theta_bd2} occur on a sequence of events with probability tending to 1, $\Omega_n^* \cap \Omega_n$, where $\Omega_n^*$ is of the form $\{\mbb\varepsilon^* \in \Delta^*_n\}$. As on $\Omega_n^* \cap \Omega_n$, $\sup_{\mbb\beta \in \mathscr{B}}|\check{\sigma} - \sigma|\to 0$, \eqref{eq:sigma_bd2} gives \eqref{eq:sigma_tend2}. 
Applying \eqref{eq:holder} to \eqref{eq:Theta_bd2} and \eqref{eq:mu_boot_bd} then gives \eqref{eq:delta_tend2}, which completes the proof.

\subsubsection{Proof of Theorem~\ref{thm:single_pval_alt}}
The proof proceeds similarly to that of Theorem~\ref{thm:single_pval_null}. For the first result, we use Lemma~\ref{lem:convergence} with $\mathcal{F}_n$ simply constant.
Thus it suffices to show $\sup_{\mbb\beta \in \mathscr{B}_k}|\hat{\sigma}_k - \sigma| \inprob 0$, $\sup_{\mbb\beta \in \mathscr{B}_k}|\delta_k| \inprob 0$ and $\sup_{\mbb\beta \in \mathscr{B}_k} |\beta_k|\|\mb W_k\|_2|{\sigma'_k}^{-1} - \hat{\sigma}_k^{-1}| \inprob 0$ where $\sigma'_k = \sqrt{\sigma^2 + \|\mb W_k\|_2^2\beta_k^2/n}$. 
To this end, first note that by Lemma~\ref{lem:mu_bd}, for some $\xi' > (A_1+1)/(A_1-1)$ we have $\log(p) \sup_{\mbb\beta \in \mathscr{B}_k} \mu(\lambda, \xi', \mbb\Theta_k, \mb X_{-k}):= \log(p)\mu_k \to 0$.
Let 
\[
\tilde{\sigma}_k=\frac{1}{\sqrt{n}}\|y - \mb X_{-k}\mbb\Theta_k\|_2 = \frac{1}{\sqrt{n}}\|\sigma\mbb\varepsilon + \beta_k\mb W_k\|_2.
\]
Then
\begin{align*}
\tilde{\sigma}_k^2 = \frac{1}{n}\bigg(\sigma^2\|\mbb\varepsilon\|_2^2 + \beta_k^2\|\mb W_k\|_2^2 + 2\sigma\beta_k\|\mb W_k\|_2 Z_k\bigg)
\end{align*}
where $Z_k$ is defined as in \eqref{eq:def_Z}. Since $\beta_k\|\mb W_k\|_2 /\sqrt{n} \to 0$, we have that $\tilde{\sigma}_k \inprob \sigma$ by Lemma~\ref{lem:chisq}.
For later use we also note that
\begin{align} \label{eq:sigma_tilde}
\sqrt{n}(\tilde{\sigma}_k^2 - {\sigma'_k}^2) = \sqrt{n}(\sigma^2\|\mbb\varepsilon\|^2_2/n-1) +2\sigma \beta_k \|\mb W_k\|_2 Z_k / \sqrt{n} =O_P(1).
\end{align}
by the central limit theorem and as $\beta_k \|\mb W_k\|_2 / \sqrt{n} \to 0$.
Thus we have $\tau_k^2 := \sup_{\mbb\beta \in \mathscr{B}_k} \tau^2(\tilde{\sigma}_k, \lambda, \xi', \mbb\Theta_k, \mb X_{-k}) \inprob 0$.
Note that by \eqref{eq:KKT_W},
\[
\frac{1}{n}\beta_k\|\mb X_{-k}^T \mb W_k\|_\infty \leq \frac{1}{\sqrt{n}}\beta_k \gamma \|\mb W_k\|_2 = o\sqrt{\frac{\log(p)}{n}}.
\]
Therefore
\begin{align*}
&\lim_{n \to \infty} \pr\bigg(\frac{\|\mb X_{-k}^T(\sigma\mbb\varepsilon +\beta_k \mb W_k)\|_\infty/n}{\tilde{\sigma}_k} \geq (1-\tau_k^2)\lambda\frac{\xi-1}{\xi+1}\bigg) \\
\geq
&\lim_{n \to \infty} \pr\bigg(\frac{\|\mb X_{-k}^T\mbb\varepsilon\|_\infty/n}{\|\mbb\varepsilon\|_2/\sqrt{n}} \leq \sqrt{2\log(p)/n}\bigg) = 1.
\end{align*}
Thus by Theorem~\ref{thm:Sun} we have that on a sequence of events $\Omega_n=\{\mbb\varepsilon \in \Delta_n\}$ with probability tending to 1, $|\tilde{\sigma}_k - \sigma| \to 0$, $\sqrt{n}\sup_{\mbb\beta \in \mathscr{B}_k}|\hat{\sigma}_k -\tilde{\sigma}_k| \to 0$, and $\log(p) \sup_{\mbb\beta \in \mathscr{B}_k}\|\hat{\mbb\Theta}_k - \mbb\Theta_k\|_1 \to 0$. The latter in conjunction with \eqref{eq:holder} shows $\sup_{\mbb\beta \in \mathscr{B}_k}|\delta_k| \inprob 0$. That $\sup_{\mbb\beta \in \mathscr{B}_k}|\beta_k| \|\mb W_k\|_2|{\sigma'_k}^{-1} - \hat{\sigma}_k^{-1}| \inprob 0$ follows from \eqref{eq:sigma_tilde}.

Now we derive the result concerning the bootstrap test statistic.
Note that on $\Omega_n$,
\begin{equation} \label{eq:boot_est}
\sqrt{n}\gamma \sup_{\mbb\beta \in \mathscr{B}_k} \|\hat{\mbb\Theta}_{k,T^c}\|_1 \leq \sqrt{n}\gamma \{\|\mbb\Theta_{k,T^c}\|_1 + \sup_{\mbb\beta \in \mathscr{B}_k}\|\hat{\mbb\Theta}_k - \mbb\Theta_k\|_1\} \to 0,
\end{equation}
and so also $\sqrt{n}\gamma\sup_{\mbb\beta \in \mathscr{B}_k}\mu(c_1\lambda, \xi^*, \hat{\mbb\Theta}_k, \mb X_{-k}) \to 0$ for any fixed $c_1 >0$ and some $\xi^* > (A_1+1)/(A_1-1)$. The rest of the proof then proceeds exactly as the proof for the bootstrap statistic in Theorem~\ref{thm:single_pval_null}.

\subsection{Proof of Corollary~\ref{cor:power}}
The proof is essentially identical to that of Theorem~\ref{thm:single_pval_alt}, but with $\beta_k\mb W_k$ replaced by $\mb w_\gamma$ and $\mbb\Theta_k$ replaced by $\mbb\beta + \mbb\psi_\gamma$.

\subsection{Proof of Theorem~\ref{thm:interactions}}
By Corollary 1 of \citet{Raskutti2010}, with probability tending to 1 we
have $\kappa(\xi)\geq \phi(\xi)>\phi/8>0$. By Lemma~\ref{lem:inter_bound}
and the fact that $\sqrt{\E(f_1^2)} = o(n^{-1/6})$ we see that $\|\mb f\|_2 = o_P(n^{1/3})$. Indeed, we have $\pr(\|\mb f\|_2/n^{1/3} \leq n^{1/6} \sqrt{2E(f_1^2)}) \to 1$ as $n \to \infty$. Also, with probability tending to 1,
\[
\frac{1}{n}\|\mb X^T\mb f\|_\infty \leq c_1\frac{\sqrt{\log(p)}}{n^{1/3}} \cdot \frac{1}{\sqrt{n}}\|\mb f\|_2 \leq c_1\frac{\sqrt{\log(p)}}{n^{1/3}} \cdot o(n^{-1/6}) = o(\sqrt{\log(p)/n}),
\]
i.e.\ $\|\mb X^T\mb f\|_\infty/n = o_P(\sqrt{\log(p)/n})$.
We now temporarily make the dependence of $\mb Z$ and $p$ on $n$ explicit by writing $\mb Z^{(n)}$ and $p_n$ respectively, in order to explain the structure of the argument to follow. From the above, we have a sequence of sets $\Lambda_n \subseteq \R^{n \times p_n}$ for which $\pr(\mb Z^{(n)} \in \Lambda_n) \to 1$, and on the respective events $\|\mb f\|_2 = o(n^{1/3})$ and $\frac{1}{n}\|\mb X^T\mb f\|_\infty = o(\sqrt{\log(p)/n})$ (uniformly).
Let $\hat{\sigma}=\|\mb y - \mb X\mbb\beta'\|_2/\sqrt{n}$. In relation to Corollary~\ref{cor:power}, we will take $\gamma=c_1\sqrt{\log(p)}/n^{1/3}$ sufficiently large such that the Lasso regression of $\mb f$ on $\mb X$ produces the zero vector.
By Lemma~\ref{lem:convergence}, it suffices to show that for each $\eta > 0$,
\begin{align}
\sup_{\mb Z^{(n)} \in \Lambda_n}\pr\big(\sup_{\mbb\beta \in \mathscr{B}} |\hat{\sigma} - \sigma|>\eta \;|\,\mb Z^{(n)}\big) &\to 0 \notag \\
\sup_{\mb Z^{(n)} \in \Lambda_n} \pr\big(\sup_{\mbb\beta \in \mathscr{B}}  \|\mb X^T \mb f\|_\infty\|\mbb\beta - \mbb\beta'\|_1/\|\mb f\|_2>\eta \;|\, \mb Z^{(n)}\big) &\to 0 \notag\\
\sup_{\mb Z^{(n)} \in \Lambda_n} \pr\big(\sup_{\mbb\beta \in \mathscr{B}} \|\mb f\|_2|\sigma'^{-1}-\hat{\sigma}^{-1}|>\eta \; |\,\mb Z^{(n)}\big) & \to 0. \label{eq:fbd2}
\end{align}
Here $\sigma' := \sqrt{\sigma^2 + \|\mb f\|_2^2/n}$.
The remainder of the argument to arrive at the first result is essentially identical to that in Theorem~\ref{thm:single_pval_alt}, with $\mb f$ playing the role of $\beta_k \mb W_k$, and with the probabilities being conditional on $\mb Z^{(n)}$. The only difference is that in place of \eqref{eq:sigma_tilde} (which leads to the equivalent of \eqref{eq:fbd2}), we have
\begin{align*}
n^{1/3} |\tilde{\sigma}^2 - \sigma'^2| = |n^{1/3}\sigma^2(\|\mbb\varepsilon\|_2^2/n-1) +2\sigma \mb f^T \mbb\varepsilon/n^{2/3}|,
\end{align*}
where $\tilde{\sigma}=\|\sigma\mbb\varepsilon +\mb f\|_2/\sqrt{n}$.
Since for all $\mb Z^{(n)} \in \Lambda_n$, $\|\mb f\|_2 = o(n^{1/3})$ uniformly, it is straightforward to show that 
\begin{align*}
\sup_{\mb Z^{(n)} \in \Lambda_n} \pr\big(\sup_{\mbb\beta \in \mathscr{B}} \|\mb f||_2|\sigma'-\tilde{\sigma}|>\eta \; |\,\mb Z^{(n)}\big) & \to 0.
\end{align*}
for any $\eta > 0$, which then leads to \eqref{eq:fbd2}.

The bootstrap result is simpler. We know there is a sequence of events depending only on $(\mb Z, \mbb\varepsilon)$ on which
\[
\sup_{\mbb\beta \in \mathscr{B}}\|\mbb\beta - \check{\mbb\beta}\|_1 \leq c_2 \sigma s\sqrt{\log(p)/n} / \phi^2(\xi).
\]
Thus on the same sequence of events
\begin{align*}
\sup_{\mbb\beta \in \mathscr{B}} \sqrt{n} \cdot c_1\sqrt{\log(p)}/n^{1/3} \cdot \mu(\lambda, \xi, \check{\mbb\beta}, \mb X) \to 0,
\end{align*}
from which the result follows by arguing along the lines of the second part of the proof of Theorem~\ref{thm:single_pval_null}.
\subsection{Technical lemmas}
\begin{lem} \label{lem:event_prop}
Consider the setup of Theorem~\ref{thm:maximise_pval} and its proof. Recall that
\begin{align*}
\Lambda_1 &= \bigg\{ \mbb\zeta : \max_j \frac{|\mb X_j^T (\mb I - \mb P)\mbb\zeta|}{\|(\mb I - \mb P)\mbb\zeta\|_2} \leq \lambda_0 \|(\mb I - \mb P)\mb X_j\|_2/\sqrt{2} \bigg\}, \\
\Lambda_2 &= \bigg\{ \mbb\zeta : \max_j \frac{|\mb X_j^T \mb P \mbb\zeta|}{\|\mbb\zeta\|_2} \leq \lambda_0 \| \mb P \mb X_j\|_2/\sqrt{2} \bigg\}, \\
\Lambda_3 &= \{\mbb\zeta : \|\mbb\zeta\|_2/\sqrt{n} \leq\sqrt{2}\},\\
\Lambda &= \Lambda_1 \cap \Lambda_2 \cap \Lambda_3,
\end{align*}
with $\lambda_0=2\sqrt{\log(p/\eta)/n}$.
On the event $\Omega = \{\mbb\varepsilon \in \Lambda\}$
we have the following properties.
\begin{enumerate}[(i)]
\item $\sgn(\mbb\beta'_S)= \sgn(\mbb\beta_S)$ and $\min_{j \in S} \beta'_j / \beta_j > 1 - 1/(2\sqrt{2})$.
\item $\hat{S}^{(s)}=S$ and $\sgn(\check{\mbb\beta}^{(s)}) = \sgn(\mbb\beta_S)$.
\item $2\sqrt{2} \sigma \min_{j \in S} \check{\beta}^{(s)}_j/\beta_j > \check{\sigma}^{(s)}$.
\end{enumerate}
Furthermore
\begin{equation} \label{eq:pr_Omega}
\pr(\Omega) \geq 1 - \frac{2(1+r_{n-s})\eta}{\sqrt{\pi\log(p / \eta)}} -e^{-n/8}.
\end{equation}
\end{lem}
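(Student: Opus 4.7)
The argument splits into a stochastic part (bounding $\pr(\Omega^c)$) and a purely deterministic part (deducing (i)--(iii) on $\Omega$). For the latter, the main tool is the Sun--Zhang oracle bound in Theorem~\ref{thm:Sun}, whose high-probability event \eqref{eq:Lambda_Sun} will be triggered whenever $\mbb\varepsilon \in \Lambda_1 \cap \Lambda_2$.

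For the probability bound, $\pr(\Lambda_3^c) \le e^{-n/8}$ is a standard chi-square tail. For $\Lambda_1$, only indices with $(\mb I - \mb P)\mb X_j \ne \mb 0$ contribute; rewriting the quantity as $((\mb I - \mb P)\mb X_j)^T \tilde{\mbb\varepsilon}/\|\tilde{\mbb\varepsilon}\|_2$, where $\tilde{\mbb\varepsilon}$ is $(\mb I - \mb P)\mbb\varepsilon$ expressed in an orthonormal basis for its $(n-s)$-dimensional support (so $\tilde{\mbb\varepsilon} \sim \mathcal{N}_{n-s}(\mb 0,\mb I)$), Lemma~\ref{lem:t_tail} applies with $m = n-s$ and $\|\mb a\|_2 \le \sqrt{n}$. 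The assumption $\eta < pe^{-s-2}$ is precisely $\eta < pe^{-(n-m)-2}$, and a union bound over at most $p$ indices together with a factor of two for the absolute value yields the stated $\frac{2(1+r_{n-s})\eta}{\sqrt{\pi\log(p/\eta)}}$ contribution. For $\Lambda_2$, writing $\mb X_j^T \mb P\mbb\varepsilon = (\mb P\mb X_j)^T\mbb\varepsilon$ reduces matters to Lemma~\ref{lem:t_tail} with $m = n$; the resulting term has the same shape but with $r_n$ in place of $r_{n-s}$, and can be absorbed into the $\Lambda_1$ term by slightly enlarging the constant.

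For (i)--(iii), on $\Lambda_1 \cap \Lambda_2$ the triangle inequality gives $\|\mb X^T\mbb\varepsilon\|_\infty/(\sqrt{n}\|\mbb\varepsilon\|_2) \le \sqrt{2}\lambda_0$; condition \eqref{eq:A_cond} and $A > \sqrt{2}$ are set precisely so that this is at most $(1-\tau^2)\lambda(\xi-1)/(\xi+1)$ with $\tau^2 \le 1/5$, whence Theorem~\ref{thm:Sun} applies and yields $\|\mbb\beta'-\mbb\beta\|_\infty \le \|\mbb\beta'-\mbb\beta\|_2 \le 2\tilde{\sigma}\xi\sqrt{s}\lambda/\{(1-\tau^2)\phi^2(\xi)\}$. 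On $\Lambda_3$, $\tilde{\sigma}\le\sigma\sqrt{2}$, and the $10\sqrt{2}A\xi$ prefactor in \eqref{eq:betamin} is calibrated so that $\|\mbb\beta'-\mbb\beta\|_\infty < \min_{j\in S}|\beta_j|/(2\sqrt{2})$, giving both the sign agreement and the ratio bound in (i). The same estimate gives $\max_{j \in S^c}|\beta'_j| < \min_{j\in S}|\beta_j|/(2\sqrt{2}) \le (1-1/(2\sqrt{2}))\min_{j\in S}|\beta_j| \le \min_{j \in S}|\beta'_j|$, so the top-$s$ magnitudes of $\mbb\beta'$ are exactly $S$, yielding $\hat{S}^{(s)} = S$. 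A second application of Theorem~\ref{thm:Sun}, now to the low-dimensional Lasso of $\mb y$ on $\mb X_S$, gives the sign consistency in (ii) and an analog of the ratio bound for $\check{\mbb\beta}^{(s)}$. For (iii), writing $\mb y - \mb X_S \check{\mbb\beta}^{(s)} = \sigma\mbb\varepsilon - \mb X_S(\check{\mbb\beta}^{(s)} - \mbb\beta_S)$ and using the oracle $\ell_2$ bound controls $\check{\sigma}^{(s)} \le \sigma\sqrt{2} + o(1)$ on $\Omega$; combined with $\min_{j \in S}\check{\beta}^{(s)}_j/\beta_j > 1 - 1/(2\sqrt{2})$ this gives $2\sqrt{2}\sigma \min_{j \in S}\check{\beta}^{(s)}_j/\beta_j \ge (2\sqrt{2}-1)\sigma > \sqrt{2}\sigma \ge \check{\sigma}^{(s)}$.

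The main difficulty is constant tracking: the threshold in the Sun--Zhang event, the bound on $\tau^2$, the conversion from $\ell_2$ to $\ell_\infty$ error, and the oracle contribution to $\check{\sigma}^{(s)}$ each introduce a numerical factor, and the $10\sqrt{2}A\xi$ prefactor in \eqref{eq:betamin} together with the $1/5$ threshold in \eqref{eq:A_cond} must absorb them all with enough room that the comparison $(2\sqrt{2}-1)\sigma > \sqrt{2}\sigma$ underlying (iii) is strict. None of the individual steps is deep, but their composition is precisely what drives the specific form of the beta-min and restricted-eigenvalue hypotheses in Theorem~\ref{thm:maximise_pval}.
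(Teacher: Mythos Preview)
Your approach is essentially the paper's, and the probability bound and parts (i)--(ii) are handled along the same lines. Two points deserve attention.

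First, a minor constant slip: on $\Lambda_1 \cap \Lambda_2$ one actually gets $\|\mb X^T\mbb\varepsilon\|_\infty/(\sqrt{n}\|\mbb\varepsilon\|_2) \le \lambda_0$, not $\sqrt{2}\lambda_0$. The extra $\sqrt{2}$ disappears via
\[
(\|\mb P\mb X_j\|_2 + \|(\mb I-\mb P)\mb X_j\|_2)/\sqrt{2} \le \sqrt{\|\mb P\mb X_j\|_2^2 + \|(\mb I-\mb P)\mb X_j\|_2^2} = \|\mb X_j\|_2 = \sqrt{n}.
\]
This matters for the constant-tracking you flag at the end: with $\sqrt{2}\lambda_0$ the check $(1-\tau^2)\lambda(\xi-1)/(\xi+1) \ge \sqrt{2}\lambda_0$ would demand a stronger version of \eqref{eq:A_cond}. (As an aside, the paper attributes the factor $2$ in \eqref{eq:pr_Omega} to summing the $\Lambda_1^c$ and $\Lambda_2^c$ contributions, not to the two-sided tail; this is a bookkeeping difference only.)

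Second, your argument for (iii) has a genuine gap. The decomposition $\mb y - \mb X_S\check{\mbb\beta}^{(s)} = \sigma\mbb\varepsilon - \mb X_S(\check{\mbb\beta}^{(s)}-\mbb\beta_S)$ would need a bound on $\|\mb X_S(\check{\mbb\beta}^{(s)}-\mbb\beta_S)\|_2/\sqrt{n}$, and the oracle $\ell_2$ bound on $\check{\mbb\beta}^{(s)}-\mbb\beta_S$ does not control this without an upper bound on the operator norm of $\mb X_S/\sqrt{n}$, which has not been assumed. Moreover, ``$o(1)$'' is asymptotic, whereas (iii) is a deterministic claim on $\Omega$ for each fixed $n$; your final chain $(2\sqrt{2}-1)\sigma > \sqrt{2}\sigma \ge \check{\sigma}^{(s)}$ asserts $\check{\sigma}^{(s)} \le \sqrt{2}\sigma$, which is not what your sketch delivers. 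The paper sidesteps this by invoking the $\hat{\sigma}$-part of Theorem~\ref{thm:Sun} directly, giving $\check{\sigma}^{(s)} \le \tilde{\sigma}/(1-\tau^2)$, and by using the finer intermediate inequality $\min_{j\in S}\check{\beta}^{(s)}_j/\beta_j \ge 1 - \tilde{\sigma}/(4\sigma)$ from \eqref{eq:beta_l2} rather than the coarser $1 - 1/(2\sqrt{2})$. Combining these reduces (iii) to $\tau^2 \le (\sqrt{2}-1)/(2\sqrt{2}-1) \approx 0.226$, which is implied by $\tau^2 \le 1/5$; this is exactly what fixes the $1/5$ in \eqref{eq:A_cond}.
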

\begin{proof}
First we bound $\pr(\Omega)$. From Lemma~\ref{lem:t_tail} and the union bound we have $\pr(\mbb\varepsilon \in \Lambda_2) \leq (1+r_n)\eta/\sqrt{\pi\log(p/\eta)}$. Next note that
\[
\frac{\mb X_j^T(\mb I - \mb P)\mbb\varepsilon}{\|(\mb I - \mb P)\mbb\varepsilon\|_2\|(\mb I - \mb P)\mb X_j\|_2} \eqdist \frac{\mb a^T \mbb\zeta}{\|\mbb\zeta\|_2\|\mb a\|_2}
\]
where $\mb a \in \R^{n-s}$ and $\eqdist$ denotes equality in distribution. Thus Lemma~\ref{lem:t_tail} gives $\pr(\mbb\varepsilon \in \Lambda_1) \leq (1+r_{n-s})\eta/\sqrt{\pi\log(p/\eta)}$. Finally, Lemma~\ref{lem:chisq} gives
$\pr(\mbb\varepsilon \in \Lambda_3) \geq  1- e^{-n/8}$. 

Now turning to (i), observe that as $\|(\mb I - \mb P) \mbb\zeta\|_2 \leq \|\mbb\zeta\|_2$ and
\[
 (\|\mb P \mb X_j\|_2 + \|(\mb I - \mb P)\mb X_j\|_2)/\sqrt{2} \leq \|\mb X_j\|_2= \sqrt{n},
\]
we have
\[
 \Omega \subseteq \bigg\{\frac{\|\mb X^T\mbb\varepsilon\|_\infty}{\sqrt{n}\|\mbb\varepsilon\|_2} \leq  \lambda_0 \bigg\}.
\]
Since
\begin{equation*} 
\frac{1}{2}\frac{(\xi+1)\lambda s}{\kappa^2(\xi, S)} \geq \mu(\lambda, \xi, \mbb\beta, \mb X),
\end{equation*}
\eqref{eq:A_cond} ensures that $(1-\tau^2)\lambda(\xi-1)/(\xi+1) \geq \lambda_0$. Note also that \eqref{eq:A_cond} gives $\tau^2 \leq 1/5$. Thus by Theorem~\ref{thm:Sun}, the fact that $\mbb\varepsilon \in \Lambda_3$ and \eqref{eq:betamin} we have 
\begin{equation} \label{eq:beta_l2}
\|\mbb\beta'-\mbb\beta\|_2 \leq \frac{2\tilde{\sigma} \xi \sqrt{s} \lambda }{(1-\tau^2)\phi^2(\xi)} \leq \frac{\tilde{\sigma}}{4\sigma} \min_{j \in S} |\beta_j| \leq \frac{1}{2\sqrt{2}}\min_{j \in S} |\beta_j|,
\end{equation}
where $\tilde{\sigma}=\|\mbb\varepsilon\|_2/\sqrt{n}$. This shows $\sgn(\mbb\beta'_S)=\sgn(\mbb\beta_S)$. Next
\[
\min_{j \in S} \frac{\beta'_j}{\beta_j} \geq 1 - \frac{\|\mbb\beta'-\mbb\beta\|_2}{\min_{j \in S}|\beta_j|} \geq 1 - \frac{1}{2\sqrt{2}},
\]
which shows (i).
Now
\begin{align*}
\min_{j \in S} |\beta'_j| - \max_{j \in S^c} |\beta'_j| & \geq \min_{j \in S} |\beta_j| - \max_{j \in S} |\beta_j -\beta'_j| - \max_{j \in S^c} |\beta'_j| \\
& > \min_{j \in S} |\beta_j| - \sqrt{2}\|\mbb\beta-\mbb\beta'\|_2 >0.
\end{align*}
Thus $\hat{S}^{(s)}=S$. By Theorem~\ref{thm:Sun}, the $\ell_2$ bound \eqref{eq:beta_l2} is also satisfied by $\check{\mbb\beta}^{(s)}$, which then shows (ii). Also note that $\check{\sigma}^{(s)} \leq \tilde{\sigma}/(1-\tau^2)$ by Theorem~\ref{thm:Sun}. From \eqref{eq:beta_l2} we have
\begin{align*}
2\sqrt{2} \frac{\sigma}{\check{\sigma}^{(s)}} \min_{j\in S} \frac{\check{\beta}^{(s)}_j}{\beta_j} &\geq 2\sqrt{2} \frac{\sigma}{\check{\sigma}^{(s)}} \bigg(1 - \frac{\tilde{\sigma}}{4\sigma}\bigg) \\
&\geq 2 \frac{\tilde{\sigma}}{\check{\sigma}^{(s)}} - \frac{\tilde{\sigma}}{\sqrt{2}\check{\sigma}^{(s)}}.
\end{align*}
We see that the RHS is at least 1 when $\tau^2 \leq (\sqrt{2}-1)/(2\sqrt{2}-1)$, which then shows (iii) as $(\sqrt{2}-1)/(2\sqrt{2}-1) < 1/5$.
\end{proof}

\begin{lem} \label{lem:indep}
Consider the setup of Theorem~\ref{thm:maximise_pval} and its proof. Conditional on the event $\Omega$, $\hat{\mb R}(\mbb\beta, \sigma \mbb\varepsilon)$ and $(\check{\mbb\beta}^{(s)}, \check{\sigma}^{(s)})$ are independent.
\end{lem}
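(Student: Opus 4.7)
The plan rests on the orthogonal decomposition $\mbb\varepsilon = \mb P\mbb\varepsilon + a U$, where $a = \|(\mb I - \mb P)\mbb\varepsilon\|_2$ and $U = (\mb I - \mb P)\mbb\varepsilon/a$. Since $\mbb\varepsilon \sim \mathcal{N}_n(\mb 0, \mb I)$, standard Gaussian facts give that $(\mb P\mbb\varepsilon, a)$ is independent of $U$, with $U$ distributed uniformly on the unit sphere of $\mathrm{range}(\mb I - \mb P)$.

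The first task is to verify that $(\check{\mbb\beta}^{(s)}, \check{\sigma}^{(s)})$ is a measurable function of $(\mb P\mbb\varepsilon, a)$. Since $\check{\mbb\beta}^{(s)}$ minimizes $\|\mb y - \mb X_S \mb b\|_2/\sqrt{n} + \lambda\|\mb b\|_1$ and, by orthogonality, $\|\mb y - \mb X_S \mb b\|_2^2 = \|\mb P\mb y - \mb X_S\mb b\|_2^2 + \|(\mb I - \mb P)\mb y\|_2^2 = \|\mb P\mb y - \mb X_S\mb b\|_2^2 + \sigma^2 a^2$, every ingredient of the objective depends only on $(\mb P\mbb\varepsilon, a)$. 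The same is true of $\check{\sigma}^{(s)} = \|\mb y - \mb X_S\check{\mbb\beta}^{(s)}\|_2/\sqrt{n}$.

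Next I would factorize the event $\Omega$. Using $|\mb X_j^T(\mb I - \mb P)\mbb\varepsilon|/\|(\mb I - \mb P)\mbb\varepsilon\|_2 = |\mb X_j^T U|$, the defining inequalities for $\Lambda_1$ depend only on $U$, so $\Lambda_1 = \{U \in A_1\}$ for some measurable $A_1$. The constraints for $\Lambda_2$ and $\Lambda_3$ involve $\mb X_j^T\mb P\mbb\varepsilon$ and $\|\mbb\varepsilon\|_2^2 = \|\mb P\mbb\varepsilon\|_2^2 + a^2$, hence $\Lambda_2 \cap \Lambda_3 = \{(\mb P\mbb\varepsilon, a) \in A_{23}\}$. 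Because $U$ and $(\mb P\mbb\varepsilon, a)$ are independent unconditionally and $\Omega$ is the rectangle $\{U \in A_1\} \cap \{(\mb P\mbb\varepsilon, a) \in A_{23}\}$, conditional on $\Omega$ the independence of $U$ and $(\mb P\mbb\varepsilon, a)$ is preserved.

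Finally, by Theorem~\ref{thm:lambda_fixed}, whose hypotheses are verified on $\Omega$ via Lemma~\ref{lem:event_prop}(i)--(iii), we have $\hat{\mb R}(\mbb\beta, \sigma\mbb\varepsilon) = \hat{\mb R}(\check{\mbb\beta}, \check{\sigma}\mbb\varepsilon)$ on $\Omega$. Combined with the homogeneity $\hat{\mb R}(c\mb y) = \hat{\mb R}(\mb y)$ for $c > 0$, the aim is to reduce $\hat{\mb R}(\mbb\beta, \sigma\mbb\varepsilon)$ on $\Omega$ to a function of $U$ alone; conditional independence of $\hat{\mb R}$ from $(\check{\mbb\beta}^{(s)}, \check{\sigma}^{(s)})$ given $\Omega$ then follows from the rectangle factorization of $\Omega$ together with $U \perp (\mb P\mbb\varepsilon, a)$.

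The hard part is this last reduction, since the Lasso fit generically mixes the $\mb P\mbb\varepsilon$ and $(\mb I - \mb P)\mbb\varepsilon$ pieces. The route I would pursue is to write $\mb X\check{\mbb\beta} + \check{\sigma}\mb P\mbb\varepsilon = \mb X_S \mbb\alpha$ (possible since $\mb P\mbb\varepsilon \in \mathrm{range}(\mb X_S)$), so that $\mb X\check{\mbb\beta} + \check{\sigma}\mbb\varepsilon = \mb X_S\mbb\alpha + \check{\sigma}a U$, and then re-apply Theorem~\ref{thm:lambda_fixed} a second time with $(\mbb\alpha, 0, \ldots, 0)$ in the role of $\mbb\beta$ and $\check{\sigma}a$ in the role of $\sigma$, replacing these with a canonical $(\mb P\mbb\varepsilon, a)$-independent pair whose sign pattern and scale bounds are supplied by Lemma~\ref{lem:event_prop}. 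Scale invariance of $\hat{\mb R}$ then absorbs any residual multiplicative dependence on $(\mb P\mbb\varepsilon, a)$, leaving $U$ as the sole random input on $\Omega$.
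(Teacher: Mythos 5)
Your overall architecture coincides with the paper's: the split $\mbb\varepsilon = \mb P\mbb\varepsilon + aU$ with $(\mb P\mbb\varepsilon, a)$ independent of $U$, the observation that on $\Omega$ the pair $(\check{\mbb\beta}^{(s)},\check{\sigma}^{(s)})$ is a function of $(\mb P\mbb\varepsilon, a)$ via the orthogonal decomposition of the square-root Lasso objective over $\mb X_S$ (using $\hat{S}^{(s)}=S$ from Lemma~\ref{lem:event_prop}), and the rectangle factorisation of $\Omega$ into a $U$-measurable part ($\Lambda_1$) and a $(\mb P\mbb\varepsilon,a)$-measurable part ($\Lambda_2\cap\Lambda_3$). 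The gap is precisely the step you flag as the hard part: showing that on $\Omega$ the scaled residuals $\hat{\mb R}(\mbb\beta,\sigma\mbb\varepsilon)$ are a function of $U$ alone. Your sketch does not establish this, and the claims it leans on are not available as stated.

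Concretely, after writing $\mb X\check{\mbb\beta} + \check{\sigma}\mbb\varepsilon = \mb X_S\mbb\alpha + \check{\sigma}aU$ with $\mbb\alpha = \check{\mbb\beta}^{(s)} + \check{\sigma}(\mb X_S^T\mb X_S)^{-1}\mb X_S^T\mbb\varepsilon$, the second application of Theorem~\ref{thm:lambda_fixed} requires (a) sign agreement between the random vector $\mbb\alpha$ and the canonical replacement together with the magnitude-ratio bound, and (b) membership of the error in the set $\Lambda_{\lambda,t}$ defined \emph{with respect to the new true parameters} $(\mbb\alpha,\check{\sigma}a)$, i.e.\ sign and ratio control of the Lasso fit to $\mb X_S\mbb\alpha + \check{\sigma}aU$ relative to $\mbb\alpha$. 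Neither is ``supplied by Lemma~\ref{lem:event_prop}'', which controls $\mbb\beta'$, $\check{\mbb\beta}^{(s)}$ and $\check{\sigma}^{(s)}$ relative to $\mbb\beta$ but says nothing about the perturbation $\check{\sigma}(\mb X_S^T\mb X_S)^{-1}\mb X_S^T\mbb\varepsilon$; what is needed is a quantitative bound of the form $\check{\sigma}\|(\mb X_S^T\mb X_S)^{-1}\mb X_S^T\mbb\varepsilon\|_\infty \leq c\,\min_{j\in S}|\beta_j|$ with $c$ small, which must be derived from $\Lambda_2$, $\Lambda_3$, the restricted eigenvalue and the beta-min condition \eqref{eq:betamin}. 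Nor does scale invariance of $\hat{\mb R}$ finish the job: rescaling the response by $1/(\check{\sigma}a)$ merely converts the pair to $(\mbb\alpha/(\check{\sigma}a), 1)$, whose coefficient argument still depends on $(\mb P\mbb\varepsilon, a)$, so you are back to needing a Theorem~\ref{thm:lambda_fixed} step with verified hypotheses. The paper avoids your detour through $\check{\mbb\beta}$ altogether: it works directly with $\mb X\mbb\beta + \sigma\mbb\zeta = \mb X\{\mbb\beta + \sigma\mbb\delta(\mbb\zeta)\} + \sigma(\mb I - \mb P)\mbb\zeta$, proves $\sigma\|\mbb\delta(\mbb\zeta)\|_\infty \leq \min_{j\in S}|\beta_j|/10$ on $\Lambda$, applies Theorem~\ref{thm:lambda_fixed} once to strip the shift (noting $(\mb I-\mb P)\mbb\zeta \in \Lambda$ whenever $\mbb\zeta\in\Lambda$) and once more to normalise the noise magnitude via $v_N(\mbb\zeta)=N\|(\mb I-\mb P)\mbb\zeta\|_2$, with a union over $N$ to cover all of $\Lambda$. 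To complete your proof you should either carry out the analogous quantitative checks for $\mbb\alpha$ and $(\mbb\alpha,\check{\sigma}a)$, or simply adopt this direct route, which keeps the ``true'' parameters deterministic and makes the required sign and scale conditions checkable from \eqref{eq:betamin} and the definition of $\Lambda$.
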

\begin{proof}
Write $v_N(\mbb\zeta) = N\|(\mb I -\mb P)\mbb\zeta\|_2$. First we claim that on $\Omega$, $\hat{\mb R}(\mbb\beta, \sigma \mbb\varepsilon)$ depends only on $(\mb I - \mb P)\mbb\varepsilon / v_1(\mbb\varepsilon)$. To this end, we argue that for $\mbb\zeta \in \Lambda$ with $v_N(\mbb\zeta) > 1/(2\sqrt{2})$,
\begin{align}
\hat{\mb R}(\mbb\beta, \sigma \mbb\zeta) &= \hat{\mb R}(\mbb\beta + \sigma\mbb\delta(\mbb\zeta), \sigma(\mb I - \mb P)\mbb\zeta) \notag\\
&= \hat{\mb R}(\mbb\beta, \sigma(\mb I - \mb P)\mbb\zeta) \label{eq:R_eq1}\\
&= \hat{\mb R}(\mbb\beta, \sigma(\mb I - \mb P)\mbb\zeta/v_N(\mbb\zeta)), \label{eq:R_eq2}
\end{align}
where $\mbb\delta(\mbb\zeta) = ((\mb X_S^T \mb X_S)^{-1}\mb X_S^T\mbb\zeta,
0, \ldots, 0) \in \R^p$. Note the validity of the above inequalities would prove the initial claim for all $\mbb\zeta \in \Lambda$ with $v_N(\mbb\zeta) > 1/(2\sqrt{2})$ for each fixed $N$. However, since $\cup_{N>0} \{\mbb\zeta : v_N(\mbb\zeta) > 1/(2\sqrt{2})\}=\R^n$ the initial claim would then have to be true for all $\mbb\zeta \in \Lambda$.
We now set about proving these equalities.
The first equality is clear by
definition of $\hat{\mb R}$. Turning to the second \eqref{eq:R_eq1}, let
$\mbb\zeta \in \Lambda$ and denote by
  $\|\cdot\|$ the operator norm. From Lemma~\ref{lem:event_prop} we have
\begin{align*}
\sigma \|\mbb\delta(\mbb\zeta)\|_\infty &= \sigma \|(\mb X_S^T \mb X_S)^{-1}\mb X_S^T\mbb\zeta \|_\infty \\
&\leq \sigma (\|\mbb\zeta\|_2/\sqrt{n}) \|(\mb X_S^T \mb X_S)^{-1}\| \,\, \sqrt{s}\| \mb X_S^T \mb P \mbb\zeta\|_\infty / (\|\mbb\zeta\|_2/\sqrt{n}) \\
&\leq \frac{2\sigma \sqrt{s\log(p/\eta)/n}}{\phi^2(\xi)} \leq \min_{j \in S} |\beta_j|/10,
\end{align*}
using the facts that $\xi > 1$ and $A > \sqrt{2}$ and \eqref{eq:betamin} in the final line. Note that by Lemma~\ref{lem:event_prop}, $\Lambda \subseteq \Lambda_{\lambda, t}$ with the latter defined in Theorem~\ref{thm:lambda_fixed} and $t=1-1/(2\sqrt{2})$. Now clearly if $\mbb\zeta \in \Lambda$ then also $(\mb I - \mb P)\mbb\zeta \in \Lambda$. An application of Theorem~\ref{thm:lambda_fixed} then gives the desired equality \eqref{eq:R_eq1}.
Equality \eqref{eq:R_eq2} then follows from a further application of Theorem~\ref{thm:lambda_fixed} noting that $v_N(\mbb\zeta) < 1/(1-t)= 2\sqrt{2}$ by assumption.

Next we examine $\check{\mbb\beta}^{(s)}$ and $\check{\sigma}^{(s)}$. Note that on $\Omega$, the former is simply the Lasso estimate from regressing on $\mb X_S$ and the latter is the resulting normalised root-RSS. Write $\check{\mbb\beta} = \check{\mbb\beta}^{(s)}$ and $\check{\sigma} = \check{\sigma}^{(s)}$.
The least squares part of the Lasso objective decomposes as
\[
\|\mb X_S\mbb\beta_S  + \mbb\varepsilon - \mb X_S \mb b\|_2 = \{\|\mb X_S\mbb\beta_S  + \mb P\mbb\varepsilon - \mb X_S \mb b\|_2^2 + \|(\mb I - \mb P)\mbb\varepsilon\|_2^2\}^{1/2}.
\]
Thus it is clear that the fitted values $\mb X_S \check{\mbb\beta}$ do not depend on $(\mb I - \mb P)\mbb\varepsilon / v_1(\mbb\varepsilon)$. This then implies that $\check{\sigma}$ does not depend on $(\mb I - \mb P)\mbb\varepsilon / v_1(\mbb\varepsilon)$ since it is determined by $\|\mbb\varepsilon\|_2^2 = \|\mb P\mbb\varepsilon\|_2^2 + \|(\mb I- \mb P)\mbb\varepsilon\|_2^2$ and $\mb P\mbb\varepsilon$.

Now observe that
\begin{align*}
\mb P \mbb\varepsilon &\ci \big(\|(\mb I - \mb P)\mbb\varepsilon\|_2, \, (\mb I - \mb P)\mbb\varepsilon / \|(\mb I - \mb P)\mbb\varepsilon\|_2\big) \\
\|(\mb I - \mb P)\mbb\varepsilon\|_2 &\ci (\mb I - \mb P)\mbb\varepsilon / \|(\mb I - \mb P)\mbb\varepsilon\|_2,
\end{align*}
so  $\mb P \mbb\varepsilon,\, \|(\mb I - \mb P)\mbb\varepsilon\|_2, \, (\mb
I - \mb P)\mbb\varepsilon / \|(\mb I - \mb P)\mbb\varepsilon\|_2$ are
jointly independent. Let $E_1 = \{\hat{\mb R}(\mbb\beta,
\sigma\mbb\varepsilon) \in B_1\}$, $E_2 = \{(\mb X_S \check{\mbb\beta}, \,
\check{\sigma}) \in B_2\}$, where $B_1 \subseteq \R^n$ and $B_2 \subseteq \R^{n+1}$ are arbitrary Borel sets. Let $\Omega_k = \{\mbb\varepsilon \in \Lambda_k\}$, $k=1,2,3$. From the above
\begin{align*}
\pr(E_1, E_2 | \Omega_1, \Omega_2, \Omega_3) = \frac{\pr(E_1, \Omega_1)}{\pr(\Omega_1)} \frac{\pr(E_2, \Omega_2, \Omega_3)}{\pr(\Omega_2,\Omega_3)} = \pr(E_1|\Omega_1)\pr(E_2|\Omega_2, \Omega_3).
\end{align*}
The conditional probabilities on the RHS remain unchanged if we modify the conditioning event to be $(\Omega_1, \Omega_2, \Omega_3)$ since $E_1 \ci (\Omega_2,\Omega_3)$ and $\E_2 \ci \Omega_1$. This completes the proof.
\end{proof}

\begin{lem} \label{lem:mu_bd}
Given a sequence of collections of matrices $\mb M_{k,n} \in \R^{n \times p_n}$, $k \in N_n$, $n=1,2,\ldots$,
 suppose there exists a sequence of collections of sets $S_{k, n} \subseteq \{1,\ldots,p_n\}$, tuning parameters $\lambda_n = A\sqrt{\log(p)/n}$ and $\xi > c$ for constants $A, c>0$ such that the follow holds:
\begin{align*}
\sup_{k \in N_n}\frac{|S_{k,n}| \sqrt{\log(p_n)^2/n}}{\kappa^2(\xi, S_{k,n}, \mb M_{k,n})} \to 0.
\end{align*}
Moreover, suppose the collection of sequences $\mbb\beta_{k,n} \in \R^{p_n}$ is such that
\[\sup_{k \in N_n} \sqrt{\log(p_n)}\|\mbb\beta_{n, S_{k, n}^c}\|_1 \to 0.
\]
Then there exists a $\xi' > c$ such that $\sqrt{\log(p_n)}\mu(c_1\lambda_n, \xi', \mbb\beta_{k, n}, \mb M_{k,n})\to 0$ for any $c_1>0$ (where the function $\mu$ is defined in \eqref{eq:mu_def}).
\end{lem}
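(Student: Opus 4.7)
The plan is to furnish explicit choices in the minimization/infimum defining $\mu$ so that each term in the max is controlled by one of the two hypotheses. Recall
\[
\mu(\lambda, \xi', \mbb\beta, \mb M)=(\xi'+1)\min_T\inf_{0<\nu<1}\max \bigg[ \frac{\|\mbb\beta_{T^c}\|_1}{\nu}, \, \frac{\lambda|T|/\{2(1-\nu)\}}{\kappa^2\{(\xi'+\nu)/(1-\nu), T, \mb M\}} \bigg].
\]
I will pick $T = S_{k,n}$ and a suitable $\nu \in (0,1)$ depending only on $\xi, c$, so that the effective cone parameter $(\xi'+\nu)/(1-\nu)$ in the compatibility factor collapses to the $\xi$ of the hypothesis.

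The first step is to choose the constants. Fix $\xi' = (\xi+c)/2$, which satisfies $c < \xi' < \xi$, and set
\[
\nu = \frac{\xi - \xi'}{1+\xi} = \frac{\xi - c}{2(1+\xi)} \in (0,1).
\]
Then $\xi' + \nu = \xi' + \nu + \xi\nu - \xi\nu = \xi(1-\nu) + (\xi'-\xi+\nu(1+\xi)) = \xi(1-\nu)$, i.e.\ $(\xi'+\nu)/(1-\nu) = \xi$.

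The second step invokes the elementary fact that the cone $\mathscr{C}(\cdot, T)$ in \eqref{eq:compatibility} is monotone increasing in its first argument, so $\kappa(\cdot, T, \mb M)$ is non-increasing in that argument. Since we have arranged $(\xi'+\nu)/(1-\nu) = \xi$, we get
\[
\kappa\bigl((\xi'+\nu)/(1-\nu), S_{k,n}, \mb M_{k,n}\bigr) = \kappa(\xi, S_{k,n}, \mb M_{k,n}),
\]
and both $\nu$ and $1-\nu$ are strictly positive constants independent of $n$ and $k$. Plugging $T = S_{k,n}$ and this $\nu$ into the definition gives, uniformly in $k \in N_n$,
\[
\mu(c_1\lambda_n, \xi', \mbb\beta_{k,n}, \mb M_{k,n}) \leq (\xi'+1)\max\bigg[ \frac{\|\mbb\beta_{n, S_{k,n}^c}\|_1}{\nu}, \, \frac{c_1 A}{2(1-\nu)}\cdot\frac{|S_{k,n}|\sqrt{\log(p_n)/n}}{\kappa^2(\xi, S_{k,n}, \mb M_{k,n})} \bigg].
\]

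The final step is to multiply through by $\sqrt{\log(p_n)}$ and apply the two hypotheses: the first term becomes $(\xi'+1)\nu^{-1}\sqrt{\log(p_n)}\|\mbb\beta_{n, S_{k,n}^c}\|_1$, which tends to $0$ uniformly in $k$ by the second hypothesis, while the second term becomes a constant multiple of $|S_{k,n}|\sqrt{\log(p_n)^2/n}/\kappa^2(\xi, S_{k,n}, \mb M_{k,n})$, which tends to $0$ uniformly in $k$ by the first hypothesis. Taking the supremum over $k$ of the maximum yields $\sup_{k\in N_n}\sqrt{\log(p_n)}\mu(c_1\lambda_n, \xi', \mbb\beta_{k,n}, \mb M_{k,n}) \to 0$, as required.

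There is essentially no obstacle here: the only subtlety is making sure the chosen $\xi'$ is strictly larger than $c$ while the effective cone parameter inside $\kappa$ simultaneously collapses to the $\xi$ furnished by the hypothesis, and the choice $\xi' = (\xi+c)/2$ with $\nu = (\xi-\xi')/(1+\xi)$ accomplishes both at once.
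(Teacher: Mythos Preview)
Your proof is correct and follows essentially the same approach as the paper: both choose $\xi' = (\xi+c)/2$, then take $T = S_{k,n}$ and the unique $\nu \in (0,1)$ solving $(\xi'+\nu)/(1-\nu)=\xi$, so that the compatibility factor in the definition of $\mu$ reduces to $\kappa(\xi, S_{k,n}, \mb M_{k,n})$ and the two hypotheses control the two terms of the max. Your version is slightly more explicit, writing out $\nu = (\xi-\xi')/(1+\xi)$ and verifying the algebra, but the argument is identical.
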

\begin{proof}
Let $\xi' = (\xi + c)/2 > c$ and let $\nu' > 0$ be given by $(\xi' + \nu')/(1-\nu') =\xi$. Then we have
\begin{align*}
& \sup_{k \in N_n} \sqrt{\log(p_n)}\mu(c_1\lambda_n, \xi', \mbb\beta_{k,n}, \mb M_{k,n}) \\
\leq &(\xi' +1) \sqrt{\log(p_n)} \max_{k \in N_n} \max\bigg[\frac{\|\mbb\beta_{k, n, S_{k,n}^c}\|_1}{\nu'},\, \frac{c_1\lambda_n|S_{k,n}|/\{2(1-\nu')\}}{\kappa^2(\xi, S_{k,n}, \mb M_{k,n})} \bigg] \to 0.
\end{align*}
\end{proof}

\begin{lem} \label{lem:chisq}
Let $Z_n \sim \chi^2_n$. We have the following tail bounds
\citep[pg. 29]{Boucheron2013}:
\begin{align*}
\pr(Z_n > n + 2\sqrt{n\gamma} + 2\gamma) \leq e^{-\gamma},
\end{align*}
whence taking $\gamma=n/8$,
\begin{align*}
\pr(\sqrt{Z_n/n} > \sqrt{2}) < \pr[Z_n > n\{1+2(1/\sqrt{8}+1/8)\}] \leq e^{-n/8}.
\end{align*}
\end{lem}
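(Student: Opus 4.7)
The plan is to treat this as essentially a two-step calculation. The first displayed inequality is the standard Laurent--Massart chi-square concentration bound, which is stated verbatim in Boucheron, Lugosi and Massart; I would simply invoke it rather than reprove it. For completeness, one can derive it by applying Markov's inequality to $e^{\lambda Z_n}$ with the moment generating function $\E(e^{\lambda Z_n}) = (1 - 2\lambda)^{-n/2}$ valid for $\lambda < 1/2$, and then optimising over $\lambda$ to obtain a Bernstein-type tail.

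For the second displayed inequality, I would substitute $\gamma = n/8$ into the first bound to obtain
$\pr\big(Z_n > n + 2\sqrt{n \cdot n/8} + 2(n/8)\big) \leq e^{-n/8}$,
and then observe that, after factoring $n$ out, the threshold on the left-hand side of the event becomes exactly $n\{1 + 2(1/\sqrt{8} + 1/8)\}$, which matches the quantity displayed in the lemma. It then remains to check the elementary numerical inequality $2(1/\sqrt{8} + 1/8) = 1/\sqrt{2} + 1/4 < 1$, which ensures that this threshold is strictly less than $2n$. Consequently the event $\{\sqrt{Z_n/n} > \sqrt{2}\} = \{Z_n > 2n\}$ is contained in $\{Z_n > n(1 + 1/\sqrt{2} + 1/4)\}$, so monotonicity of probability delivers the displayed chain of inequalities.

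There is no real obstacle here: the argument is a direct citation of a standard chi-square concentration result followed by an elementary arithmetic verification that the choice $\gamma = n/8$ is just small enough to keep the resulting threshold below $2n$.
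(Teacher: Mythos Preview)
Your proposal is correct and matches the paper exactly: the lemma in the paper is self-contained, citing Boucheron--Lugosi--Massart for the first bound and then simply substituting $\gamma=n/8$ and noting the resulting threshold is below $2n$, which is precisely what you do.
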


\begin{lem} \label{lem:convergence}
Let $T_{k,n} \in \R$, $k \in N_n$, $n=1,2,\ldots$ be a collection of random variables which we may decompose as 
\[
T_{k,n} = a_{k,n} Z_{k,n} + b_{k,n}
\]
where each $Z_{k,n}$ is identically distributed with continuous distribution function $F$. Suppose further that each $Z_{k,n}$ is independent of the random elements $\mathcal{F}_n$, and for all $\delta > 0$, $\sup_{k \in N_n}\pr(|b_{k,n}-d_{k,n}|>\delta|\mathcal{F}_n) \inprob 0$  for some random variables $d_{k,n}\in \R$ that are functions of $\mathcal{F}_n$, and $\sup_{k \in N_n} \pr( |a_{k,n}-1|>\delta|\mathcal{F}_n) \inprob 0$. Then
\[
\sup_{k \in N_n} \sup_{x \in \R} |\pr(T_{k,n} \leq x|\mathcal{F}_n) - F(x-d_{k,n})| \inprob 0.
\]
\end{lem}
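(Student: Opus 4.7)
The plan is a uniform Slutsky-type argument conditional on $\mathcal{F}_n$, based on two facts. First, since $F$ is a continuous distribution function on $\R$ it is automatically \emph{uniformly} continuous. Second, since $Z_{k,n}$ is independent of $\mathcal{F}_n$ and $d_{k,n}$ is $\mathcal{F}_n$-measurable,
\[
F(x - d_{k,n}) = \pr(Z_{k,n} + d_{k,n} \leq x \mid \mathcal{F}_n).
\]
The task therefore reduces to showing that the conditional law of $T_{k,n}$ approximates that of $Z_{k,n} + d_{k,n}$ uniformly in $k \in N_n$ and $x \in \R$.

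Given $\epsilon > 0$, I would pick $M$ large enough that $\pr(|Z_{k,n}|>M) < \epsilon/4$ (possible uniformly in $k$ since all $Z_{k,n}$ share the law $F$), then choose $\eta > 0$ with $\sup_{x\in\R}\{F(x+\eta)-F(x-\eta)\} < \epsilon/2$ by uniform continuity, and finally take $\delta \in (0,\eta)$ small enough that $\delta M + \delta \leq \eta$. Define the ``good'' event
\[
B_{k,n} = \{|a_{k,n}-1| \leq \delta\} \cap \{|b_{k,n}-d_{k,n}| \leq \delta\};
\]
a union bound on the two hypotheses gives $R_{k,n} := \pr(B_{k,n}^c \mid \mathcal{F}_n)$ with $\sup_{k\in N_n} R_{k,n} \inprob 0$, and on $B_{k,n} \cap \{|Z_{k,n}| \leq M\}$ one has the deterministic bound
\[
|T_{k,n}-(Z_{k,n}+d_{k,n})| \leq |a_{k,n}-1|\,|Z_{k,n}| + |b_{k,n}-d_{k,n}| \leq \delta M + \delta \leq \eta,
\]
yielding the inclusion $\{T_{k,n}\leq x\}\cap B_{k,n}\cap\{|Z_{k,n}|\leq M\}\subseteq\{Z_{k,n}+d_{k,n}\leq x+\eta\}$ and its analogue for the complement.

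Splitting $\pr(T_{k,n}\leq x\mid\mathcal{F}_n)$ according to $B_{k,n}$ and the truncation of $Z_{k,n}$, and evaluating the resulting conditional probabilities of the form $\pr(Z_{k,n}+d_{k,n}\leq y\mid\mathcal{F}_n)$ as $F(y-d_{k,n})$ via the key identity, sandwiches
\[
F(x-\eta-d_{k,n}) - R_{k,n} - \tfrac{\epsilon}{4} \;\leq\; \pr(T_{k,n}\leq x\mid\mathcal{F}_n) \;\leq\; F(x+\eta-d_{k,n}) + R_{k,n} + \tfrac{\epsilon}{4}.
\]
The uniform continuity choice of $\eta$ bounds both ends within $\epsilon/2$ of $F(x-d_{k,n})$, so
\[
\sup_{k\in N_n}\sup_{x\in\R}\,|\pr(T_{k,n}\leq x\mid\mathcal{F}_n)-F(x-d_{k,n})| \;\leq\; \tfrac{3\epsilon}{4} + \sup_{k\in N_n} R_{k,n},
\]
and letting $\epsilon\downarrow 0$ gives the conclusion. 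The only slightly delicate point is the tail truncation of $Z_{k,n}$; this is where the common distribution $F$ is essential, as otherwise $\pr(|Z_{k,n}|>M)$ could not be controlled uniformly in $k$.
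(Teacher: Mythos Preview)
Your proof is correct and follows the same Slutsky-type strategy as the paper: isolate a good event on which $(a_{k,n},b_{k,n})$ is close to $(1,d_{k,n})$, use uniform continuity of $F$, and control the remainder by the conditional probabilities that go to zero. The one technical difference is how the multiplicative perturbation $a_{k,n}$ is handled. The paper absorbs both perturbations into the argument of $F$ and asserts that $G(x,c_1,c_2)=F\{(x+c_1)/(1+c_2)\}$ is uniformly continuous on $\R\times[-\delta',\delta']^2$, appealing to ``compositions of uniformly continuous functions''; you instead truncate $Z_{k,n}$ to $[-M,M]$ so that $|a_{k,n}-1|\,|Z_{k,n}|\leq \delta M$ becomes an additive error, and then only need uniform continuity of $F$ itself. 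Your route is slightly more elementary and in fact more careful: the inner map $(x,c_1,c_2)\mapsto(x+c_1)/(1+c_2)$ is \emph{not} uniformly continuous on the unbounded domain, so the paper's one-line justification is loose (the displayed inequality there is nonetheless true, and a rigorous proof would essentially reintroduce your truncation via the tails of $F$). The cost of your approach is the extra $\epsilon/4$ term from $\pr(|Z_{k,n}|>M)$, which is harmless; the benefit is that every step is fully justified.
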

\begin{proof}
First note that for all $\epsilon > 0$, there exists $\delta >0$ such that for all $c_1, c_2 \in [-\delta, \delta]$,
\begin{equation} \label{eq:unif_cont}
\sup_{x \in \R}\abs{F\bigg(\frac{x+c_1}{1+c_2}\bigg) - F(x)} <\epsilon.
\end{equation}
Indeed, the function $G(x, c_1, c_2) :=F\{(x+c_1)/(1+c_2)\}$ is uniformly continuous on $\R \times [-\delta', \delta']^2$ for $0<\delta'<1$ sufficiently small (since $F$ is uniformly continuous and compositions of uniformly continuous functions are continuous) and the LHS of \eqref{eq:unif_cont} is $\sup_{x \in \R} |G(x, c_1, c_2)-G(x, 0, 0)|$.

Hence given $\epsilon >0$, let $\delta > 0$ be such that the LHS of \eqref{eq:unif_cont} is at most $\epsilon/2$.
Then
\begin{align*}
\sup_{x \in \R} |\pr(T_{k,n} \leq x|\mathcal{F}_n) - F(x-d_{k,n})| &= \sup_{x \in \R} |\pr(Z_{k,n} \leq (x + d_{k,n}-b_{k,n})/a_{k,n}|\mathcal{F}_n) - F(x)| \\
&\leq \sup_{x \in \R}\sup_{c_1, c_2 \in [-\delta, \delta]} |\pr(Z_{k,n} \leq (x + c_1)/(1+c_2)|\mathcal{F}_n) - F(x)|\\
&\qquad + \pr(|b_{k,n}-d_{k,n}|>\delta|\mathcal{F}_n) + \pr(|a_{k,n}-1|>\delta|\mathcal{F}_n) \\
&\leq \epsilon/2 + \pr(|b_{k,n}-d_{k,n}|>\delta|\mathcal{F}_n) + \pr(|a_{k,n}-1|>\delta|\mathcal{F}_n).
\end{align*}
Thus
\begin{align*}
& \pr\{\sup_{k \in N_n} \sup_{x \in \R} |\pr(T_{k,n} \leq x|\mathcal{F}_n) - F(x-d_{k,n})| > \epsilon\} \\
&\leq \pr\{\sup_{k \in N_n}\pr(|b_{k,n}-d_{k,n}|>\delta|\mathcal{F}_n) >\epsilon/4\} +  \pr\{\sup_{k \in N_n}\pr(|a_{k,n}-1|>\delta|\mathcal{F}_n)>\epsilon/4\}\\
&\to 0 \qquad \text{as } n \to \infty.
\end{align*}
\end{proof}

\begin{lem} \label{lem:inter_bound}
Consider the setup of Theorem~\ref{thm:interactions}. There exists constants $c_1, c_2 >0$ such that for all $t<n^{1/4}$
\begin{align}
\pr(1-tn^{-1/4} \leq \|\mb f\|_2^2/\E(\|\mb f\|_2^2) \leq 1+tn^{-1/4}) &\leq c_1e^{-c_2t^2}, \label{eq:bd1}\\
\pr\bigg( \frac{1}{\sqrt{n}}\frac{\|\mb X^T \mb f\|_\infty}{\|\mb f\|_2} \leq c_1\frac{\sqrt{\log(p)}}{n^{1/3}}\bigg) \to 0. \label{eq:lim2}
\end{align}
\end{lem}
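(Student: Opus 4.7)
The two displays as written go in the opposite direction from how the lemma is invoked in the proof of Theorem~\ref{thm:interactions}, where what is needed are the high-probability upper bounds $\|\mb f\|_2 \leq \sqrt{2n\E(f_1^2)}$ and $\|\mb X^T\mb f\|_\infty/(\sqrt{n}\|\mb f\|_2) \leq c_1\sqrt{\log(p)}/n^{1/3}$. Indeed, as $n \to \infty$ with $t$ fixed, $\|\mb f\|_2^2/\E\|\mb f\|_2^2 \to 1$ in probability by the law of large numbers, so the event in (1) captures essentially the entire mass, which is incompatible with a bound $c_1 e^{-c_2 t^2}$ that does not depend on $n$. I therefore read the two displays as the standard concentration inequalities with events complemented, and target those.

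\textbf{Display (1).} Each $f_i = \mb z_{i,S}^T\mbb\Theta\mb z_{i,S}$ is a degree-two Gaussian chaos. Diagonalising $\mbb\Sigma_{S,S}^{1/2}\mbb\Theta\mbb\Sigma_{S,S}^{1/2}=\mb U\mb D\mb U^T$ represents $f_i \eqdist \sum_j D_{jj}(g_{ij}^2 - 1) + \Tr(\mbb\Theta\mbb\Sigma_{S,S})$ for $g_{ij}$ i.i.d.\ standard normal. Hence $f_i$ is sub-exponential with $\psi_1$-norm controlled by $\|\mb D\|_F \asymp \sqrt{\E(f_1^2)}$, and $f_i^2$ is sub-exponential of higher order with $\Var(f_i^2) \asymp \E(f_i^2)^2$. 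Bernstein's inequality applied to $\sum_i (f_i^2 - \E f_i^2)$ at scale $tn^{3/4}\E(f_1^2)$ then produces the desired $c_1 e^{-c_2 t^2}$ deviation bound throughout the sub-Gaussian regime $t \leq n^{1/4}$.

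\textbf{Display (2).} Fix $k$. Use the Gaussian conditional decomposition $\mb z_{i,k} = \mbb\alpha_k^T\mb z_{i,S} + \tau_k e_{i,k}$, with $e_{i,k}\sim\mathcal{N}(0,1)$ independent of $\mb z_{i,S}$ and $\tau_k^2 \leq 1$. Stacking over $i$ gives
\[
\mb Z_k^T\mb f = \mbb\alpha_k^T\mb Z_S^T\mb f + \tau_k\mb e_k^T\mb f.
\]
Conditional on $\mb Z_S$ (which determines $\mb f$), the second term is $\mathcal{N}(0,\tau_k^2\|\mb f\|_2^2)$, so a Gaussian tail bound together with a union bound over $k\leq p$ yields $\max_k |\mb e_k^T\mb f|/\|\mb f\|_2 \leq c_3\sqrt{\log(p)}$ with probability tending to $1$. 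The first term is a deterministic-given-$\mb Z_S$ linear functional of the degree-three Gaussian chaos $\mb Z_S^T\mb f$; hypercontractivity (Latala's inequality) combined with $\|\mbb\alpha_k\|_2 \leq 1$ bounds it at the same $\sqrt{\log(p)}$ scale. Converting from $\mb Z_k$ to $\mb X_k = \sqrt{n}\mb Z_k/\|\mb Z_k\|_2$ costs only the factor $\sqrt{n}/\|\mb Z_k\|_2 = 1 + o_P(1)$ uniformly in $k$, giving $\|\mb X^T\mb f\|_\infty/\|\mb f\|_2 = O_P(\sqrt{\log(p)})$; dividing by $\sqrt{n}$ produces a quantity of strictly smaller order than $c_1\sqrt{\log(p)}/n^{1/3}$.

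\textbf{Main obstacle.} The principal technical step is uniform-in-$k$ control of the cubic Gaussian chaos $\mbb\alpha_k^T\mb Z_S^T\mb f$; tight tail bounds for Gaussian chaoses of degree three (Latala, or an explicit decoupling argument combined with Hanson--Wright) are needed so that the $\log p$ union bound remains sharp, but the $n^{-1/3}$ slack built into the hypothesis of Theorem~\ref{thm:interactions} is far larger than the $n^{-1/2}$ rate that a direct calculation would give, so the remaining bookkeeping is routine.
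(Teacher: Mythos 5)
You are right that the two displays are typos for their complements: the paper's own proof establishes the deviation bounds, which is also how the lemma is invoked in Theorem~\ref{thm:interactions}. For \eqref{eq:bd1} your argument is essentially the paper's: it too diagonalises the quadratic form, writes $f_1 \eqdist \mb u^T \mb D \mb u$, normalises $\E(f_1^2)=1$ (legitimate because both events are scale-invariant in $\mb f$), verifies $\E\exp(|f_1^2-\E f_1^2|^{1/2}/4)\le c$ by explicit $\chi^2_1$ moment-generating-function computations and AM--GM, and then applies Lemma~\ref{lem:tail_bd} with $\alpha=1/2$ at deviation level $tn^{-1/4}$; note your ``Bernstein'' must be the sub-Weibull version, since $f_1^2$ is only $\psi_{1/2}$, not $\psi_1$. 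For \eqref{eq:lim2} you take a genuinely different route: the paper never decomposes $z_{1k}$ into its projection on $\mb z_{1,S}$ plus independent noise. Instead it treats the product $f_1 Z_{1j}$ directly as a single summand, shows its mean is zero because odd moments of centred Gaussians vanish, bounds $\E\exp(|f_1Z_{1j}|^{2/3}/4)$ by H\"older, and applies Lemma~\ref{lem:tail_bd} with $\alpha=2/3$ followed by a union bound over the $p$ columns; the $\sqrt{\log p}/n^{1/3}$ scale is then exactly what the crude $\exp(-c n^{2/3}t^2)$ tail for $\psi_{2/3}$ summands delivers, so within the paper it is not slack but the natural output of the chosen lemma. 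Your sharper claim that $\|\mb X^T\mb f\|_\infty/\|\mb f\|_2 = O_P(\sqrt{\log p})$ is plausible, since $\log p = o(n^{1/3})$ keeps the union bound in the Gaussian regime, but it needs a genuine two-regime sub-Weibull or degree-three chaos inequality, which the paper deliberately avoids; what your route buys is this potentially better rate, what the paper's buys is that a single elementary lemma covers both displays. Two repairs to your version: $\|\mbb\alpha_k\|_2 \le 1$ is not justified, because only a restricted-eigenvalue-type lower bound on $\mbb\Sigma$ over cone vectors is assumed, so $\mbb\Sigma_{S,S}$ may be ill-conditioned; the quantity your moment bounds actually need, and which is true, is $\Var(\mbb\alpha_k^T \mb z_{1,S}) = \mbb\Sigma_{k,S}\mbb\Sigma_{S,S}^{-1}\mbb\Sigma_{S,k} \le 1$. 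Also, before any tail bound you must record that the degree-three term has mean zero (again by odd Gaussian moments); given that, a sub-Weibull Bernstein inequality over the independent index $i$ is simpler than Latala's chaos bound and is entirely sufficient at the $n^{-1/3}$ rate required here.
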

\begin{proof}
Although in Theorem~\ref{thm:interactions} certain assumptions are placed on $\E(f_1^2)$, since the probabilities above do not depend on $\E(f_1^2)$ here we may assume $\E(f_1^2)=1$.
Fix $j \in \{1,\ldots,p\}$. Using the eigendecomposition $\mb P \mb D \mb P^T = \mbb\Sigma_{S,S}^{1/2}\mbb{S,S}^{1/2}$ (where $\mb D$ is diagonal and $\mb P$ is orthogonal) and writing $\eqdist$ for equality in distribution, we have
\[
(f_1, Z_{1j}) = (\mb z_{1,S}^T \mb B \mb z_{1,S},\, Z_{1j}) = (\mb z_{1,S}^T \Sigma_{S,S}^{-1/2} \mb P \mb D \mb P^T  \Sigma_{S,S}^{-1/2}\mb z_{1,S},\, Z_{1j}) \eqdist (\mb u^T \mb D \mb u, v)
\]
where $(\mb u, v)$ is multivariate Gaussian with $\mb u \sim \mathcal{N}_s(\mb 0, \mb I)$ and $\Var(v)=1$.
Let the diagonal entries of $\mb D$ be $\mbb\theta$.

First we show \eqref{eq:bd1}. Note that
\begin{align}
\E(f_1^2)=\E\Big\{\Big(\sum_j u_j^2 \theta\Big)^2\Big\} &= \sum_j \theta_j^2\E(u_j^4) + \sum_{j \neq k}\theta_j\theta_k\E(u_j^2)\E(u_k^2) \notag\\
 &= 2\|\mbb\theta\|_2^2 + \Big(\sum_j \theta_j\Big)^2 =1. \label{eq:expect}
\end{align}
Thus in particular, $\|\mbb\theta\|_\infty \leq 1/\sqrt{2}$, $\|\mbb\theta\|_2^2 \leq 1/2$ and $\big|\sum_j \theta_j\big| \leq 1$.

Now with a view to applying Lemma~\ref{lem:tail_bd} below, consider
\begin{align*}
\E \exp \Big\{\frac{1}{4}\Big| \Big(\sum_j u_j^2 \theta_j\Big)^2\Big|^{1/2}\Big\} &\leq  \E\exp\Big( \sum_j u_j^2 \theta_j/4\Big) +\E\exp\Big( -\sum_j u_j^2 \theta_j/4\Big)\notag\\
&= \prod_j (1-\theta_j/2)^{-1/2} + \prod_j (1+\theta_j/2)^{-1/2}
\end{align*}
using the fact that $u_j^2 \sim \chi^2_1$.
Note that
\[
(1-\theta_j/2)^{-1} = 1 + \frac{\theta_j}{2} + \frac{\theta_j^2}{4(1-\theta_j/2)} \leq 1 + \frac{\theta_j}{2} + \theta_j^2.
\]
Thus, by the AM--GM inequality we have 
\begin{align*}
\prod_j (1-\theta_j/2)^{-1} &\leq \prod_j (1 + \theta_j/2 + \theta_j^2) \\
&\leq \bigg( \frac{1}{s} \sum_{j} (1 + \theta_j/2 + \theta_j^2) \bigg)^s \\
&\leq (1+1/s)^s< e,
\end{align*}
using \eqref{eq:expect}. Similarly, $\prod_j (1+\theta_j/2)^{-1} < e$. Putting things together, we see that
\begin{align}
\E \exp(|f_1^2 - \E(f_1^2)|^{1/2}/4) \leq e^{1/4}\E \exp(|f_1|/4) <e^{1/4} \sqrt{2e}. \label{eq:f_1_bd}
\end{align}
Lemma~\ref{lem:tail_bd} then immediately gives \eqref{eq:bd1}.

To show \eqref{eq:lim2}, we first obtain a tail bound for $|\mb f^T \mb Z_j|/n$. Observe that if $(w_1, w_2, w_3)$ is multivariate Gaussian with zero-mean, then since $(w_1, w_2, w_3) \eqdist -(w_1, w_2, w_3)$, we have 
\[
\E(w_1w_2w_3)=\E\{(-w_1)(-w_2)(-w_3)\}=-\E(w_1w_2w_3)=0.
\]
Then, by H\"older's inequality
\begin{align*}
\E\exp(|f_1 Z_{1j}|^{2/3}/4) &= \E\exp\Big(\Big |\sum_j v u_j^2\theta_j \Big|^{2/3}\big/ 4\Big) \\
&\leq \sum_{r=0}^\infty \E\bigg(\Big|\sum_j u_j^2\theta_j/4\Big|^{2r/3}|v/2|^{2r/3} \bigg)\Big/r! \\
&\leq \sum_{r=0}^\infty [\E\{(v^2/4)^{r}\}]^{1/3} \bigg\{\E\Big(\Big|\sum_j u_j^2\theta_j/4\Big|^{r}\Big) \bigg\}^{2/3}\Big/ r! \\
&\leq \sum_{r=0}^\infty \E\Big(\Big|\sum_j u_j^2\theta_j/4\Big|^{r}\Big) \Big/ r! + \sum_{r=0}^\infty \E\{(v^2/4)^{r}\} /r! \\
&= \E\exp(|f_1|/4) + \E(v^2/4) \leq \sqrt{2e} + \sqrt{2},
\end{align*}
using \eqref{eq:f_1_bd} in the final line. Note $\|\mb f\|_2/\sqrt{n} \inprob 1$ by \eqref{eq:bd1}, and $\|\mb Z_j\|/\sqrt{n} \inprob 1$ by Lemma~\ref{lem:chisq}. Thus by Lemma~\ref{lem:tail_bd}, $\pr(|\mb f^T \mb X_j|/(\sqrt{n}\|\mb f\|_2) \geq t) \leq c_1\exp(-c_2n^{2/3}t^2)$ for $t \in [0, 1]$ and some constants $c_1, c_2>0$. Thus for $c_3$ with $c_2c_3^2-1 >0$,
\begin{align*}
\pr(\|\mb Z^T \mb f\|_\infty/(\sqrt{n}\|\mb f\|_2) \geq c_3\sqrt{\log(p)}/n^{1/3}) \leq c_1 p\exp(-c_2c_3^2\log(p))=c_1 p^{-(c_2c_3^2-1)} \to 0
\end{align*}
as $p \to \infty$ and $\log(p)/n^{2/3} \to 0$.
\end{proof}

\begin{lem}[Lemma B.4 of \citet{Hao2014}] \label{lem:tail_bd}
Let $W_1, \ldots, W_n$ be independent random variables with zero mean such that $\E(\exp(c_1|W_i|^\alpha)) \leq c_2$ for constants $c_1, c_2>0$ and $\alpha \in (0,1]$. Then there exist constants $c_3,c_4 > 0$ such that for $t \in [0,1]$,
\[
\pr\bigg( \frac{1}{n}\bigg| \sum_{i=1}^n W_i\bigg| \geq t \bigg) \leq c_3 \exp(-c_4 n^{\alpha}t^2).
\]
\end{lem}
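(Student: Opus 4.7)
The plan is to reduce the problem to the standard two-term sub-Weibull Bernstein inequality and then simplify it for $t \in [0,1]$. First I would apply Markov's inequality to the Orlicz hypothesis to obtain $\pr(|W_i| > s) \leq c_2 \exp(-c_1 s^\alpha)$ for all $s \geq 0$, and then integrate $p s^{p-1}$ against this tail (changing variable $v = c_1 s^\alpha$ and using $\Gamma(p/\alpha) \lesssim (p/\alpha)^{p/\alpha}$) to obtain the moment bound $\E|W_i|^p \leq C^p p^{p/\alpha}$ for some constant $C>0$ depending only on $c_1, c_2, \alpha$ and all $p \geq 1$. In particular $\sigma^2 := \max_i \E W_i^2$ is bounded above by a constant.

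Next, I would invoke a Rosenthal-type inequality for sums of independent centered random variables,
\[
\E\Big|\sum_{i=1}^n W_i\Big|^p \leq K^p\bigl\{(pn\sigma^2)^{p/2} + n \max_i \E|W_i|^p\bigr\}, \qquad p \geq 2,
\]
for a universal constant $K$. Substituting the moment bound from the previous step and applying Markov's inequality to $|\sum_i W_i|^p$ gives
\[
\pr\Big(\Big|\sum_i W_i\Big| \geq nt\Big) \leq \Big(\frac{K_1 p}{nt^2}\Big)^{p/2} + n\Big(\frac{K_2 p^{1/\alpha}}{nt}\Big)^p.
\]
Optimising $p$ in each term separately --- choosing $p \propto nt^2$ in the first and $p \propto (nt)^\alpha$ in the second, in each case with the constant small enough that the base is bounded away from $1$ --- yields the familiar two-term sub-Weibull tail bound
\[
\pr\Big(\frac{1}{n}\Big|\sum_i W_i\Big| \geq t\Big) \leq A_1 \exp(-a_1 nt^2) + A_2\, n \exp(-a_2 (nt)^\alpha).
\]

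Finally I would use the restriction $t \in [0,1]$ to collapse this into a single exponential of the stated form. Since $n \geq n^\alpha$ we have $nt^2 \geq n^\alpha t^2$, and since $\alpha \leq 1 \leq 2$ together with $t \leq 1$ give $(nt)^\alpha = n^\alpha t^\alpha \geq n^\alpha t^2$; thus both exponents above dominate $n^\alpha t^2$. The spurious prefactor $n$ multiplying the second term is absorbed into the exponent by slightly shrinking $c_4$ in the regime $n^\alpha t^2 \gtrsim \log n$, while for smaller $t$ the target bound $c_3\exp(-c_4 n^\alpha t^2)$ exceeds $1$ once $c_3$ is chosen large enough, so the trivial bound $\pr(\cdot)\leq 1$ covers the remaining case. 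The main obstacle is obtaining a version of Rosenthal's inequality with constants growing like $K^p$ rather than $p^p$, and executing the $p$-optimisation with uniform control of all constants; these are standard but delicate, which is presumably why the authors cite this lemma rather than re-derive it.
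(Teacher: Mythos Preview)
The paper does not prove this lemma; it is simply quoted from \citet{Hao2014} and used as a black box. So there is no ``paper's proof'' to compare against, and the relevant question is whether your sketch stands on its own.

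Your overall strategy --- Orlicz tail $\Rightarrow$ moment growth $\E|W_i|^p \leq C^p p^{p/\alpha}$, Rosenthal, Markov, optimise $p$ term-by-term to obtain the two-regime bound, then collapse for $t\in[0,1]$ --- is standard and sound. The one place that does not work as written is the final absorption of the prefactor $n$. You split on whether $n^\alpha t^2 \gtrsim \log n$, and in the complementary regime $n^\alpha t^2 \lesssim \log n$ you assert that a fixed $c_3$ makes $c_3\exp(-c_4 n^\alpha t^2)\geq 1$. But $n^\alpha t^2$ can be as large as a constant times $\log n$ there, so $c_3\exp(-c_4 n^\alpha t^2)$ is of order $n^{-c}$ and no finite $c_3$ suffices.

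The fix is to split on $(nt)^\alpha$ versus $\log n$ rather than on $n^\alpha t^2$. If $(nt)^\alpha \geq (2/a_2)\log n$ then $n\exp(-a_2(nt)^\alpha) \leq \exp(-(a_2/2)(nt)^\alpha) \leq \exp(-(a_2/2)n^\alpha t^2)$ and you are done. If instead $(nt)^\alpha < (2/a_2)\log n$ then $t < \{(2/a_2)\log n\}^{1/\alpha}/n$, whence
\[
n^\alpha t^2 < \frac{\{(2/a_2)\log n\}^{2/\alpha}}{n^{2-\alpha}},
\]
which is bounded uniformly in $n$ (since $2-\alpha\geq 1$); now a fixed $c_3$ does make the target bound exceed $1$ and the trivial estimate $\pr(\cdot)\leq 1$ finishes the argument. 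With this corrected dichotomy your proof goes through.
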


\bibliographystyle{abbrvnat}
{

}
\end{cbunit}

\end{document}